\definecolor{darkgreen}{RGB}{0, 128, 0}
\tiny\color{gray}, %
\newcommand{\hurl}[1]{{\scriptsize\UrlFont\href{http://#1}{\path{#1}}}}
\newcommand{\hsurl}[1]{{\scriptsize\UrlFont\href{https://#1}{\path{#1}}}}
\newcommand\N{\mathbb{N}}
\newcommand{\false}{\bot}
\newcommand\init{\mathrm{init}}
\renewcommand\P{\mathcal{P}}
\newcommand\R{\mathcal{R}}
\newcommand\Stmt{\mathbf{Stmt}}
\newcommand\st{\mathit{s\!t}}
\newcommand\keyword[1]{\texttt{\color{blue}\bfseries #1}}
\newcommand\assume[1]{\keyword{assume}\,\texttt{#1}}
\newcommand\assert[1]{\keyword{assert}\,\texttt{#1}}
\newcommand\havoc[1]{\keyword{havoc}\,\texttt{#1}}
\newcommand\syncStmt[2]{\texttt{\keyword{for}\,$j\neq i$\,:\,#1\,:=\,#2}}
\newcommand\Loc{\mathbf{Loc}}
\newcommand\pc[1]{\mathit{pc_{#1}}}
\newcommand\locals{\mathbf{Var}_\mathsf{local}}
\newcommand\globals{\mathbf{Var}_\mathsf{global}}
\newcommand\inst[2]{{#1\!\!:\!#2}}
\newcommand\enabled[1]{\mathit{enabled}({#1})}
\newcommand\sem[1]{\llbracket #1 \rrbracket}
\colorlet{th1}{yellow}
\colorlet{th2}{green!80!blue}
\colorlet{th3}{blue!30!white}
\tikzset{stmt/.style={font=\ttfamily\scriptsize,shape=rectangle,rounded corners=.3em,fill=gray!40,inner xsep=.3em,inner ysep=0em}}
\newcommand{\stsmcol}[2]{\raisebox{1mm}{\tikz[baseline]{\node[stmt,fill=#2] at (0,0){\raisebox{-1mm}[2mm][2mm]{\hspace{.0em}\texttt{\strut\small #1}\hspace{.0em}\strut}};}}}
\newcommand\comm[1][]{\mathrel{\substack{\rotatebox{0}{$\curvearrowright$}\\[-.5em]\rotatebox{180}{\hspace{0.3em}$\curvearrowright$}}_{#1}}}
\newcommand\notcomm[1][]{\mathrel{\comm[#1]\hspace{-1.3em}{\raisebox{-.3em}{\scalebox{1.75}{/}}}\hspace{.3em}}}
\newcommand\semicomm[1][]{\mathrel{\curvearrowright_{#1}}}
\newcommand\id[1]{{\mathit{id}_{#1}}}
\newcommand\sleep[1]{{\mathit{sleep}_{#1}}}
\newcommand\instrument[1]{\iota(#1)}
\newcommand\sleepInstr[1]{#1_\sleep{}}
\newcommand\prefTest[2]{\mathrm{pref}(#1, #2)}
\newcommand\chcTm[2]{\mathbb{TM}(#1,#2)}
\newcommand\chcSymb[2]{\chcTm{\sleepInstr{#1}}{#2}}
\newcommand\chcExpl[2]{\mathbb{TM}_{\sleep{}}(#1,#2)}
\newcommand\search[1]{\mathbf{Search}(#1)}
\newcommand\sol[1]{\mathbf{Sol}(#1)}
\newcommand\Inv{\mathit{Inv}}
\newcommand\prefCond[3]{\widetilde{\mathsf{pref}}(#1, #2 / #3)}
\newcommand\commCond[3]{\mathsf{comm}({#1},\inst{#3}{#2})}
\newcommand\commCondSt[3]{\mathsf{comm}({#1},\stsmcol{\inst{#3\,}{$#2$}}{th1})}
\crefname{observation}{Observation}{Observation}
\begin{document}

\newtheorem{observation}[theorem]{Observation}

\title{Commutativity Simplifies Proofs of Parameterized Programs}

\author{Azadeh Farzan}
\orcid{0000-0001-9005-2653}             %
\affiliation{
  \institution{University of Toronto}            %
  \city{Toronto}
  \country{Canada}                    %
}
\email{azadeh@cs.toronto.edu}

\author{Dominik Klumpp}
\orcid{0000-0003-4885-0728}             %
\affiliation{
  \institution{University of Freiburg}            %
  \city{Freiburg im Breisgau}
  \country{Germany}                    %
}
\email{klumpp@informatik.uni-freiburg.de}          %

\author{Andreas Podelski}
\orcid{0000-0003-2540-9489}             %
\affiliation{
  \institution{University of Freiburg}           %
  \city{Freiburg im Breisgau}
  \country{Germany}                   %
}
\email{podelski@informatik.uni-freiburg.de}

\begin{abstract}
  \emph{Commutativity} has proven to be a powerful tool in reasoning about concurrent programs.
  Recent work has shown that a commutativity-based \emph{reduction} of a program may admit simpler proofs than the program itself.
  The framework of lexicographical program reductions was introduced to formalize a broad class of reductions which accommodate sequential (thread-local) reasoning as well as synchronous programs.
  Approaches based on this framework, however, were fundamentally limited to program models with a {\em fixed/bounded} number of threads.
  In this paper, we show that it is possible to define an effective parametric family of program reductions that can be used to find simple proofs for {\em parameterized programs}, i.e., for programs with an unbounded number of threads.
  We show that reductions are indeed useful for the simplification of proofs of parameterized programs, in a sense that can be made precise:
  A  reduction of a parameterized program may admit a proof which uses {\em fewer} or {\em less sophisticated ghost variables}.
  The reduction may therefore be within reach of an automated verification technique,
  even when the original parameterized program is not.
  As our first technical contribution, we introduce a notion of reductions for parameterized programs such that
  the reduction $\R$ of a parameterized program $\P$ is again a parameterized program (the thread template of $\R$ is obtained by source-to-source transformation of the  thread template of $\P$).
  Consequently, existing techniques for the  verification of parameterized programs can be directly applied to $\R$ instead of $\P$.
  Our second technical contribution is that we define an appropriate family of {\em pairwise preference orders} which can be effectively used as a parameter to produce different lexicographical reductions.
  To determine whether this theoretical foundation amounts to a usable solution in practice, we have implemented the approach, based on a recently proposed framework for parameterized program verification.
  The results of our preliminary experiments on a representative set of examples are encouraging.
\end{abstract}

\begin{CCSXML}
<ccs2012>
   <concept>
       <concept_id>10003752.10010124.10010138.10010142</concept_id>
       <concept_desc>Theory of computation~Program verification</concept_desc>
       <concept_significance>500</concept_significance>
       </concept>
   <concept>
       <concept_id>10003752.10003753.10003761</concept_id>
       <concept_desc>Theory of computation~Concurrency</concept_desc>
       <concept_significance>500</concept_significance>
       </concept>
   <concept>
       <concept_id>10003752.10003790.10002990</concept_id>
       <concept_desc>Theory of computation~Logic and verification</concept_desc>
       <concept_significance>300</concept_significance>
       </concept>
 </ccs2012>
\end{CCSXML}

\ccsdesc[500]{Theory of computation~Program verification}
\ccsdesc[500]{Theory of computation~Concurrency}
\ccsdesc[300]{Theory of computation~Logic and verification}

\keywords{commutativity, parameterized programs, constraint Horn clauses}

\maketitle

\section{Introduction}
\label{sec:intro}

The framework of {\em trace theory} (formulated by Mazurkiewicz in 1987) formalizes equivalence relations for concurrent program runs based on a commutativity relation over the set of atomic steps taken by individual program threads. Two program statements of different threads \emph{commute} if the order in which we execute them is irrelevant to the outcome of the execution. Two program runs are {\em equivalent} up to commutativity if one can be acquired from another through successive swaps of adjacent commutative program steps.
For any program $A$, we call a program $B$ a {\em reduction} of $A$ if and only if $B$ includes at least one representative from each (commutativity) equivalence class of behaviours in $A$.
Recent work \cite{cav19:hypersafety,popl20:red-safety,pldi22:sound-seq,lics2023} has shown that some reductions of a program admit {\em simpler} proofs than the program itself.
More specific versions of this observation had already been made in the literature of concurrent and distributed program verification.
In particular, it is exploited in the context of verification of distributed programs by favouring the verification of synchronous (or almost synchronous) programs in place of asynchronous programs with the rationale that the synchronous program admits a simpler proof \cite{kragl:layered,gleissenthal:pretend-synchrony,Genest07}.

The common thread in all these contexts is that there is often a lot of redundancy in the set of behaviours of a concurrent program,
and removing redundant behaviours with complicated proofs in favour of those with simpler proofs simplifies the entire reasoning task. The choice of a program {\em reduction}, then, is a choice of which {\em representatives} from equivalence classes of program behaviours stay and which ones go.
Traditionally, people have opted for canonical choices: those that maximize sequential (local) reasoning in the case of concurrent programs \cite{elmas:calculus-atomic,kragl:layered}, or those that get as close as possible to a {\em synchronous} program \cite{gleissenthal:pretend-synchrony,Genest07} for distributed protocols.
As such, each such framework makes an a priori  assumption about a particular type of reduction.
In recent work, however, a family of {\em parametric lexicographical program reductions} \cite{cav19:hypersafety,popl20:red-safety,pldi22:sound-seq} were introduced that formalized a broad (infinite) class of reductions that would include both canonical choices.
The idea is that different program verification tasks may respond best to different strategies for picking representatives.
By taking a {\em lexicographic order} as a parameter to a reduction that chooses the (lexicographically) {\em least} representative of each equivalence class, one controls the composition of the reduction.

These frameworks, however, were fundamentally built based on an assumption that the alphabet of program actions is {\em finite}, and therefore, they can only be applied to program models with a {\em fixed/bounded} number of threads.
This brings us to the central research question in this paper:
``For programs with unboundedly many threads, is it possible to define an effective parametric family of program reductions that can be exploited for finding simple proofs?'' \
This paper presents an affirmative answer to this question for {\em parameterized concurrent programs}.
A \emph{parameterized program} $\P$ stands for an infinite family of programs $\P(n)_{n\in\N}$.
Each program $\P(n)$ arises from taking a number $n$ of threads, where $n$ is not bounded.
Each thread runs an instance of the same given thread template. This is without the loss of generality, since  well-known encoding tricks~\cite{popl17:thread-modular} accommodate the use of multiple thread templates.

It is well-understood, even outside the realm of algorithmic verification,  that modular reasoning techniques for parameterized programs (e.g. Owicki-Gries for parameterized programs ~\cite{prensa-nieto:owicki-complete})
are only {\em complete} in the presence of the full power of {\em history} variables.
Therefore, program proofs may require highly nontrivial ghost variables, which are notoriously hard to compute and reason about automatically. In contrast, in the fixed thread case, the canonical choice of {\em program counters} is always available and mainly becomes a time/complexity issue for verification algorithms.
This paper argues that reductions can help simplify proofs of parameterized concurrent programs, in a sense that can be made precise based on the ghost variables required for the proof. We make
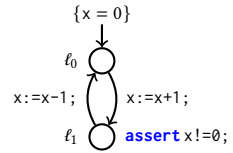
\begin{wrapfigure}[7]{r}{0.25\textwidth}
 \vspace{-3.5mm}
  \begin{tikzpicture}[thick,font=\scriptsize]
    \node[draw,circle,label={[]left:$\ell_0$}] (0) {};
    \node[draw,circle,label={[]left:$\ell_1$},label={right:\assert{x!=0;}},below of=0] (1) {};
    \node[above=3mm of 0,inner sep=0] (pre) {$\{\texttt{x}=0\}$};
    \draw[<-] (0) -- (pre);
    \draw[->] (0) edge[bend left] node[auto]{\texttt{x:=x+1;}} (1);
    \draw[->] (1) edge[bend left] node[auto]{\texttt{x:=x-1;}} (0);
  \end{tikzpicture}
  \caption{Template for $\P^{\pm}$}
  \label{fig:inc-dec-template}
\end{wrapfigure}
the observation that a reduction of a parameterized program may admit a proof which uses {\em fewer or}
{\em less sophisticated ghost variables}
and may therefore have a higher chance of being within the reach of an automated verification technique.

As a simple example to make this observation concrete,
consider the the parameterized program
$\P^{\pm}$, given by the thread template in \cref{fig:inc-dec-template}.
The goal is to prove the property that whenever a thread is in location~$\ell_1$, the global variable \texttt{x} is non-zero,
assuming \texttt{x} is initially $0$.
It can be shown
there does not exist a proof (formally, a proof in the form of an \emph{Ashcroft invariant})
if one does not introduce a ghost variable~\cite{popl17:thread-modular}.
Intuitively, the proof needs to keep track of the number of threads that have  already executed their increment but not yet the matching decrement.
Now consider the reduction where the threads are executed sequentially one after the other (sequential composition).
The reduction is sound because all statements of two different threads commute
(since we do not model the specification \assert{x!=0} as a statement, we are not concerned with its commutativity).
The proof for the reduction does not need any ghost variables.
We will use the example later as a running example (see \cref{sec:param-red}).

Our first technical contribution is a notion of a reduction for a parameterized program.
The reduction $\R$ of the parameterized program $\P$ can be viewed as {\em a family of lexicographical reductions}.
This means that $\R$ stands for an infinite family of programs $\R(n)_{n\in\N}$ where for each $n$, $\R(n)$ is a (lexicographical) reduction of $\P(n)$.
Crucially, the infinite family can be {\em finitely represented}.
In fact, the reduction $\R$ is again a parameterized program,
and the thread template of $\R$ is obtained by source-to-source transformation of the  thread template of $\P$.
The key benefit of this observation is that existing techniques for verification of parameterized programs can now be directly applied to~$\R$ instead of $\P$.

Reductions that favour program behaviours with long sequential blocks, like the sequential composition for the example in \cref{fig:inc-dec-template},
can be generated using lexicographical reductions based on thread orders; i.e. when statements of each thread are grouped together and ordered wrt.\ statements of other threads according to their thread identifiers.
In \cref{sec:motivating-example}, we present an example that demonstrates why, in the context of parameterized program verification, other reductions like {\em lockstep} reductions, may be essential if proof simplification is the desired outcome.

Our second technical contribution is that we define an appropriate family of orders, called {\em pairwise preference orders}, that can be effectively used as a parameter to produce many different lexicographical reductions of the same program given the same commutativity relation (including the above-mentioned lockstep reduction). This generalizes similar results from the literature on how reductions for a fixed number threads are generated parametric on order relations \cite{cav19:hypersafety,popl20:red-safety,pldi22:sound-seq}.
We show that, as in the case of thread orders, reductions of a parameterized program $\P$ parametric on pairwise preference orders can also be finitely represented as parameterized program $\R$, with the same correspondence between $\P(n)$ and $\R(n)$ for all $n$.

The two technical contributions outlined so far put forward an algorithmic path for verifying parameterized concurrent programs using a broad family of reductions.  To determine whether this amounts to a usable solution in practice, we selected the proof method based on \emph{thread-modular proofs at many levels}~\cite{popl17:thread-modular}
to instantiate and evaluate this solution.
The proof method encodes the existence of a proof of a specific form (an Ashcroft invariant with a number $k$ of universal quantifiers over thread IDs) for an input parameterized program $\P$ as a satisfiability problem of a set of constraints in a specific form (CHC, for Constrained Horn Clauses).
To use the proof method for verifying a reduction of the input parameterized program, we apply the proof method to our proposed parameterized reduction, i.e., to the parameterized program $\R$.

We implemented the construction of the parameterized program $\R$
and the constraint generation according to~\citet{popl17:thread-modular}.
We evaluated the approach on a set of 19 parameterized programs taken from the literature,
by discharging the generated constraints with several off-the-shelf CHC solvers.
The results are very encouraging: The implementation succeeded in verifying the reductions of 14 programs, only 4 of which can be verified without the use of reductions.

It is noteworthy that our proposal for parameterized reductions (and therefore, the corresponding set of CHC constraints) have the desired property that any Ashcroft invariant of the original program is also a valid invariant for the reduced program.
The converse does not hold; i.e., the reduction~$\R$ may admit an Ashcroft invariant that is not a valid invariant of the original program $\P$, and a proof in the form of an Ashcroft invariant may not exist for $\P$ even though it does for $\R$.

The property of the \emph{conservative extension} of the validity of an Ashcroft invariant from $\R$ to $\P$ does not, however, mean that we are (in practice) able to compute a proof in the form of an Ashcroft invariant for $\R$ whenever we are able to compute one for $\P$.
In fact, the parameterized program uses a set of additional variables as the means of encoding the reduction.
It is thus natural to wonder whether the task of the CHC solver could somehow become harder because it has to deal with constraints over a larger set of variables, and, if so, whether anything can be done to alleviate this issue. We investigate this question systematically in \cref{sec:red-proofs} and propose an alternative encoding with fewer variables. This new encoding is an orthogonal contribution of this paper.  It is inspired by  the idea of {\em symmetry reduction}~\cite{clarke1998symmetry}.
Intuitively, in the encoding based on \citet{popl17:thread-modular}, the solver is forced to prove the correctness of symmetry-equivalent classes of reductions. In \cref{sec:break-symm} we demonstrate how the CHC encoding can be modified so that this redundancy is eliminated.

To conclude, this paper proposes a way of incorporating commutativity-based reductions into, in principle, any existing parameterized verification methodology.  In particular, it makes the following contributions:
\begin{itemize}
\item We observe that reductions simplify proofs of parameterized programs in a precise sense:
Proofs of reductions require less complex ghost state than the proofs of original programs; this can manifest as the need for less complicated information to be recorded in ghost variables, or that simply fewer ghost variables are needed overall (\cref{sec:motivating-example}).
\item The theoretical formulation of a parameterized reduction in two parts:
\begin{enumerate}
\item  We formulate a lexicographical reduction of a parameterized program and show that it can be finitely represented, namely again as a parameterized program  (\cref{sec:param-red}).
\item  We propose an appropriate notion of preference orders for the parameterized context and show that the construction of a lexicographical reduction from a parameterized program can be made parametric on the preference order  (\cref{sec:pref-orders}).
\end{enumerate}
\item We give an improved formulation of the search problem for an Ashcroft invariant, by breaking some inherent but redundant symmetries in the search space and the corresponding solution space without affecting soundness or completeness of the methodology (\cref{sec:break-symm}).
\end{itemize}

\section{Motivating Example}
\label{sec:motivating-example}
We demonstrate the benefits of commutativity for proof simplification
using the parameterized program $\P^\mathrm{notify}$ shown in \cref{fig:notify}.
This program models a distributed system, in which one thread (called \texttt{notifier})
generates data through some computation (line 6-9),
and broadcasts it to an unbounded number of \texttt{listener} threads (line 11-13).
The threads communicate via a message queue,
which is here modeled via an infinite \texttt{queue} array
along with an integer \texttt{current} pointing to the head of the queue
(specifically, to the first invalid entry).

\begin{figure}
\begin{minipage}[t]{0.3\textwidth}
\begin{lstlisting}[numbers=left]
notifier() {

  last := 0;

  while (true) {
    // generate data
    havoc data;
    assume data > last;
    last := data;

    // send new data
    queue[current] := data;
    current := current + 1;
  }
}
\end{lstlisting}
\end{minipage}
\qquad
\begin{minipage}[t]{0.3\textwidth}
\begin{lstlisting}[numbers=left,firstnumber=16]
listener() {
  idx := current;
  prev := 0;

  while (true) {
    // receive data
    assume idx < current;
    msg := queue[idx];
    idx := idx + 1;

    // check data
    assert prev < msg;
    prev := msg;
  }
}
\end{lstlisting}
\end{minipage}%
\caption{
  The program $\P^\mathrm{notify}$.
  The variables \texttt{current} (an integer) and \texttt{queue} (an integer array) are global, all other variables are local.
  An instance $\P^\mathrm{notify}(n)$ consists of a single \texttt{notifier} thread and $n$ \texttt{listener} threads.
}%
\label{fig:notify}%
\end{figure}%

Each \texttt{listener} thread joins the conversation by setting its thread-local \texttt{idx} variable to the value of \texttt{current}.
The \texttt{listener} then continuously waits for new data to appear in the queue (line~22).
When data has arrived, it reads the message from the queue (line~23-24).
In the next step, the \texttt{listener} checks the integrity of the received message.
In particular, it checks that the received value is greater than the previous message (line~26-28).

Showing correctness of this program is non-trivial;
even with ghost variables, a proof is challenging.
An unbounded amount of time may pass between the moment when a message is sent by the \texttt{notifier} thread,
and when the last \texttt{listener} receives it.
Thus, for certain traces, one must keep track of the \texttt{idx} variables of unboundedly many listener threads,
not just a finite subset of them.

There exists however a subset of traces,
for which the correctness argument is much simpler.
Namely, consider those traces where every message sent by the \texttt{notifier}
is immediately received and checked by all \texttt{listeners} that have already joined the conversation
(i.e., all \texttt{listeners} that will \emph{ever} receive the message).
Let us call these traces \emph{synchronous}.
In synchronous traces, the difficulty of reasoning about an unbounded number of messages already sent
but not yet received by some \texttt{listener}
completely disappears.
At any point, there is at most one such message,
and consequently, the proof has to reason only about one message.

Of course, synchronous traces make up only a small fragment of the many interleavings of the program.
To show correctness of the program, we must establish that every trace is correct.
Here, commutativity comes to the rescue:
We observe that for many statements of the program,
the order in which they are executed does not affect the outcome.
We say that such statements \emph{commute} with each other.
We exploit this observation by repeatedly swapping commuting statements,
and thereby reorder any arbitrary trace of the program to an \emph{equivalent} synchronous trace.
Through a meta-argument (i.e., the soundness theorem of our approach),
we establish that any trace that is equivalent to a correct synchronous trace must itself be correct.
Thus, it suffices for a proof to show correctness of synchronous traces,
in order to conclude that the program is correct.

Consider for instance the statements \stsmcol{\texttt{last:=data}}{th1} (line~9) and \stsmcol{\texttt{prev:=msg}}{th2} (line~28).
Executing these statements in either order yields the same result,
i.e., the statements commute with each other.
Similarly, we can argue that all statements of the \texttt{notifier} thread commute with the statement \stsmcol{\texttt{prev:=msg}}{th2}.
Therefore, we consider for instance the following traces to be equivalent:
\begin{align*}
  \stsmcol{\havoc{data}}{th1}\, \stsmcol{\texttt{data\,>\,last}}{th1}\, \stsmcol{\texttt{last:=data}}{th1}\, \stsmcol{\texttt{queue[current]:=data}}{th1}\, \stsmcol{\texttt{current:=current+1}}{th1}\, \stsmcol{\texttt{prev:=msg}}{th2}\\
  \sim \stsmcol{\havoc{data}}{th1}\, \stsmcol{\texttt{data\,>\,last}}{th1}\, \stsmcol{\texttt{last:=data}}{th1}\, \stsmcol{\texttt{queue[current]:=data}}{th1}\, \stsmcol{\texttt{prev:=msg}}{th2}\, \stsmcol{\texttt{current:=current+1}}{th1}\\
  \sim \stsmcol{\havoc{data}}{th1}\, \stsmcol{\texttt{data\,>\,last}}{th1}\, \stsmcol{\texttt{last:=data}}{th1}\, \stsmcol{\texttt{prev:=msg}}{th2}\, \stsmcol{\texttt{queue[current]:=data}}{th1}\, \stsmcol{\texttt{current:=current+1}}{th1}\\
  \sim \stsmcol{\havoc{data}}{th1}\, \stsmcol{\texttt{data\,>\,last}}{th1}\, \stsmcol{\texttt{prev:=msg}}{th2}\, \stsmcol{\texttt{last:=data}}{th1}\, \stsmcol{\texttt{queue[current]:=data}}{th1}\, \stsmcol{\texttt{current:=current+1}}{th1}\\
  \sim \stsmcol{\havoc{data}}{th1}\, \stsmcol{\texttt{prev:=msg}}{th2}\, \stsmcol{\texttt{data\,>\,last}}{th1}\, \stsmcol{\texttt{last:=data}}{th1}\, \stsmcol{\texttt{queue[current]:=data}}{th1}\, \stsmcol{\texttt{current:=current+1}}{th1}\\
  \sim \stsmcol{\texttt{prev:=msg}}{th2}\, \stsmcol{\havoc{data}}{th1}\, \stsmcol{\texttt{data\,>\,last}}{th1}\, \stsmcol{\texttt{last:=data}}{th1}\, \stsmcol{\texttt{queue[current]:=data}}{th1}\, \stsmcol{\texttt{current:=current+1}}{th1}
\end{align*}
These equivalences allows us reorder entire iterations of the \texttt{notifier} thread, i.e., the computation and broadcast of new data,
wrt.\ the statement \stsmcol{\texttt{prev:=msg}}{th2}.
We proceed similarly with respect to the other statements of the \texttt{listener} thread,
as well as for the statements of two different \texttt{listener} threads.

For some of these other statements, we must consider broader notions of commutativity.
As an example,
we cannot generally claim that the order in which the statements \stsmcol{\texttt{queue[current]:=data}}{th1} and \stsmcol{\texttt{msg:=queue[idx]}}{th2} are executed
does not affect the outcome.
Specifically, if we have $\texttt{current} = \texttt{idx}$, the order is in fact crucial.
However, observe that the program ensures that,
whenever the statement \stsmcol{\texttt{msg:=queue[idx]}}{th2} is executed,
it actually holds that $\texttt{idx} < \texttt{current}$.
In such contexts, the order in which the statements are executed is indeed irrelevant.
Hence we can say that the statements commute \emph{within this particular program}.

The essential insight of commutativity reasoning is this:
It suffices for a proof to cover a so-called \emph{reduction} of a program,
i.e., a subset of traces such that each program trace is equivalent to a trace in the reduction.
In our example, the reduction is formed by the set of synchronous traces.
By soundness of commutativity,
we can conclude that,
if the reduction is proven correct, the entire program must be correct.
In this manner, our approach can verify the program $\P^\mathrm{notify}$
by giving a proof for synchronous traces.
As discussed, a proof for the set of synchronous traces is much simpler than a proof for all traces,
as it does not require complex ghost state or quantified invariants.

As another example where commutativity simplifies the proof,
let us consider the program~$\P^{K\pm}$, with the thread template shown in \cref{fig:hierarchy-collapse}.
The program has a global variable \texttt{x}, which is initially $0$.
The program uses a constant $K$ for which we assume a fixed value.
Each thread repeatedly checks if the
\begin{wrapfigure}[8]{r}{0.25\textwidth}
  \vspace{-4mm}
  \begin{tikzpicture}[thick,font=\scriptsize]
    \node[draw,circle,label={[]left:$\ell_0$}] (0) {};
    \node[draw,circle,label={[]left:$\ell_1$},label={right:\assert{x!=0;}},below=1cm of 0] (1) {};
    \node[above=3mm of 0,inner sep=0] (pre) {$\{\texttt{x}=0\}$};
    \draw[<-] (0) -- (pre);
    \draw[->] (0) edge[bend left] node[auto,align=left]{\assume{x < K}\\\texttt{x:=x+1}} (1);
    \draw[->] (1) edge[bend left] node[auto]{\texttt{x:=x-1}} (0);
  \end{tikzpicture}
  \caption{Template for $\P^{K\pm}$}
  \label{fig:hierarchy-collapse}
\end{wrapfigure}
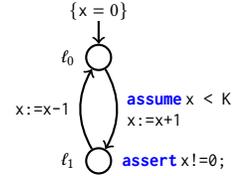
current value of \texttt{x} is less than $K$,
and if so, increments \texttt{x}.
It asserts that \texttt{x} is non-zero,
eventually decrements \texttt{x} again,
and begins the loop anew.

This program is similar to the example discussed in the introduction,
yet due to the guard using the constant $K$, the proof is in some sense simpler:
The value of a ghost variable counting the number of threads in location $\ell_1$
can never exceed $K$.
Thus, we can alternatively consider the local state of $K$ other threads as ghost state.
Specifically, if a thread is in location $\ell_1$, and some number $m$ (with $0\leq m \leq K$) of the $K$ other threads are also in location $\ell_1$,
we know that $\texttt{x} \geq m+1$,
and therefore, decrementing \texttt{x} does not violate the assert statement in any thread:
Either we have $m > 0$, in which case \texttt{x} is still positive after the decrement,
or $m = 0$, in which case none of the threads is in location $\ell_1$.

It has been shown that for any value of $K$,
a proof does indeed need to consider at least $K$ additional threads as ghost state (and thus overall consider $K+1$ threads at a time)
in order to show correctness of this program~\cite{popl17:thread-modular}.
However, commutativity simplifies the required ghost state.

Let us investigate the commutativity in $\P^{K\pm}$.
Two statements \stsmcol{x:=x-1}{th1} and \stsmcol{x:=x-1}{th2} of different threads commute,
as do two statements \stsmcol{\assume{x\,<\,K};\,x:=x+1}{th1} and \stsmcol{\assume{x\,<\,K};\,x:=x+1}{th2} of different threads.
For the statements \stsmcol{x:=x-1}{th1} and \stsmcol{\assume{x\,<\,K};\,x:=x+1}{th2},
the order of execution may indeed matter.
But whenever it is possible to execute the sequence \stsmcol{\assume{x\,<\,K};\,x:=x+1}{th2}\,\stsmcol{x:=x-1}{th1},
it is also possible to execute the sequence \stsmcol{x:=x-1}{th1}\,\stsmcol{\assume{x\,<\,K};\,x:=x+1}{th2}
with the same effect (\texttt{x} is not modified),
i.e., the latter sequence allows a strict superset of executions.
Thus we can verify traces containing the sequence \stsmcol{x:=x-1}{th1}\,\stsmcol{\assume{x\,<\,K};\,x:=x+1}{th2}
and conclude that traces containing the sequence \stsmcol{\assume{x\,<\,K};\,x:=x+1}{th2}\,\stsmcol{x:=x-1}{th1} are also correct.

Analogously to the example in \cref{fig:inc-dec-template},
we exploit this commutativity (or \emph{semi-commutativity}) to reorder any trace of the program such that all statements of a thread are executed in a single block.
For the resulting reduction of the program,
it is sufficient to consider the local state of a single additional thread as ghost state, rather than $K$ threads.
If a thread is in $\ell_1$, and the other thread (which serves as ghost state) is also in $\ell_1$, we know that $\texttt{x}\geq 2$,
so \texttt{x} remains positive after a decrement.
If the ``ghost thread'' is not in $\ell_1$, neither thread executes the assert statement.
Commutativity has again simplified the ghost state required to prove correctness of the program.

\section{Parameterized Concurrent Programs}

A parameterized program $\P$ is given by its thread template (a control flow graph) and a set of thread-local variables,
i.e., $\P = \langle \Loc, \Delta, \ell_\init, \locals \rangle$
with a finite set of locations $\Loc$,
a finite transition relation $\Delta \subseteq \Loc\times\Stmt\times\Loc$ (where $\Stmt$ is the set of atomic program statements),
an initial location $\ell_\init \in \Loc$,
and a set of thread-local variables $\locals$.
Any variable not in $\locals$ is considered global.
We denote the set of global variables as $\globals$.

The enabled statements $\enabled{\ell}$ of a location $\ell$ are the statements $\st$ such that $\langle \ell,\st,\ell' \rangle\in\Delta$ for some $\ell'$.
We assume that the only case in which $\enabled{\ell}$ contains more than one statement is the case of a branch (or loop head),
and thus $\enabled{\ell} = \{\assume{$e$}, \assume{$\lnot e$} \}$ for some branching condition (or loop guard) $e$.
This assumption is only required for the minimality of our reduction (\cref{prop:minimal});
the soundness of our approach does not rely on it.

A parameterized program describes a family of programs.
For each number of threads $n\in\N$, the instance of the program with $n$ threads is denoted by $\P(n)$.
The variables of the program instance $\P(n)$ consist of the global variables, as well as indexed local variables $x_i$ for each $i\in\{1,\ldots,n\}$ and $x\in\locals$.
The program instance $\P(n)$ uses \emph{indexed statements} $\inst{\st}{i}$,
where $\st\in\Stmt$ is a statement as it appears in the thread template,
and the thread index $i\in\{1,\ldots,n\}$ indicates which thread executes the statement.

\paragraph{Traces}
A thread template defines a languages $L$ over the alphabet $\Stmt$,
consisting of all sequences of statements that label any path from the initial location (regardless which location is reached in the end).

The language of an instance $\P(n)$ of the parameterized program $\P$
is a language of \emph{traces}, i.e., sequences of \emph{indexed} statements. %
For the language $L$ defined by the thread template of $\P$,
let $L[i]$ be the language $L$ where every statement $\st$ has been replaced by the indexed statement $\inst{\st}{i}$.
The program instance $\P(n)$ then defines the language of all traces allowed by the control flow of $\P$:
\begin{align*}
  \P(n) &= L[1] \parallel \ldots \parallel L[n], %
\end{align*}
where $\parallel$ denotes the shuffle operation on languages.

\paragraph{Semantics}
We assume that each statement $\st\in\Stmt$ has an associated semantics $\sem{\st}$,
given by a binary input/output relation between valuations of the program variables.
In particular, the semantics of assignment statements $x\texttt{:=}e$ and assume statements \assume{$e$} is as one would expect.

We extend this semantics to indexed statements.
Executing the indexed statement $\inst{\st}{i}$ may modify the global variables as well as the indexed local variables $x_i$,
but leaves local variables of other threads unmodified.
Formally, we define the semantics of an indexed statement as follows:
\[
  \sem{\inst{\st}{i}} := \big\{\,(s_1,s_2)\mid (s_1|_i,s_2|_i)\in\sem{\st} \land \forall x\in\locals\,.\,\forall j\neq i\,.\,s_2(x_j)=s_1(x_j)\,\big\},
\]
where $s_1,s_2$ are valuations of the variables of $\P(n)$, and $s|_i$ is the unique valuation of the program variables such that $s|_i(x)=s(x_i)$ for local variables~$x$ and $s|_i(g) = s(g)$ for global variables $g$.

Based on these semantics of atomic statements, we define the semantics of each program instance.
A \emph{configuration} of $\P(n)$ is a pair $\langle \vec\ell, s \rangle$, where $\vec\ell = \langle \ell_1,\ldots,\ell_n \rangle \in \Loc^n$ denotes the control locations of the running threads,
and $s$ is a valuation of the variables of the program instance $\P(n)$.
We say that the configuration $\langle\vec\ell,s\rangle$ is \emph{initial} if $\vec\ell=\langle\ell_\init,\ldots,\ell_\init\rangle$.%

Let $\langle\vec\ell,s\rangle$ be a configuration, such that $\langle\ell_i,\st,\ell_i'\rangle \in \Delta$ is a transition of the thread template,
and such that there is a successor valuation $s'$ with $(s,s')\in\sem{\inst{\st}{i}}$.
From this configuration, the program can execute $\inst{\st}{i}$.
Thread $i$ moves to control location $\ell_i'$, whereas all other threads remain at the same location ($\ell_j' = \ell_j$ for all $j\neq i$).
We write $\langle \vec\ell,s \rangle \xrightarrow{\inst{\st\;}{\,i}} \langle \vec\ell',s' \rangle$.
A trace $\tau = \inst{\st_1}{i_1}\ \ldots \ \inst{\st_m}{i_m}$ is \emph{feasible}
if there exists a corresponding sequence of configurations (called an \emph{execution}) $\langle \vec\ell^{(1)}, s_1\rangle \xrightarrow{\inst{\st_1\;}{\;i_1}} \ldots \xrightarrow{\inst{\st_n\;}{\;i_n}}\langle \vec\ell^{(m)}, s_m\rangle$,
and $\langle\vec\ell^{(1)},s_1\rangle$ is initial.
If a trace is not feasible, it is \emph{infeasible}.

\paragraph{Synchronous Statements}
Our approach uses a particular kind of statements, so-called \emph{synchronous statements}~\cite{popl17:thread-modular}.
A thread can execute a synchronous statement to (atomically) update the local variables for all (unboundedly many) other threads.
Synchronous statements have the form
\[
  \syncStmt{$x_j$}{$e$}
\]
where $j$ and $i$ are symbolic indices representing the thread whose variables are updated ($j$) and the thread that executes the statement ($i$).
The updated variable $x$ must be a local variable $(x\in\locals$).
The expression $e$ may refer to global variables, as well as local variables $y_i, y_j$ indexed by $i$ or $j$.
Additionally, we allow $e$ to refer to special variables $\pc{i}$ and $\pc{j}$, which represent the current control locations of thread $i$ resp.\ $j$.

\paragraph{Correctness and Proofs}
A specification for a parameterized program $\P$ consists of a precondition $\mathit{pre}$,
and a partial map $\mathit{assert}$ from program locations to formulae over the program variables.
Both the precondition $\mathit{pre}$ and an assertion $\mathit{assert}(\ell)$ may refer to global and local variables.
The program $\P$ satisfies the specification $\langle\mathit{pre},\mathit{assert}\rangle$ if for all numbers of threads $n$,
the following holds:
For every execution $\langle \vec\ell^{(1)}, s_1\rangle \xrightarrow{\inst{\st_1\;}{\;i_1}} \ldots \xrightarrow{\inst{\st_n\;}{\;i_n}}\langle \vec\ell^{(m)}, s_m\rangle$ of the program instance $\P(n)$,
such that $s_1|_i \models \mathit{pre}$ for all $i\in\{1,\ldots,n\}$
and such that $\mathit{assert}(\ell_j^{(m)})$ is defined for some $j\in\{1,\ldots,n\}$,
we have that $s_m|_j \models \mathit{assert}(\ell_j^{(m)})$.
In the remainder of the paper, we always assume that a parameterized program is accompanied by a specification $\langle \mathit{pre},\mathit{assert}\rangle$.
For examples, we annotate the specification in the thread template (as in \cref{fig:inc-dec-template}).
We simply say that \emph{$\P$ is correct} if $\P$ satisfies this specification.

As an aside, our approach can be extended to more general notions of (safety) specifications,
e.g. a set of error states given by a \emph{generator set} as in~\cite{popl17:thread-modular}.
Such specifications allow for instance a direct encoding of mutual exclusion.
However, since this is orthogonal to our contributions, we focus here on the simpler notion of specification as defined above.

In \cref{sec:red-proofs}, as well as several examples, we consider a particular notion of \emph{proofs} for parameterized programs: \emph{Ashcroft invariants}.
An Ashcroft invariant is a formula of the form
\[
  \forall i_1,\ldots,i_k \,.\, (\bigwedge_{1\leq r< s\leq n} i_r \neq i_s) \to \varphi
\]
where $\varphi$ is a quantifier-free formula,
whose variables range over the global program variables,
indexed local variables $x_{i_r}$ (for $x\in\locals$, $r\in\{1,\ldots,k\}$)
and variables $\pc{i_r}$ (for $r\in\{1,\ldots,k\}$) representing the current control location of thread $i_r$.
The quantified variables symbolically represent $k$ threads of the program.
The premise $\bigwedge_{1\leq r< s\leq n} i_r \neq i_s$ expresses the fact that $i_1,\ldots,i_k$ indeed refer to $k$ distinct threads.
Thus, the conclusion $\varphi$ expresses a relation between the global variables, as well as the locations and local variables of any subset of $k$ distinct threads of the program.
We call the number of quantified variables $k$ the \emph{width} of the Ashcroft invariant.

An Ashcroft invariant is \emph{inductive} for the parameterized program $\P$,
if it is an inductive invariant for every instance $\P(n)$,
assuming the precondition $\mathit{pre}$ initially holds for every thread.
Since we only consider inductive Ashcroft invariants, we omit the adjective from now on.

Finally, let us define what it means for an Ashcroft invariant to prove correctness of a parameterized program $\P$.
We say that an Ashcroft invariant $\forall i_1,\ldots,i_k \,.\, (\bigwedge_{r\neq s} i_r \neq i_s) \to \varphi$ is \emph{safe},
if it is inductive,
and for every location $\ell$ where $\mathit{assert}(\ell)$ is defined,
the following entailment holds:
\[
  \forall i_1,\ldots,i_k \,.\, (\bigwedge_{1\leq r< s\leq n} i_r \neq i_s) \to \varphi \models \forall i\,.\, \pc{i} = \ell \to \mathit{assert}(\ell)
\]
If a safe Ashcroft invariant for a program $\P$ and a specification $\langle \mathit{pre},\mathit{assert}\rangle$ exists,
then $\P$ satisfies the specification $\langle \mathit{pre},\mathit{assert}\rangle$.
However, the reverse is not true.

\paragraph{Other Program Models}
The model of parameterized programs is a natural model for certain classes of concurrent programs,
e.g.\ GPU code and distributed protocols.
More generally, most classes of concurrent programs can be \emph{encoded} in parameterized programs.
Hence our theoretical results can be expected to hold for a wide class of concurrent programs.
In practical terms, such encodings may present a challenge for verification algorithms.
For example, for structured parallel programs with sophisticated dependence graphs implemented using fork/join,
the best practice would not be to encode the program in this model and try to verify it with our verification algorithm.
The main burden in these cases is that the inductive invariant for the program may have to recover part or all of the structure lost from the original model,
and this can be unreasonable to expect from an automated invariant generator.
Smaller extensions of the model, such as allowing a finite number of different thread templates, as in \cref{fig:notify},
are more straightforward and are indeed supported by our implementation.

\section{Reductions Of Parameterized Programs}
\label{sec:param-red}

In this section, we discuss commutativity-based reductions.
We introduce the underlying formalism, which has previously been used for fixed-thread programs, and discuss how it generalizes to parameterized programs.
Then we present our first key contribution:
a finite representation of an infinite family of commutativity-based reductions.
\medskip

To begin, let us quickly summarize the basics of commutativity theory.
The most fundamental notion is a \emph{commutativity relation} between statements.
Specifically, in this work we say that two (indexed) statements $\inst{\st_1}{i}$ and $\inst{\st_2}{j}$ (with $i \neq j$) \emph{commute},
denoted $\inst{\st_1}{i} \comm \inst{\st_2}{j}$,
if executing them in either order yields the same semantics,
i.e., $\sem{\inst{\st_1}{i} \ \ \inst{\st_2}{j}} = \sem{\inst{\st_2}{j} \ \ \inst{\st_1}{i}}$.
We discuss broader notions of sound commutativity in \cref{sec:other-comm}.

The commutativity relation over statements defines an equivalence relation on traces.
We say that two traces $\tau_1$ and $\tau_2$ are \emph{equivalent}
if $\tau_2$ can be derived from $\tau_1$ by repeatedly swapping adjacent commuting statements.
Note that, by repeated application of the definition of commutativity, equivalent traces have the same semantics.
Consequently, it suffices to show that one trace satisfies a specification in order to conclude that all equivalent traces are correct as well.

Motivated by this observation, one can introduce the concept of a \emph{reduction}.
A set of traces $L'$ is a \emph{reduction} of another set of traces $L$
if $L' \subseteq L$, and for each trace in $L$ there exists an equivalent trace in $L'$.
It follows that if we prove that all traces in a reduction $L'$ are correct, we can soundly conclude that all traces in the set $L$ are correct.
Specifically, we are interested in reductions of the language of traces given by a program instance $\P(n)$ for a fixed number of threads $n$.

\subsection{A Family of Reductions}
\label{sec:red-parameterized}

It has been shown that commutativity-based reduction can lead to simpler proofs for concurrent programs with a fixed number of threads.
In particular, the proof for a (suitably chosen) reduction of a program may be within reach of algorithmic verification,
whereas a proof for the entire program may not.

\begin{example}
  \label{ex:inc-dec-fixedthreads}
  Let us consider the program $\P^\pm$ as discussed in the introduction, with the template shown in \cref{fig:inc-dec-template}.
  For any fixed number of threads $n$, the instance $\P^\pm(n)$ is correct.
  In this case, the proof for the (unreduced) program is comparatively simple:
  The instance $\P^\pm(n)$ can be proven correct with the assertions $\texttt{x}\geq0$, $\texttt{x}\geq1$, \ldots, up to $\texttt{x}\geq n$.
  Note however that the proof size, i.e., the required number of assertions, grows with the number of threads.

  Since the increment and decrement of \texttt{x} commute,
  as do two increments resp.\ two decrements,
  we can apply commutativity to simplify the proof.
  We define, for each number of threads $n$, a reduction $\R^\pm(n)$:
  a set of traces that contains, for each equivalence class of traces in $\P^\pm(n)$,
  the representative trace in which each thread executes all its statements in the trace in a single block.
  Thus $\R^\pm(n)$ can be written as $\R^\pm(n) = L_1 L_2 \ldots L_n$, where $L_i = \big( \stsmcol{\inst{\texttt{x:=x+1}\,}{$i$}}{th1} \; \stsmcol{\inst{\texttt{x:=x-1}\,}{$i$}}{th1}\big)^* \big(\varepsilon + \stsmcol{\inst{\texttt{x:=x+1}\,}{$i$}}{th1}\big)$.
  In traces of this reduction, the value of \texttt{x} reaches a value $\geq 2$ only if the last statement executed by some thread $i$ is an increment
  without a matching decrement.
  In this case, \texttt{x} never falls below $2$ again, as every future decrement is preceded by a matching increment.
  Consequently, the resulting reduction $\R^\pm(n)$ can be proven correct with only the assertions $\texttt{x}\geq0$, $\texttt{x}\geq1$, $\texttt{x}\geq 2$, for any number of threads $n$.
\end{example}

In this work, we are concerned with proof simplification for \emph{parameterized} concurrent programs, with an unbounded number of threads.
Thus, we are searching for one uniform proof that proves a program $\P$ correct for all numbers of threads $n$.
A key insight is that commutativity can similarly lead to proof simplification in this setting.

Specifically, suppose that for each $n$, we have proven correctness of a reduction $\R(n)$ of the program instance $\P(n)$ with $n$ threads.
Then, by soundness of commutativity for a fixed number of threads, we can conclude that each $\P(n)$ is correct, i.e., the parameterized program $\P$ is correct.
Furthermore, if the proofs of reductions for different $n$ have a similar structure,
we can hope to find one uniform, finite proof for the parameterized program $\P$.

\begin{example}[continued from \cref{ex:inc-dec-fixedthreads}]
  \label{ex:inc-dec-proofspace}
  Let us consider again the program $\P^{\pm}$,
  and the claim that each reduction $\R^\pm(n)$ can be proven correct with the assertions $\texttt{x}\geq 0$, $\texttt{x} \geq 1$ and $\texttt{x}\geq 2$.
  Specifically, each trace in the reduction $\R^\pm(n)$ can be given a correctness proof (an annotation of the trace)
  using the following Hoare triples, instantiated for all $i\in\{1,\ldots,n\}$:
  \begin{gather*}
    \{\texttt{x}\geq 0\}\,\stsmcol{$\inst{\texttt{x:=x+1}}{i}$}{th1}\,\{\texttt{x}\geq 1\},\qquad
    \{\texttt{x}\geq 1\}\,\stsmcol{$\inst{\texttt{x:=x+1}}{i}$}{th1}\,\{\texttt{x}\geq 2\},\\
    \{\texttt{x}\geq 2\}\,\stsmcol{$\inst{\texttt{x:=x+1}}{i}$}{th1}\,\{\texttt{x}\geq 2\},\qquad
    \{\texttt{x}\geq 1\}\,\stsmcol{$\inst{\texttt{x:=x-1}}{i}$}{th1}\,\{\texttt{x}\geq 0\}\,
  \end{gather*}
  The proof simplification is significant:
  Without reduction, a proof of the program $\P^\pm$ requires a ghost variable
  that counts the number of threads that have incremented but not yet decremented \texttt{x}.
\end{example}

Up to this point, the basis for our considerations has been an infinite family of reductions $\R(n)_{n\in\N}$.
In order to arrive at an effective proof method for parameterized programs,
one crucial step is missing:
We need a way to effectively construct a finite representation of this family.

\subsection{Parameterized Reductions}
The key insight behind our first contribution is this:
\begin{observation}
  For every parameterized program $\P$,
  there exists an infinite family of reductions $\R(n)_{n\in\N}$
  such that the entire family can again be represented as a parameterized program.

\end{observation}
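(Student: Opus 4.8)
The plan is to exhibit an explicit source-to-source transformation of the thread template of $\P$ into a new thread template, whose induced parameterized program $\R$ has instances $\R(n)$ that are exactly the reductions from \cref{ex:inc-dec-fixedthreads} generalized to arbitrary $\P$. Concretely, I would fix the canonical \emph{thread-order} reduction: for each $n$, let $\R(n)$ be the set of traces of $\P(n)$ in which, whenever thread $i$ and thread $j$ both have statements in the trace and $i<j$, the blocks are arranged so that lower-indexed threads are preferred to run first when a commuting swap is available. The heart of the argument is that this choice of representative can be recognized \emph{locally}, by a finite amount of control information added to each thread, so that the transformed template, instantiated at $n$ threads, accepts precisely $\R(n)$.

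The key steps, in order. First I would make precise that each $\R(n)$ is a genuine reduction of $\P(n)$: it is a subset of the traces of $\P(n)$, and every equivalence class (under the commutativity relation $\comm$) contains a representative in $\R(n)$. Existence of a representative follows from the standard normal-form argument for Mazurkiewicz traces --- repeatedly swapping adjacent commuting statements to push lower-indexed threads earlier terminates in a canonical least element, which lies in $\R(n)$ by construction. Second, I would design the instrumentation of the thread template: augment the template with synchronous statements (of the form $\syncStmt{$x_j$}{$e$}$ already introduced in the excerpt) and auxiliary local variables that track, for each thread, enough information about the commutativity status of its enabled statement relative to the other threads to decide whether that thread is currently allowed to move in the canonical order. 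The instrumentation must guard each original transition by a condition that blocks a move exactly when taking it would produce a non-canonical (i.e., swappable-earlier) trace. Third, I would prove the correspondence $\R(n) = $ (language of the instrumented parameterized program at $n$ threads) by a two-way inclusion: every accepted trace is canonical (soundness of the guards), and every canonical trace is accepted (completeness of the guards). Finally, I would observe that the transformation is uniform in $n$ --- the template is fixed, only the number of instantiated copies varies --- so the single transformed template finitely represents the whole family $\R(n)_{n\in\N}$, which is exactly the claim.

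The main obstacle I expect is the design and correctness of the local guard condition, i.e., encoding a \emph{global} canonicity property of a trace as a \emph{thread-local, synchronously-checkable} predicate. The difficulty is that whether thread $i$'s enabled statement may fire in the canonical order can depend on the enabled statements and control locations of \emph{unboundedly many} other threads $j$, since commutativity is a relation between pairs of indexed statements across all threads. I would address this by exploiting that the reduction is defined by a \emph{pairwise} preference order (anticipating \cref{sec:pref-orders}): the canonicity condition factors into a conjunction of pairwise conditions over all other threads $j$, each expressible using the special variables $\pc{i}$ and $\pc{j}$ and the local variables $y_i,y_j$ permitted in synchronous statements. A universally-quantified pairwise condition over the other threads is precisely what synchronous statements and the Ashcroft-style quantification can express, so the guard remains finitely describable even though it ranges over all threads. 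Verifying that this pairwise factorization faithfully captures the Mazurkiewicz canonical form --- and in particular that no genuine swap-opportunity is missed or spuriously introduced --- is the technically delicate part, and I would isolate it as the central lemma of the construction.
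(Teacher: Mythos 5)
Your plan follows the same architecture as the paper's construction (\cref{sec:param-red}): a source-to-source instrumentation of the thread template with per-thread auxiliary state, updated via synchronous statements using pairwise conditions over $\pc{i}$, $\pc{j}$, with a guard on every transition and thread order as the underlying preference. However, your third step --- proving the language equality ``$\R(n) = {}$ language of the instrumented program at $n$ threads'' --- would fail, and not for a technical reason. In the paper's model, the language of an instance is defined purely syntactically as the shuffle $L[1]\parallel\ldots\parallel L[n]$ of control-flow paths; a guard is just another letter and removes nothing from this language. Hence, for any instrumentation of the shape you describe, the instrumented instance's language projects back (via $\pi$) onto \emph{all} of $\P(n)$, not onto $\R(n)$: a fixed finite template cannot prune interleavings at the language level in an $n$-dependent way. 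The pruning can only be \emph{semantic}: a ``blocked'' move is one rendered infeasible by the guard \assume{$\lnot\sleep{}$}. This is exactly why the paper states its correspondence modulo feasibility (\cref{thm:instr-feas-red}): $\pi(\sleepInstr{\P}(n)\cap\mathbf{Feas})$ is a reduction of $\P(n)\cap\mathbf{Feas}$, and the paper explicitly identifies the restriction to feasible traces as the key insight that makes a uniform finite representation possible at all. Your soundness inclusion (``every accepted trace is canonical'') is false at the language level, and the whole correspondence must be reformulated over feasible traces only (note also that your $\R(n)$ contains canonical representatives of infeasible equivalence classes, which the instrumentation cannot and need not account for).

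Two further gaps. First, your preference order compares thread \emph{indices}, but indices are not accessible to the template: the expression in a synchronous statement may mention only globals, $y_i$, $y_j$, $\pc{i}$ and $\pc{j}$. The paper therefore adds a nondeterministically initialized, pairwise-distinct local variable $\id{}$ as \emph{data} to serve as tie-break --- a move your proposal needs but never makes, and one with real downstream consequences (this nondeterministic data is what later necessitates the symmetry-breaking of \cref{sec:break-symm}). Second, the content of the auxiliary state and its update rule --- what you yourself isolate as the central lemma --- is left undesigned, and it is precisely where the history-dependence of canonicity gets resolved: whether thread $j$ may move cannot be determined from the current configuration alone, but one boolean $\sleep{}$ per thread suffices as a summary of the relevant history, set when $j$ is overtaken by a less-preferred thread whose statement commutes with $j$'s enabled statements, and cleared as soon as a non-commuting statement executes; concretely, \syncStmt{$\sleep{j}$}{$(\sleep{j}\lor \id{j}<\id{i})\land\commCond{j}{i}{\st}$} as in \cref{def:instr-stmt}. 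Without this rule and its invariant (thread $j$ sleeps iff the trace so far is equivalent to one in which $j$'s enabled statements ran earlier), the proposal remains a plan rather than a proof, since both the ``soundness of the guards'' and the ``completeness of the guards'' rest on it.
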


Representing a family of reductions as a parameterized program
enables us to reuse the many mature existing methods for verification of parameterized programs,
and to combine them with commutativity-based reduction.

For a fixed $n$,
a finite automaton recognizing a reduction $\R(n)$ can be constructed using the concept of \emph{sleep sets}~\cite{pldi22:sound-seq}:
In addition to the control locations of the threads, the \emph{sleep set automaton} tracks a set of (indexed) program statements,
the eponymous sleep set.
In each state, the sleep set automaton prevents transitions labeled by statements in the state's sleep set.
After each transition, the sleep set is updated,
i.e., statements are removed and added depending on their commutativity with the statement labeling the transition.
Consider the illustration of an automaton for $\R(2)$ in \cref{fig:sleep-set-illu}.
Initially, the sleep set is empty.
When traversing the edge labeled \stsmcol{$\inst{\st_2}{2}$}{th2},
we add \stsmcol{$\inst{\st_1}{1}$}{th1} to the
\begin{wrapfigure}[12]{r}{0.4\textwidth}
\centering
\tikzstyle{sleep}=[right,font=\footnotesize,color=blue]
\tikzstyle{letter}=[above,font=\footnotesize]
  \begin{tikzpicture}[thick]
    \node[draw,circle,label={[sleep]$\{\}$}] (0) {};
    \node[draw,circle,label={[sleep]$\{\inst{\st_1}{1}\}$},below right=of 0] (1) {};
    \node[draw,circle,label={[sleep]$\{\inst{\st_1}{1}\}$},below right=of 1] (2) {};
    \node[below left=of 0,inner sep=0] (rest) {\vdots};

    \draw[->] (0) -- node[auto,swap]{\stsmcol{$\inst{\st_1}{1}$}{th1}} (rest.north);
    \draw[->] (0) -- node[auto]{\stsmcol{$\inst{\st_2}{2}$}{th2}} (1);
    \draw[-] (1) -- node[auto,swap]{\stsmcol{$\inst{\st_1}{1}$}{th1}} ++(-0.5cm,-0.5cm);
    \draw[-,red] (1)  ++(-0.5cm,-0.5cm) ++(-1mm,1mm) -- ++(2mm,-2mm) ++(-0.5mm,-0.5mm) -- ++(-2mm,2mm);
    \draw[->] (1) -- node[auto]{\stsmcol{$\inst{\st_3}{2}$}{th2}} (2);
    \draw[-] (2) -- node[auto,swap]{\stsmcol{$\inst{\st_1}{1}$}{th1}} ++(-0.5cm,-0.5cm);
    \draw[-,red] (2)  ++(-0.5cm,-0.5cm) ++(-1mm,1mm) -- ++(2mm,-2mm) ++(-0.5mm,-0.5mm) -- ++(-2mm,2mm);
  \end{tikzpicture}
  \caption{Illustration of the sleep set mechanism.
    Sleep sets (blue) are updated after every edge, and lead to removal of transitions.}
\label{fig:sleep-set-illu}
\end{wrapfigure}
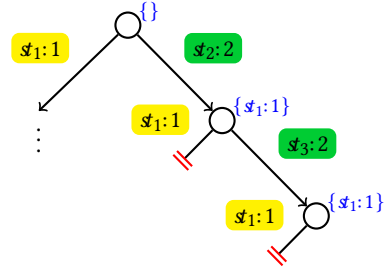
sleep set,
because it has a smaller thread index than \stsmcol{$\inst{\st_2}{2}$}{th2},
and because we assume in this example that \stsmcol{$\inst{\st_1}{1}$}{th1} and \stsmcol{$\inst{\st_2}{2}$}{th2} commute.
In the next state, as \stsmcol{$\inst{\st_1}{1}$}{th1} is in the sleep set,
we prune the corresponding edge.
Any trace that would be accepted via this edge is equivalent to a trace where \stsmcol{$\inst{\st_2}{2}$}{th2} and \stsmcol{$\inst{\st_1}{1}$}{th1} are swapped,
and this trace is already accepted by a run via the left-most \stsmcol{$\inst{\st_1}{1}$}{th1}-transition.
If we assume that \stsmcol{$\inst{\st_1}{1}$}{th1} also commutes with \stsmcol{$\inst{\st_3}{2}$}{th2},
we can keep \stsmcol{$\inst{\st_1}{1}$}{th1} in the sleep set after traversing \stsmcol{$\inst{\st_3}{2}$}{th2},
and again prune the corresponding edge in the right-most state.
If on the other hand \stsmcol{$\inst{\st_1}{1}$}{th1} and \stsmcol{$\inst{\st_3}{2}$}{th2} did not commute,
we would instead remove \stsmcol{$\inst{\st_1}{1}$}{th1} from the sleep set and preserve the transition.

The sleep set technique as explained here can be applied for any fixed number of threads~$n$.
However, each $n$ yields a different language, and this approach does not lead to a uniform representation for the family of reductions.
The key insight which enables such a uniform finite representation is the observation that for the correctness,
we are only interested in \emph{feasible} traces.
Thus, we can encode the family of reductions through an \emph{instrumentation} of the original program's thread template.
For each thread instance $i$, we add a boolean variable $\sleep{i}$,
which keeps track of whether thread $i$ (resp.\ its currently enabled statements) are in the sleep set.
Consequently, when $\sleep{i}$ is true, thread $i$ must not make a move.
In other words, any trace where thread $i$ makes a move while $\sleep{i}$ is true must be \emph{infeasible}.
By shifting from an \emph{explicit} mechanism (computing sleep sets, and removing edges from an automaton) to a \emph{symbolic} approach,
we thus arrive at a uniform finite representation of the family of reductions.

This instrumentation deviates slightly from the explanation above.
Instead of tracking statements in the sleep set,
we track the threads that would execute these statements.
I.e., the variable $\sleep{i}$ of a thread $i$ is true, if the thread's next enabled statements $\enabled{\ell_i}$ are in the sleep set.
(In case multiple statements are enabled, i.e., at a branch or loop head with enabled statements \stsmcol{\assume{$e$}}{th1} and \stsmcol{\assume{$\lnot e$}}{th1},
either both statements are in the sleep set, or neither is.)
This shift from tracking individual letters in the sleep set to tracking the threads pays off in terms of the complexity added to the state space of the instrumented program:
We need only a single boolean variable, rather than one variable for each statement that appears in the thread template.

We define a formula which expresses that thread $j$ is in a control location whose enabled statements commute with a given statement $\inst{\st}{i}$ executed by a different thread $i$.

\begin{definition}[Commutativity Test]
\label{def:comm-cond}
The \emph{commutativity test} $\commCond{j}{i}{\st}$ is the formula
\[
\commCond{j}{i}{\st} :\equiv \bigvee \{\, \pc{j}=\ell \mid \ell\in\Loc \land \forall\st'\in\enabled{\ell}\,.\, \inst{\st'}{j} \comm \inst{\st}{i} \,\}.
\]
\end{definition}

The commutativity test is used in the instrumentation of statements.

\begin{definition}[Instrumented Statements]
  \label{def:instr-stmt}
  Let $\st$ be a statement.
  We define the instrumented statement $\instrument{\st}$ as the atomically executed block of statements
  \[
    \instrument{\st} := \left[\begin{array}{l}
      \assume{$\lnot\sleep{}$}\\
      \syncStmt{$\sleep{j}$}{$(\sleep{j}\lor \id{j}<\id{i})\land\commCond{j}{i}{\st}$}\\
      \st
    \end{array}\right]
  \]
\end{definition}

An instrumented statement $\instrument{\st}$ first checks if its thread is in the sleep set, and if so, blocks.
Otherwise, i.e., if the statement is allowed to execute, the instrumentation performs the update of the sleep set
through a \emph{synchronized} statement~\cite{popl17:thread-modular} that modifies the $\sleep{}$ variables of all (unboundedly many) other threads $j$.
Finally, the original statement $\st$ executes.
Recall that here, the symbols $i$ and $j$ are part of the syntax of synchronized statements rather than logical variables:
The symbol $i$ represents the thread executing the statement, and $j$ represents any other thread.

Note that the instrumentation refers to a thread-local integer variable $\id{}$.
We add such a (nondeterministically initialized) ID variable to the thread template to serve as a tie-break.
If two threads can move, but allowing both to execute statements would result in equivalent traces, we must identify which thread should go first.
We assume that all thread IDs are pairwise distinct.
Thus, these thread IDs allow us to distinguish the thread instances and to decide:
If the enabled statements of two threads commute, the thread with a smaller ID moves first.

\begin{definition}[Sleep-Instrumented Program]
  Let $\P= \langle \Loc, \Delta, \ell_\init, \locals\rangle$ be a parameterized program. %
  We define the \emph{sleep-instrumented} program $\P_\sleep{} := \langle \Loc, \Delta_\sleep{},\ell_\init, \locals'\rangle$ %
  with local variables $\locals' := \locals \cup \{ \id{}, \sleep{} \}$,
  and the transitions given by
  \[
      \Delta_\sleep{} := \left\{\, \left\langle \ell,\instrument{\st}, \ell' \right\rangle\mid \langle \ell,\st,\ell'\rangle \in \Delta \,\right\}.
  \]
\end{definition}
\begin{example}[Continued from \cref{ex:inc-dec-proofspace}]
\label{ex:inc-dec-instr}
\Cref{fig:inc-dec-template-instr} shows the thread template for the sleep-instrumented program $\sleepInstr{\P^\pm}$ corresponding to the program $\P^\pm$ shown in \cref{fig:inc-dec-template}.
Since all statements of $\P^\pm$ commute, the commutativity tests $\commCondSt{j}{i}{\texttt{x:=x+1}}$ and $\commCondSt{j}{i}{\texttt{x:=x-1}}$
both resolve to the formula $\pc{j} = \ell_0\lor\pc{j}=\ell_1$.
\end{example}

\begin{figure}
 \vspace{-3.5mm}
  \begin{tikzpicture}[thick,font=\scriptsize,node distance=10cm]
    \node[draw,circle,label={[]above:$\ell_0$}] (0) {};
    \node[draw,circle,label={[]above:$\ell_1$},label={below:\assert{x!=0}},right of=0] (1) {};
    \node[left=5mm of 0,inner sep=0] (pre) {$\{\texttt{x}=0\}$};
    \draw[<-] (0) -- (pre);
    \draw[->] (0) edge[bend left=12] node[auto]{$\left[\begin{array}{l}
        \assume{$\lnot\sleep{}$}\\
        \syncStmt{$\sleep{j}$}{$(\sleep{j}\lor \id{j}<\id{i})\land(\pc{j}=\ell_0\lor\pc{j}=\ell_1)$}\\
        \texttt{x:=x+1}
      \end{array}\right]$} (1);
    \draw[->] (1) edge[bend left=12] node[auto]{$\left[\begin{array}{l}
        \assume{$\lnot\sleep{}$}\\
        \syncStmt{$\sleep{j}$}{$(\sleep{j}\lor \id{j}<\id{i})\land(\pc{j}=\ell_0\lor\pc{j}=\ell_1)$}\\
        \texttt{x:=x-1}
      \end{array}\right]$} (0);
  \end{tikzpicture}
  \caption{Template for $\sleepInstr{\P^{\pm}}$}
  \label{fig:inc-dec-template-instr}
\end{figure}

The sleep-instrumented program $\sleepInstr{\P}$ serves as uniform finite representation for the family of reductions $\R(n)_{n\in\N}$.
To formalize this relationship,
let $\pi$ be the inverse of $\iota$, i.e., a mapping between statements such that $\pi(\iota(\st)) = \st$.
We extend this mapping to indexed statements, traces, and sets of traces in the natural way.
The following key result formally expresses that the sleep-instrumented program describes a family of reductions $\R(n)_{n\in\N}$ of the original program, modulo feasibility.
\begin{theorem}[Reduction]
  \label{thm:instr-feas-red}
  Let $\mathbf{Feas}$ be the set of all feasible traces.
  For each number of threads $n$, the set of traces $\pi(\sleepInstr{\P}(n) \cap \mathbf{Feas})$ is a reduction of $\P(n)\cap\mathbf{Feas}$,
  i.e., the feasible traces of $\P(n)$.%
\end{theorem}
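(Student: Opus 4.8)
The plan is to establish the two defining properties of a reduction for $L' := \pi(\sleepInstr{\P}(n) \cap \mathbf{Feas})$ relative to $L := \P(n)\cap\mathbf{Feas}$: the inclusion $L'\subseteq L$, and the fact that every trace of $L$ has an equivalent trace in $L'$. The inclusion is the routine direction. First I would note that $\sleepInstr{\P}$ has the same underlying control-flow graph as $\P$ (each transition is merely wrapped by $\iota$), so applying $\pi$ to any trace of $\sleepInstr{\P}(n)$ yields a trace of $\P(n)$. For feasibility, given an execution of $\sleepInstr{\P}(n)$ I would project every configuration onto the original variables and locations, discarding the instrumentation variables $\sleep{}$ and $\id{}$. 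Since the block $\iota(\st)$ affects the original variables exactly as $\st$ does---the leading \assume{$\lnot\sleep{}$} and the synchronized update of $\sleep{}$ touch only instrumentation variables---each projected step is a genuine step of $\P(n)$, and the projected initial configuration is initial. Hence $\pi(\tau)$ is feasible, giving $L'\subseteq L$.

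The representative direction is the crux, and I would split it into two stages. In the first stage I relate the symbolic instrumentation to the explicit sleep-set construction of \cite{pldi22:sound-seq}. Concretely, for a fixed assignment of distinct $\id{}$ values to the $n$ threads, I would prove by induction over an instrumented execution that the feasible traces of $\sleepInstr{\P}(n)$ are exactly the $\iota$-images of those feasible traces of $\P(n)$ accepted by the sleep-set automaton whose preference order is induced by the $\id{}$ values. The invariant to maintain is that, after each prefix, $\sleep{i}$ holds precisely when thread $i$'s enabled statements lie in the explicit sleep set: the synchronized update in \cref{def:instr-stmt} reproduces exactly the sleep-set transition (a thread is woken whenever $\commCond{j}{i}{\st}$ fails, and put to sleep when its enabled statements commute and it has the smaller $\id{}$), while the guard \assume{$\lnot\sleep{}$} enforces precisely the automaton's pruning of blocked transitions. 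The branch case, where a location has two enabled assume statements, is handled uniformly because $\commCond{j}{i}{\st}$ quantifies over all enabled statements at once.

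In the second stage I establish sleep-set completeness: every feasible trace $\sigma\in\P(n)$ is equivalent to a trace accepted by the automaton. I would argue by induction on the length of $\sigma$ with a suitable invariant on the current sleep set. If the leading statement is awake, the automaton follows it directly and I recurse on the suffix; if it is blocked, the sleep-set invariant guarantees another awake enabled statement whose first occurrence in $\sigma$ commutes with all preceding statements, so it can be moved to the front, producing an equivalent trace on which I recurse. Because equivalent traces have identical semantics, the resulting representative $\sigma'$ is again feasible, so intersecting with $\mathbf{Feas}$ is harmless; and because swapping adjacent statements of distinct threads keeps a trace in the shuffle, $\sigma'$ remains a valid interleaving, i.e. $\sigma'\in\P(n)\cap\mathbf{Feas}$. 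Combining the two stages, $\sigma'$ is the $\pi$-image of a feasible instrumented trace and hence lies in $L'$, which is exactly the representative property.

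I expect the main obstacle to be the bookkeeping in this second stage. The explicit sleep set is \emph{not} monotone: a thread can be woken when a non-commuting statement fires (when $\commCond{j}{i}{\st}$ becomes false), so the ``move the awake statement to the front'' step must be shown both to terminate---via a well-founded measure such as the number of $\id{}$-inversions among adjacent commuting statements---and never to create a later sleep violation. The delicate point is the simultaneous interaction between the $\id{}$-based tie-break and the fact that $\commCond{j}{i}{\st}$ depends on the currently enabled statements, which themselves change as threads advance; formulating a sleep-set invariant strong enough to drive the induction while respecting both is where the care is needed.
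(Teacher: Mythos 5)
Your overall architecture is sound and is essentially the paper's: a feasibility-preserving projection for the inclusion $L'\subseteq L$, a correspondence between the symbolic instrumentation and an explicit (thread-level, ID-ordered) sleep-set mechanism, completeness of sleep sets up to equivalence, and equi-feasibility of equivalent traces to handle the intersection with $\mathbf{Feas}$. Stage 1, including the invariant ``$\sleep{i}$ holds iff thread $i$'s enabled statements are in the explicit sleep set,'' is fine. The gap is in stage 2, in both the invariant and the measure. The sleep-set invariant you appeal to is not what the instrumentation provides. What the update in \cref{def:instr-stmt} guarantees is that a \emph{sleeping} thread's enabled statements commute with every statement executed \emph{since it was put to sleep} (keeping a thread asleep requires only the conjunct $\commCond{j}{i}{\st}$; the ID test is irrelevant once $\sleep{j}$ holds), and with the statement that put it to sleep. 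It does not guarantee the existence of an \emph{awake} statement whose first occurrence commutes with \emph{all} preceding statements, and such a statement need not exist: if thread $2$ executes a statement that commutes with thread $1$'s enabled statement and then has no further statements, the remaining suffix consists solely of statements of the sleeping thread $1$. So the repair step must move the \emph{blocked} statement backward, and only as far as the position of the statement that put its thread to sleep --- commutation with anything earlier than that point is simply not available, so ``moving it to the front'' is not justified either.

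Second, your proposed termination measure (the number of $\id{}$-inversions) is not decreasing under this repair. Because the ID comparison $\id{j}<\id{i}$ is disjoined with $\sleep{j}$ in the update, threads with \emph{smaller} IDs can execute while thread $j$ stays asleep; moving $j$'s blocked statement back across those executions creates new inversions, possibly more than the single inversion removed at the sleeper. The standard fix: the repair produces an equivalent trace that is strictly smaller in the \emph{lexicographic} order obtained by comparing thread IDs position-wise, since the first position at which the traces differ now carries the smaller ID of the formerly blocked thread. Each equivalence class of a finite trace is finite (thread-local subsequences are preserved by swaps), so this order is well-founded on every class and the induction terminates; equivalently, one can dispense with the repair loop and show directly that the lexicographically least representative of each class is never blocked, deriving a contradiction with minimality whenever it would be. With these two corrections your stage 2 goes through and, combined with your stage 1 and the equi-feasibility observation, yields the theorem.
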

We consider only feasible traces, since the reduction works based on the guards ($\lnot\sleep{}$) added to each transition.
This is necessary to describe the reduction as a parameterized program:
Only when we fix the number of threads $n$, the sleep guards and updates can be evaluated.

\begin{theorem}[Soundness]
  \label{thm:instr-sound-red}
  The sleep-instrumented program $\sleepInstr{\P}$ is correct iff $\P$ is correct.
\end{theorem}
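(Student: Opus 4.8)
The plan is to prove both implications by reducing the statement, for each fixed thread count $n$, to a correctness-preserving correspondence between executions of $\sleepInstr{\P}(n)$ and executions of $\P(n)$, and then to bridge the gap between the two families using \cref{thm:instr-feas-red}. The guiding observation is that the instrumentation only adds the fresh local variables $\sleep{}$ and $\id{}$, which occur neither in the precondition $\mathit{pre}$ nor in any assertion, and that each instrumented statement $\instrument{\st}$ has exactly the same effect on the original program variables as $\st = \pi(\instrument{\st})$: it merely additionally blocks when $\sleep{}$ is set and updates the sleep flags. Moreover $\sleepInstr{\P}$ and $\P$ share the same locations $\Loc$, hence the same specification $\langle\mathit{pre},\mathit{assert}\rangle$.

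First I would establish a semantics-preservation observation: every feasible execution of $\sleepInstr{\P}(n)$ projects, by forgetting the $\sleep{}$ and $\id{}$ variables and replacing each $\instrument{\st}$ by $\st$, to a feasible execution of $\P(n)$ that visits the same locations and carries the same valuation of the original variables throughout. This immediately yields the direction ``$\P$ correct $\Rightarrow \sleepInstr{\P}$ correct'': a feasible execution of $\sleepInstr{\P}(n)$ from a $\mathit{pre}$-satisfying configuration that reaches a location $\ell_j$ with $\mathit{assert}(\ell_j)$ defined projects to such an execution of $\P(n)$ with the identical original valuation; correctness of $\P$ hands us $\mathit{assert}(\ell_j)$, and since the assertion mentions only original variables and $\pc{}$, it holds for the instrumented execution as well.

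For the converse ``$\sleepInstr{\P}$ correct $\Rightarrow \P$ correct'' I would argue by contraposition using \cref{thm:instr-feas-red}. Suppose some instance $\P(n)$ has a feasible execution from a $\mathit{pre}$-satisfying initial configuration $\langle\vec\ell_\init,s_1\rangle$ reaching $\langle\vec\ell,s_m\rangle$ with $s_m|_j \not\models \mathit{assert}(\ell_j)$; let $\tau$ be its trace. By \cref{thm:instr-feas-red} there is an equivalent trace $\tau' = \pi(\hat\tau')$ with $\hat\tau'\in\sleepInstr{\P}(n)\cap\mathbf{Feas}$. Here I would invoke the two standard consequences of $\tau\sim\tau'$: the equivalence preserves the input/output semantics, so $(s_1,s_m)\in\sem{\tau'}$ and $\tau'$ is feasible from $s_1$ reaching the same data valuation $s_m$; and it swaps only statements of distinct threads, so each thread traverses the same path and $\tau'$ ends in the same location vector $\vec\ell$. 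Thus $\tau'$ exhibits the same assertion violation at $\langle\vec\ell,s_m\rangle$.

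The main obstacle is the final lifting step: I must promote this data-level execution of $\tau'=\pi(\hat\tau')$ from $s_1$ to a genuine feasible execution of the instrumented trace $\hat\tau'$ itself. The crucial fact to verify is that feasibility of the sleep bookkeeping is independent of the data: the guards $\assume{\lnot\sleep{}}$ and the sleep updates depend only on the sequence of locations visited — which is fixed by the trace $\hat\tau'$ regardless of data values — together with the initial $\id{}$ and $\sleep{}$ values, but not on the original variables, since the commutativity test $\commCond{j}{i}{\st}$ is a disjunction over $\pc{j}$ alone. Consequently I can take the $\id{}$- and $\sleep{}$-initialization that witnesses $\hat\tau'\in\mathbf{Feas}$ and merely replace its original-variable part by $s_1$: the sleep guards still all pass, while the original part now evolves exactly as the feasible execution of $\tau'$ from $s_1$, reaching $\langle\vec\ell,s_m\rangle$. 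This is a feasible execution of $\sleepInstr{\P}(n)$ from a $\mathit{pre}$-satisfying configuration (as $\mathit{pre}$ ignores $\sleep{}$ and $\id{}$) that violates $\mathit{assert}(\ell_j)$, contradicting correctness of $\sleepInstr{\P}$. Quantifying over all $n$ completes the proof.
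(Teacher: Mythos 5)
Your proof is correct and takes essentially the same route as the paper: the easy direction follows because the instrumentation neither modifies the original program variables nor is mentioned by the specification, and the converse combines \cref{thm:instr-feas-red} with the facts that equivalent traces have identical semantics and identical per-thread location sequences. Your explicit handling of the lifting step --- exploiting that the $\sleep{}$/$\id{}$ bookkeeping depends only on locations and IDs while the data evolution ignores $\sleep{}$ and $\id{}$, so the two halves of an execution can be recombined --- is precisely the decoupling the paper's argument relies on.
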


The reduction achieved by the instrumentation is \emph{minimal}:
We retain only one representative per equivalence class,
and hence a strict subset cannot be a reduction.
This means that we do not unnecessarily burden the verification with the proof of redundant traces;
the instrumentation fully realizes the benefit of commutativity.

\begin{proposition}[Minimality]
  \label{prop:minimal}
  For every feasible trace $\tau$ of $\P$, the sleep-instrumented program $\sleepInstr{\P}$ has exactly one feasible trace $\tau'$ such that $\pi(\tau')$ is equivalent to $\tau$.
\end{proposition}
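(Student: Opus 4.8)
The plan is to prove existence and uniqueness separately. \emph{Existence} is immediate from \cref{thm:instr-feas-red}: since $\pi(\sleepInstr{\P}(n)\cap\mathbf{Feas})$ is a reduction of $\P(n)\cap\mathbf{Feas}$, the definition of a reduction guarantees that every feasible trace $\tau$ of $\P$ has at least one equivalent trace in it, i.e.\ at least one feasible $\tau'$ of $\sleepInstr{\P}$ with $\pi(\tau')$ equivalent to $\tau$. All the work is therefore in \emph{uniqueness}. Because $\iota$ is injective on statements and acts position-wise on traces, $\pi$ is a position- and thread-index-preserving bijection; hence two distinct feasible instrumented traces $\tau_1'\neq\tau_2'$ give distinct originals $\pi(\tau_1')\neq\pi(\tau_2')$. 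It thus suffices to show that no two \emph{distinct but equivalent} feasible traces $\sigma_1,\sigma_2$ of $\P$ both lie in $\pi(\sleepInstr{\P}(n)\cap\mathbf{Feas})$. I fix a valuation of the tie-break variables $\id{}$ (pairwise distinct per thread); these are constant along any execution, and the argument establishes uniqueness relative to this fixed valuation, as intended for a single lexicographic reduction.

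For uniqueness I argue by a \emph{first-divergence} analysis. Assume for contradiction that $\sigma_1\neq\sigma_2$ are equivalent and both survive the instrumentation. Let $w$ be their longest common prefix, so $\sigma_1=w\,a\,u_1$ and $\sigma_2=w\,b\,u_2$ with $a=\inst{\st_a}{i_a}\neq b=\inst{\st_b}{i_b}$. Since equivalent traces have identical projections onto every thread, $a$ and $b$ belong to different threads ($i_a\neq i_b$; otherwise the projection onto that thread would force $a=b$). Using the standard fact that the relative order of two \emph{dependent} (non-commuting) actions is preserved across equivalent traces, I derive: first, $\inst{\st_a}{i_a}\comm\inst{\st_b}{i_b}$ (they flip order between $\sigma_1$ and $\sigma_2$); and second, that every statement occurring in $\sigma_2$ strictly between $b$ and the occurrence of $a$ commutes with $a$ (in $\sigma_1$ each such statement follows $a$, in $\sigma_2$ it precedes $a$, so the two must be independent). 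All these intervening statements belong to threads $\neq i_a$, as $a$ is the first $i_a$-action after $w$. The tie-break values are distinct, so WLOG $\id{i_a}<\id{i_b}$, the opposite case being symmetric with the roles of $\sigma_1,\sigma_2$ exchanged.

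The crux is to show that $\sigma_2$ is in fact \emph{infeasible}, contradicting the assumption. After $w$ both runs reach the same instrumented configuration. From there $\sigma_2$ executes $\iota(\st_b)$ in thread $i_b$; by \cref{def:instr-stmt} its synchronized update sets $\sleep{i_a}$ to $(\sleep{i_a}\lor \id{i_a}<\id{i_b})\land\commCond{i_a}{i_b}{\st_b}$. Here $\id{i_a}<\id{i_b}$ holds, and $\commCond{i_a}{i_b}{\st_b}$ holds because every statement enabled at $i_a$'s current location commutes with $\st_b$: the one $i_a$ is about to execute is $\st_a$, which commutes with $\st_b$, and the paper's structural assumption on $\enabled{\cdot}$ (the assumption invoked precisely for minimality) ensures the only other possibility is the complementary guard $\assume{$\lnot e$}$, whose read set coincides with that of $\assume{$e$}$ and which therefore commutes with $\st_b$ exactly when $\st_a$ does. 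Hence $\sleep{i_a}$ becomes \textbf{true}. I then track $\sleep{i_a}$ along $\sigma_2$ until thread $i_a$ must fire $a$: by \cref{def:instr-stmt}, each subsequent statement $\inst{\st_c}{i_c}$ ($i_c\neq i_a$) updates $\sleep{i_a}$ by conjoining $\commCond{i_a}{i_c}{\st_c}$, so $\sleep{i_a}$ can be \emph{cleared} only by a statement that does \emph{not} commute with $i_a$'s enabled statements. But $i_a$'s location, and hence its enabled set, is unchanged until $a$ fires, and every intervening statement was shown to commute with $a$. Thus $\sleep{i_a}$ stays \textbf{true} up to the step where $\sigma_2$ would execute $a$, at which point the guard $\assume{$\lnot\sleep{}$}$ of $\iota(\st_a)$ fails and the run blocks; so $\sigma_2$ is infeasible, the desired contradiction.

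I expect the main obstacle to be the rigorous \emph{invariant maintenance} for $\sleep{i_a}$ in the last step: one must prove that it is set exactly when $b$ fires and is never reset before $a$ must fire, which rests on (i) the equivalence-derived commutativity of all statements between $b$ and $a$ with $\st_a$, and (ii) the monotone conjunctive form of the sleep update. A secondary subtlety, handled above, is the branch case of \cref{def:comm-cond}: because $\commCond{}{}{}$ quantifies over \emph{all} statements enabled at a location, I must invoke the structural assumption that multiple enabled statements arise only as complementary guards $\{\assume{$e$},\assume{$\lnot e$}\}$, together with the fact that an assume commutes with $\inst{\st_c}{i_c}$ iff $\st_c$ leaves the guard's read variables unchanged, to conclude that both guards commute with a statement simultaneously. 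Finally, the tie-break direction $\id{i_a}>\id{i_b}$ is dispatched by the same argument applied to $\sigma_1$ (showing $\sigma_1$ infeasible), so in either case two equivalent feasible representatives cannot coexist, establishing uniqueness and completing the proof.
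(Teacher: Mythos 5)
Your proof is correct in its essentials, and it is the argument the construction is designed to support (the paper's source does not spell out a proof of \cref{prop:minimal}, but your route matches its design): existence is delegated to \cref{thm:instr-feas-red}, and uniqueness is a first-divergence analysis in which the synchronized update of \cref{def:instr-stmt} puts the lower-ID thread to sleep at the divergence point and commutativity keeps it asleep until its guard \assume{$\lnot\sleep{}$} must fail. You also invoke the structural assumption on $\enabled{\ell}$ exactly where the paper says it is needed ("only required for the minimality of our reduction"), which is the right place.

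Three points need tightening, though none is fatal. First, fixing the $\id{}$ valuation is not a cosmetic convenience but a necessary reinterpretation of the statement: with genuinely nondeterministic IDs the literal claim fails — for $\P^\pm$ with two threads, both $\inst{\iota(\texttt{x:=x+1})}{1}\;\inst{\iota(\texttt{x:=x+1})}{2}$ and its transposition are feasible, each under a different ID assignment. However, once you adopt the per-valuation reading, your existence half no longer follows \emph{literally} from \cref{thm:instr-feas-red}, whose feasibility is existential over ID assignments; you need existence of a representative feasible under the \emph{fixed} valuation, which holds but requires a thread-relabeling/symmetry argument (or an appeal to the proof, rather than the statement, of the reduction theorem). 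Second, your justification of the branch case via read sets is not sound as stated: semantic commutativity of \assume{$e$} with $\inst{\st_b}{i_b}$ does not yield any syntactic disjointness; what it yields is that $\st_b$ preserves the truth value of $e$ along its transitions, and since preserving $e$ and preserving $\lnot e$ are the same condition, \assume{$e$} commutes with $\inst{\st_b}{i_b}$ iff \assume{$\lnot e$} does — that semantic fact is what makes $\commCond{i_a}{i_b}{\st_b}$ evaluate to true. Third, "after $w$ both runs reach the same instrumented configuration" is false under data nondeterminism (\keyword{havoc}); what is true, and all your argument uses, is that the $\pc{}$, $\sleep{}$ and $\id{}$ components are uniquely determined by the trace prefix and the ID valuation, because the sleep updates and the tests of \cref{def:comm-cond} depend only on control locations and IDs, never on data. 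With these repairs the proof stands.
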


As demonstrated in \cref{sec:motivating-example},
there exist programs such that no proof of the program without non-trivial ghost state exists,
but where some reduction of the program has a simple proof.
We investigate this phenomenon for the sleep-instrumented program.
To make this precise, we fix Ashcroft invariants as our notion of \emph{proof},
and consider a simple example.

{\tiny~}
\begin{wrapfigure}[9]{r}{0.53\textwidth}
  \footnotesize
  \vspace{-2.75em}
  \begin{minipage}{0.53\textwidth}
  \begin{align*}
    \forall i,j\,.\, i \neq j \to \big( &x\geq 0
    \\\null\land &(\pc{i} = \ell_1 \to x \geq 1)
    \\\null\land &(\pc{j} = \ell_1 \to x \geq 1)
    \\\null\land &(\pc{i} = \ell_1 \land \pc{j} = \ell_1 \to x \geq 2 \lor (\sleep{i} \land \sleep{j}))
    \\\null\land &(\id{i} < \id{j} \land \pc{j} = \ell_1 \to \sleep{i})
    \\\null\land &(\id{i} > \id{j} \land \pc{i} = \ell_1 \to \sleep{j})\big)
  \end{align*}
  \end{minipage}
    \vspace{-0.75em}
  \caption{Safe Ashcroft invariant for $\P^{\pm}_\sleep{}$}
  \label{fig:inc-dec-sleep-ashcroft}
\end{wrapfigure}
\vspace*{-2em}
\begin{example}[continued from \cref{ex:inc-dec-instr}]
\label{ex:inc-dec-red-ashcroft}
  Consider again the program $\P^{\pm}$, with the template shown in \cref{fig:inc-dec-template}.
  There does not exist a safe Ashcroft invariant of any width for the program $\P^{\pm}$~\cite{popl17:thread-modular}.
  Intuitively, the invariant would have to express the information that the value of the global variable $x$ is always greater than or equal to the number of threads that have executed the increment but not yet the decrement.
  Ashcroft invariants cannot express this information (for a formal argument, see \cite{popl17:thread-modular}).

  \Cref{fig:inc-dec-sleep-ashcroft} shows a safe Ashcroft invariant (of width 2) for the sleep-instrumented program $\sleepInstr{\P^{\pm}}$ (shown in \cref{fig:inc-dec-template-instr}).
  This invariant uses the fact that, in the traces of the reduction, if the value of \texttt{x} exceeds 2, it never falls below $2$ again.
  Any trace of the original program is equivalent to a trace in the reduction, since increments and decrements commute and can be arbitrarily reordered.

  Let us examine some traces of $\sleepInstr{\P^\pm}(2)$
  to see how the Ashcroft invariant proves the correctness of traces in $\R^\pm(2)$ as well as outside $\R^\pm(2)$ by using the variables $\sleep{i}$ and $\sleep{j}$.
  We assume that $\id{1}<\id{2}$.
  Given a trace of $\sleepInstr{\P^\pm}(2)$, we instantiate $i:=1$ and $j:=2$ in the Ashcroft invariant, insert concrete values for $\pc{1}$ and $\pc{2}$,
  and simplify the formula to get an inductive annotation of the trace.
  \newcommand\annot[1]{{\color{red}\{#1\}}}
  For instance, the trace \stsmcol{$\inst{\texttt{x:=x+1}}{1}$}{th1}\,\stsmcol{$\inst{\texttt{x:=x+1}}{2}$}{th2}\,\stsmcol{$\inst{\texttt{x:=x-1}}{1}$}{th1} of $\P^\pm(2)$
    is not included in the reduction $\R^\pm(2)$.
  We get the following annotation for the corresponding instrumented trace:
  \[
    \annot{\texttt{x}\geq 0}\ \stsmcol{$\inst{\iota(\texttt{x:=x+1})}{1}$}{th1}
    \ \annot{\texttt{x}\geq 1}\ \stsmcol{$\inst{\iota(\texttt{x:=x+1})}{2}$}{th2}
    \ \annot{\texttt{x}\geq 1 \land (\texttt{x} \geq 2 \lor \sleep{2}) \land \sleep{1}}\ \stsmcol{$\inst{\iota(\texttt{x:=x-1})}{1}$}{th1}
    \ \annot{\texttt{x}\geq 1}
  \]
  Consider in particular the last Hoare triple.
  Since $\sleep{1}$ holds, and \stsmcol{$\inst{\iota(\texttt{x:=x-1})}{1}$}{th1} assumes $\lnot\sleep{1}$,
  the last statement cannot be executed, i.e., the trace is infeasible (hence, any postcondition holds afterwards).
  By contrast, consider the annotated trace corresponding to \stsmcol{$\inst{\texttt{x:=x+1}}{1}$}{th1}\,\stsmcol{$\inst{\texttt{x:=x+1}}{2}$}{th2}\,\stsmcol{$\inst{\texttt{x:=x-1}}{2}$}{th2}:
  \[
    \annot{\texttt{x}\geq 0}\ \stsmcol{$\inst{\iota(\texttt{x:=x+1})}{1}$}{th1}
    \ \annot{\texttt{x}\geq 1}\ \stsmcol{$\inst{\iota(\texttt{x:=x+1})}{2}$}{th2}
    \ \annot{\texttt{x}\geq 1 \land (\texttt{x} \geq 2 \lor \sleep{2}) \land \sleep{1}}\ \stsmcol{$\inst{\iota(\texttt{x:=x-1})}{2}$}{th2}
    \ \annot{\texttt{x}\geq 1}
  \]
  Note again the last Hoare triple.
  The assumption $\lnot\sleep{2}$ by the statement \stsmcol{$\inst{\iota(\texttt{x:=x-1})}{2}$}{th2}
  together with the precondition ensures that $\texttt{x}\geq 2$, and thus $\texttt{x}\geq 1$ still holds after the decrement.
  The final postcondition $\texttt{x}\geq 1$ however does not prevent us from extending the trace with the statement \stsmcol{$\inst{\iota(\texttt{x:=x-1})}{1}$}{th1},
  yielding the postcondition $\texttt{x}\geq 0$.
  While the resulting trace would not correspond to a trace in the reduction $\R^\pm(2)$,
  the Ashcroft invariant does not prove its infeasibility.
  Instead, it simply proves that the trace satisfies the specification.
\end{example}

Even in cases where a safe Ashcroft invariant of some width $k$ exists,
sleep instrumentation can simplify the proof.
Specifically, sleep instrumentation can reduce the minimum width of a safe Ashcroft invariant.
\begin{example}
\label{ex:decrease-width}
  Consider the program $\P^{K\pm}$ with the template shown in \cref{fig:hierarchy-collapse}, where \texttt{x} is a global integer variable.
  Given a fixed value for the constant $K$, a safe Ashcroft invariant for this program must have at least width $K+1$~\cite{popl17:thread-modular}.
  However, the Ashcroft invariant shown in \cref{fig:inc-dec-sleep-ashcroft} has width~$2$, and is safe for the sleep-instrumented program $\sleepInstr{\P^{K\pm}}$, for every value of $K$.
\end{example}

The following theorem states that, if we already have a proof (i.e., a safe Ashcroft invariant) for the original program $\P$,
there also exists a safe Ashcroft invariant for the sleep-instrumented program.
Hence, more (and by \cref{ex:inc-dec-red-ashcroft}, \emph{strictly} more) programs can be proven correct with our instrumentation than without.
Additionally, the theorem shows that a proof of the sleep-instrumented program $\sleepInstr{\P}$ need never be more complicated than a proof of the original program~$\P$;
and in cases such as \cref{ex:decrease-width}, it may be strictly simpler.
\begin{theorem}[Conservative Extension]
  \label{thm:conservative-instr}
  Every safe Ashcroft invariant for a program $\P$
  is a safe Ashcroft invariant for the corresponding sleep-instrumented program $\sleepInstr{\P}$.
\end{theorem}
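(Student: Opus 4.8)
The plan is to transfer a safe Ashcroft invariant $I$ from $\P$ to $\sleepInstr{\P}$ along a projection that forgets the instrumentation variables. The key observation is that $I$ refers only to the global variables, to indexed locals $x_{i_r}$ with $x\in\locals$, and to the location variables $\pc{i_r}$, but never to the variables $\id{}$ and $\sleep{}$ introduced by the instrumentation. Define $\rho$ to be the map sending a configuration of $\sleepInstr{\P}(n)$ to the configuration of $\P(n)$ obtained by discarding the values of $\id{}$ and $\sleep{}$ while retaining all locations and all valuations of variables in $\locals\cup\globals$. Since $\rho$ preserves every variable that $I$ can mention, we have $c\models I$ iff $\rho(c)\models I$ for every configuration $c$ of $\sleepInstr{\P}(n)$.

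The crux is a simulation property: whenever $c \xrightarrow{\inst{\instrument{\st}}{i}} c'$ is a transition of $\sleepInstr{\P}(n)$, then $\rho(c)\xrightarrow{\inst{\st}{i}}\rho(c')$ is a transition of $\P(n)$, and if $c$ is initial then so is $\rho(c)$. To establish this I would inspect the three components of the atomic block $\instrument{\st}$. The guard \assume{$\lnot\sleep{}$} inspects only $\sleep{i}$: if it blocks there is no successor and nothing to prove, and if it passes it modifies no variable. The synchronized update of the $\sleep{}$ variables writes only the variables $\sleep{j}$ for $j\neq i$; although its right-hand side reads $\sleep{j}$, $\id{i}$, $\id{j}$ and, through $\commCond{j}{i}{\st}$, the locations $\pc{j}$, it assigns none of the variables in $\locals\cup\globals$ and no location. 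Finally the original statement $\st$ acts exactly as in $\P$, updating the variables in $\locals\cup\globals$ according to $\sem{\inst{\st}{i}}$ and advancing $\pc{i}$ from $\ell$ to $\ell'$. Hence the combined effect of $\instrument{\st}$ on precisely those data retained by $\rho$ agrees with the effect of $\st$, which is exactly the transition $\rho(c)\xrightarrow{\inst{\st}{i}}\rho(c')$. The claim about initial configurations is immediate, since $\rho$ preserves all locations (all equal to $\ell_\init$) and all original variable values.

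With this simulation property, inductiveness transfers directly. For initiation, any initial configuration $c$ of $\sleepInstr{\P}(n)$ with $\mathit{pre}$ holding for every thread projects to an initial configuration $\rho(c)$ of $\P(n)$ that likewise satisfies $\mathit{pre}$ (as $\mathit{pre}$ also ignores $\id{}$ and $\sleep{}$); inductiveness of $I$ for $\P$ gives $\rho(c)\models I$, hence $c\models I$. For consecution, given a transition $c\xrightarrow{\inst{\instrument{\st}}{i}}c'$ with $c\models I$, the simulation property yields $\rho(c)\xrightarrow{\inst{\st}{i}}\rho(c')$ with $\rho(c)\models I$; inductiveness of $I$ for $\P$ then gives $\rho(c')\models I$, so $c'\models I$. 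Thus $I$ is an inductive invariant of every instance $\sleepInstr{\P}(n)$, i.e.\ inductive for $\sleepInstr{\P}$. Safety is even more direct: the required entailment $I\models\forall i\,.\,\pc{i}=\ell\to\mathit{assert}(\ell)$ mentions only original variables and the shared location set $\Loc$, so it is literally the same logical entailment for $\sleepInstr{\P}$ as for $\P$, and holds by assumption.

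I expect the simulation property---verifying that neither the \assume{$\lnot\sleep{}$} guard nor the synchronized $\sleep{}$-update perturbs any variable observable by $I$ or by a configuration of $\P$---to be the only substantive step; the remainder is bookkeeping about which variables each formula mentions. Two minor points deserve care: blocked instrumented transitions have no successor and so impose no proof obligation, and any auxiliary initialization constraints that $\sleepInstr{\P}$ might place on $\id{}$ or $\sleep{}$ only shrink the set of initial configurations, hence cannot endanger initiation.
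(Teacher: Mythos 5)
Your proof is correct and takes essentially the same route as the paper's: since the instrumentation only adds a guard on $\sleep{}$ and writes to the fresh variables $\sleep{}$ and $\id{}$, every transition of $\sleepInstr{\P}(n)$ projects to a transition of $\P(n)$, and an Ashcroft invariant for $\P$ --- which by definition cannot mention these fresh variables --- therefore remains inductive, while the safety entailment is literally the same formula for both programs. Your two closing caveats (blocked instrumented transitions create no obligation; initialization constraints on $\id{}$ and $\sleep{}$ only shrink the initial set) are exactly the right bookkeeping.
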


\section{Reductions Beyond Sequential Composition}
\label{sec:pref-orders}

Up to this point,
we have considered a very restricted class of reductions based on thread ordering:
A thread $i$ could only change $\sleep{j}$ from false to true if $\id{j} < \id{i}$.
If all statements of different threads commute, the resulting reduction is the sequential composition of threads:
As soon as a thread with a higher ID takes a step, all threads with lower ID are ``put to sleep'' and never awakened again.
If we do not have total commutativity, the reduction overapproximates sequential composition.

Recall the program $\P^\mathrm{notify}$ discussed in \cref{sec:motivating-example} (as shown \cref{fig:notify}).
For this program it is crucial to align the ``sends'' (i.e., writes to the \texttt{queue} array) in the \texttt{notifier} thread
with all the ``receives'' (i.e., reads from the \texttt{queue} array) in the \texttt{listener} threads in order to find a simple proof.
The (approximation of) sequential composition would not provide sufficient opportunity for simplification.
Therefore, in this section, we widen our view to consider the larger class of \emph{lexicographical reductions}~\cite{pldi22:sound-seq},
which have been shown to be practically useful reductions for program verification.

Previous work~\cite{pldi22:sound-seq} uses \emph{preference orders} to describe different reductions of fixed-thread programs.
A preference order is a total order over program traces (or, more generally, words over some alphabet).
It can be used to define a reduction as follows:
\begin{definition}[Definition~4.2 in \citep{pldi22:sound-seq}]
Let $L$ be a language over an alphabet $\Sigma$,
and let $\preceq$ be a total order over $\Sigma^*$.
The \emph{reduction of $L$ induced by $\preceq$} is denoted $\mathit{red}_{\preceq}(L)$ and contains, for each equivalence class, only the minimal trace wrt.\ the preference order.
\[
  \mathit{red}_{\preceq}(L) =\{\,{\min}_\preceq[w] \mid w \in L \,\}
\]
\end{definition}
In this work, we focus on the class of \emph{positional lexicographic preference orders}~\cite{pldi22:sound-seq}.
Positional lexicographic preference orders are a generalization of a lexicographic orders over program traces,
where the underlying order on statements may differ depending on the current program locations of all threads.
The reductions induced by positional lexicographic preference orders are called \emph{lexicographical reductions}.

We extend the concept of (positional lexicographic) preference orders to parameterized programs.
\begin{definition}[Parameterized Preference Order]
  A \emph{parameterized preference order} is a family of functions $(\preccurlyeq^n)_{n\in\N}$,
  where $\preccurlyeq^n : \mathbf{Loc}^n \to \mathit{TO}_{\{1,\ldots,n\}}$ maps $n$-tuples of locations $\vec{\ell}$ to total orders $\preccurlyeq^n_{\vec{\ell}}$ over thread indices $\{1,\ldots,n\}$.
  For $\vec{\ell}\in\Loc^n$, we write $\preccurlyeq_{\vec\ell}$ instead of $\preccurlyeq^n_{\vec\ell}$ (we omit the superscript $n$).
\end{definition}

A parameterized preference order is thus given by the choice of the underlying ordering of threads
(all statements of the same thread are ordered the same).
As the threads move to different control locations, the ordering of threads assigned by a parameterized preference order may change.
Thus the reduction may differ significantly from the sequential composition of threads.

We focus on a subclass of finitely describable parameterized preference orders:
\begin{definition}[Pairwise Preference Order]
  \label{def:spo}
  Let $R \subseteq \Loc^2$ be a total %
  and transitive relation.
  The \emph{pairwise preference order} induced by $R$
  is the parameterized preference order $(\preccurlyeq^n)_{n\in\N}$ such that
  \[
    i \preccurlyeq_{\vec\ell} j \quad \iff \quad \langle \ell_i,\ell_j \rangle \in R \land ( \langle \ell_j,\ell_i \rangle\in R \to \id{i} \leq \id{j})
  \]
\end{definition}
Thus, the ordering of threads $i,j$ wrt.\ a pairwise preference order only depends on the control locations $\ell_i,\ell_j$ of $i$ and $j$ in the tuple $\vec{\ell}$.
The locations of other threads do not play a role.
If the pair $\langle \ell_i,\ell_j\rangle$ is in $R$, and the reversed pair $\langle\ell_j,\ell_i\rangle$ is \emph{not} in $R$,
  we prefer thread $i$.
If both the pairs $\langle \ell_i,\ell_j\rangle$ and $\langle\ell_j,\ell_i\rangle$ are in $R$, we prefer the thread with a smaller ID.
By totality of $R$ (i.e., $\langle \ell, \ell'\rangle \in R \lor \langle \ell',\ell\rangle\in R$ for all $\ell,\ell'\in\Loc$), one of two compared threads must always be preferable over the other.
\begin{proposition}
  Each total, transitive relation $R \subseteq \Loc^2$ defines a parameterized preference order.
\end{proposition}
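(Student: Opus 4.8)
The plan is to unfold the definition of a parameterized preference order: it suffices to show that, for every $n\in\N$ and every tuple $\vec\ell\in\Loc^n$, the relation $\preccurlyeq_{\vec\ell}$ given in \cref{def:spo} is a total order on the thread indices $\{1,\ldots,n\}$, where the IDs $\id{1},\ldots,\id{n}$ are fixed pairwise-distinct integers. So I would verify the four defining properties — reflexivity, connexity (totality), antisymmetry, and transitivity — one at a time. The first fact I would record is that totality of $R$ already entails reflexivity of $R$: instantiating the totality disjunction $\langle\ell,\ell'\rangle\in R \lor \langle\ell',\ell\rangle\in R$ at $\ell'=\ell$ yields $\langle\ell,\ell\rangle\in R$.

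For reflexivity of $\preccurlyeq_{\vec\ell}$, I would take $i=j$: the first conjunct $\langle\ell_i,\ell_i\rangle\in R$ holds by the reflexivity of $R$ just noted, and the implication holds since $\id{i}\leq\id{i}$. For connexity, given $i,j$ I would split on totality of $R$: if exactly one of $\langle\ell_i,\ell_j\rangle$ and $\langle\ell_j,\ell_i\rangle$ lies in $R$, then the corresponding thread is immediately preferred because the governing implication is vacuous; if both lie in $R$, the comparison collapses to $\id{i}\leq\id{j}$ versus $\id{j}\leq\id{i}$, at least one of which holds since $\leq$ is total on the integers. Antisymmetry is where distinctness of IDs enters: $i\preccurlyeq_{\vec\ell}j$ and $j\preccurlyeq_{\vec\ell}i$ force $\langle\ell_i,\ell_j\rangle,\langle\ell_j,\ell_i\rangle\in R$, whence both implications fire and give $\id{i}\leq\id{j}$ and $\id{j}\leq\id{i}$, so $\id{i}=\id{j}$ and therefore $i=j$.

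The main obstacle, and the only genuinely nontrivial step, is transitivity, where the tie-breaking clause must be threaded carefully. Assuming $i\preccurlyeq_{\vec\ell}j$ and $j\preccurlyeq_{\vec\ell}k$, the first conjuncts give $\langle\ell_i,\ell_j\rangle,\langle\ell_j,\ell_k\rangle\in R$, so transitivity of $R$ yields $\langle\ell_i,\ell_k\rangle\in R$, establishing the first conjunct of $i\preccurlyeq_{\vec\ell}k$. To discharge the implication, I would assume its premise $\langle\ell_k,\ell_i\rangle\in R$ and derive $\id{i}\leq\id{k}$. The key move is to feed this premise back through transitivity of $R$: combining $\langle\ell_k,\ell_i\rangle\in R$ with $\langle\ell_i,\ell_j\rangle\in R$ gives $\langle\ell_k,\ell_j\rangle\in R$, and combining $\langle\ell_j,\ell_k\rangle\in R$ with $\langle\ell_k,\ell_i\rangle\in R$ gives $\langle\ell_j,\ell_i\rangle\in R$. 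These two derived facts activate the previously-vacuous implications in the hypotheses $i\preccurlyeq_{\vec\ell}j$ and $j\preccurlyeq_{\vec\ell}k$, yielding $\id{i}\leq\id{j}$ and $\id{j}\leq\id{k}$; chaining them gives $\id{i}\leq\id{k}$ as required. I expect the write-up to be routine once this chaining is spelled out; the only care needed is to treat the ``both pairs in $R$'' and ``only one pair in $R$'' cases uniformly by always routing the ID comparison through the implication.
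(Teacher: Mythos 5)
Your proposal is correct and follows essentially the same argument as the paper: the same definitional unfolding into reflexivity, antisymmetry, transitivity, and totality, with an identical treatment of the crucial transitivity step (feeding the premise $\langle\ell_k,\ell_i\rangle\in R$ back through transitivity of $R$ to activate both tie-breaking implications and chain $\id{i}\leq\id{j}\leq\id{k}$). The only cosmetic difference is that you prove reflexivity directly from reflexivity of $R$ (itself derived from totality), whereas the paper simply notes that reflexivity follows from totality of the induced order.
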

\begin{proof}
  We have to show that for every $n$ and $\vec{\ell}\in\Loc^n$, the induced relation ${\preccurlyeq_{\vec{\ell}}}$ is a total order over the set of thread indices $\{1,\ldots,n\}$.
  \begin{description}
    \item[Reflexivity] Follows from totality, which is shown below.
    \item[Antisymmetry] Let $i \preceq_{\vec{\ell}} j$ and $j \preceq_{\vec{\ell}} i$.
      It follows that $\langle \ell_i,\ell_j \rangle\in R$, $\langle \ell_j,\ell_i \rangle\in R$,
      and, by the respective implications, also $\id{i} \leq \id{j}$ and $\id{j} \leq \id{i}$.
      Thus we have $\id{i}=\id{j}$, and by uniqueness of thread IDs, we conclude $i=j$.
    \item[Transitivity] Let $i \preceq_{\vec{\ell}} j \preceq_{\vec{\ell}} k$.
      Thus we have $\langle \ell_i,\ell_j \rangle\in R$ and $\langle \ell_j,\ell_k \rangle\in R$,
      and we know that the implications $\langle \ell_j,\ell_i \rangle\in R \to \id{i} \leq \id{j}$ and $\langle \ell_k,\ell_j \rangle\in R \to \id{j} \leq \id{k}$ hold.

      By transitivity of $R$ we know that $\langle \ell_i,\ell_k \rangle\in R$.
      It remains to show that the implication $\langle \ell_k,\ell_i \rangle\in R \to \id{i} \leq \id{k}$ holds.
      Suppose that $\langle \ell_k,\ell_i \rangle\in R$.
      By transitivity of $R$, we have $\langle \ell_k,\ell_j \rangle\in R$ and thus $\id{j} \leq \id{k}$.
      Furthermore, again by transitivity, we have $\langle \ell_j,\ell_i \rangle\in R$ and thus $ \id{i} \leq \id{j}$.
      It follows that indeed $\id{i}\leq\id{j}\leq\id{k}$.
    \item[Totality] Let $i,j\in\{1,\ldots,n\}$, and wlog.\ $\id{i}\leq\id{j}$.
      By totality of $R$, we must have $\langle \ell_i, \ell_j \rangle \in R$ or $\langle \ell_j,\ell_i \rangle\in R$.
      If $\langle \ell_i, \ell_j \rangle \in R$, we have that $i \preceq_{\vec{\ell}} j$ (the implication $\langle\ell_j,\ell_i\rangle\in R\to \id{i}\leq\id{j}$ holds, because the conclusion holds).
      Otherwise, if $\langle \ell_j,\ell_i \rangle\in R$ but $\langle \ell_i, \ell_j \rangle \notin R$,
      we have $j \preceq_{\vec{\ell}} i$ (the implication $\langle\ell_i,\ell_j\rangle\in R\to \id{j}\leq\id{i}$ holds, because the premise does not hold).
  \end{description}
\end{proof}

\Cref{sec:param-red} considers the special case that $R=\Loc^2$,
such that we have $i \preccurlyeq_{\vec{\ell}} j \iff \id{i} \leq \id{j}$.
The class of pairwise preference orders also includes other interesting orders.
\begin{example}[Lockstep Order]
  For each $\ell\in\Loc$, let $d(\ell)$ be the minimum length of a path (in the thread template) from the initial location to $\ell$.
  We define the transitive and total relation $R = \{\,\langle\ell,\ell'\rangle\mid d(\ell) \leq d(\ell')\,\}$.
  The induced pairwise preference order mimics lock-step execution:
  Whenever a thread $i$ has ``fallen behind'' a thread $j$ (i.e., $d(\ell_i) < d(\ell_j)$),
  thread $i$ is preferred over thread $j$ and is thus allowed to ``catch up''.
  When the locations of both threads have the same depth, i.e., $\langle \ell_i,\ell_j \rangle \in R$ and $\langle \ell_j,\ell_i \rangle\in R$,
  the thread with the smaller ID is preferred and takes the next step.
\end{example}
Let us once again consider the program $\P^\mathrm{notify}$.
For this program, the reduction which admits the simple proof discussed in \cref{sec:motivating-example} is induced by lockstep order.

The construction of our instrumented program $\sleepInstr{\P}$ can be generalized to arbitrary pairwise preference orders.
To this end, and for the remainder of the paper, let $R\subseteq \Loc^2$ be a total and transitive relation.
In order to represent the reduction wrt.\ any pairwise preference order again as a parameterized program,
we define:
\begin{definition}[Preference Test]
  \label{def:pref-test}
  The \emph{preference test} for the pairwise preference order induced by the total and transitive relation $R \subseteq \Loc^2$ is defined as the following formula:
  \[
    \prefTest{i}{j} :\equiv \langle pc_i, pc_j \rangle \in R \land (\langle pc_j,pc_i\rangle \in R \to \id{i}\leq\id{j})
  \]
\end{definition}
The preference test $\prefTest{i}{j}$ evaluates to true if, in the current program configuration $\langle \vec\ell,s\rangle$,
our pairwise preference order prefers the statements of thread $i$ over the statements of thread $j$, i.e., if $i \preceq_{\vec{\ell}} j$.
We modify the instrumentation of statements (\cref{def:instr-stmt}) to use the preference test.
\begin{definition}[Instrumented Statement with Preference Test]
  Let $\st$ be a statement.
  We define the instrumented statement $\instrument{\st}$ as the atomically executed block of statements
  \[
    \instrument{\st} := \left[\begin{array}{l}
      \assume{$\lnot\sleep{}$}\\
      \syncStmt{$\sleep{j}$}{$(\sleep{j}\lor {\color{red}\prefTest{j}{i}})\land\commCond{j}{i}{\st}$}\\
      \st
    \end{array}\right]
  \]
\end{definition}
Our results in \cref{sec:param-red} (\cref{thm:instr-feas-red,thm:instr-sound-red,prop:minimal,thm:conservative-instr}) still hold for the modified instrumentation, and for every pairwise preference order.

\section{Finding Ashcroft Invariants for a Reduction}
\label{sec:red-proofs}

We apply the approach of \emph{thread-modular verification at many levels}~\cite{popl17:thread-modular} to find proofs of parameterized programs,
in the form of Ashcroft invariants.
We show how this approach can be applied to the sleep-instrumented program $\sleepInstr{\P}$ to verify a reduction of a parameterized program.

\subsection{Thread-Modular Verification of Reductions}
In thread-modular verification at many levels~\cite{popl17:thread-modular},
the existence of a safe Ashcroft invariant of some fixed width $k$ for a program $\P$
is encoded through a constraint Horn clause (CHC) system.
This CHC system, which we denote $\chcTm{\P}{k}$, uses a single uninterpreted predicate symbol $\Inv(g,\pc{1},x_1,\ldots,\pc{k},x_k)$.
The parameter $g$ represents the global variables of the program.
The parameters $\pc{i}$ and $x_i$ represent the current control locations resp.\ the thread-local variables of $k$ different thread instances.

We can apply an off-the-shelf CHC solver to check satisfiability of this CHC system.
If the system is unsatisfiable, there does not exist a safe Ashcroft invariant of width $k$.
However, this does not mean that the program is incorrect.
It might simply be that every safe Ashcroft invariant has a width larger than $k$,
or that there does not exist a safe Ashcroft invariant of any width, yet the program is still correct.
If on the other hand the CHC system is satisfiable, we can construct an Ashcroft invariant from a solution.
\begin{lemma}[Lemmas 1 and 3 in \cite{popl17:thread-modular}]
  \label{lem:chc-ashcroft}
  If $\Phi_\Inv$ is a solution of $\chcTm{\P}{k}$,
  then the formula
  \[
    \forall i_1,\ldots,i_k\,.\, ({\textstyle \bigwedge_{1\leq r< s\leq n} i_r \neq i_s}) \to \Phi_\Inv(g,\pc{i_1},x_{i_1},\ldots,\pc{i_k}, x_{i_k})
  \]
  is a safe Ashcroft invariant (of width $k$) for the program $\P$.
\end{lemma}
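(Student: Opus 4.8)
The plan is to unfold the definition of \emph{safe Ashcroft invariant} and to check its two constituent properties---inductiveness and safety---directly against the clauses of the CHC system $\chcTm{\P}{k}$. Write $\mathcal{A}$ for the formula in the statement. First I would recall the shape of $\chcTm{\P}{k}$: it comprises (i) an initialization clause, asserting that the precondition at the initial locations entails $\Inv$; (ii) a family of inductiveness clauses; and (iii) safety clauses, asserting that $\Inv$ entails each defined assertion. By assumption, $\Phi_\Inv$ is an interpretation of the predicate $\Inv$ satisfying every clause, and the task is to transfer each clause into the corresponding closure property of $\mathcal{A}$.

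To establish inductiveness of $\mathcal{A}$ for an arbitrary instance $\P(n)$, the base case follows from the initialization clause: in an initial configuration every thread sits at $\ell_\init$ and satisfies $\mathit{pre}$, so for any $k$-tuple of distinct threads the clause yields $\Phi_\Inv$. For the inductive step, consider a transition $\inst{\st}{i_0}$ from a configuration satisfying $\mathcal{A}$, fix an arbitrary tuple of distinct threads $i_1,\ldots,i_k$, and show that $\Phi_\Inv$ holds for this tuple in the successor configuration. I would split into two cases. If $i_0 \in \{i_1,\ldots,i_k\}$, the moving thread is tracked, and the ``local'' inductiveness clause applies directly: its premises (the value of $\Phi_\Inv$ at the tuple before the step, which holds by $\mathcal{A}$, together with the semantics of $\st$) give exactly the desired conclusion.

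The crux is the complementary case $i_0 \notin \{i_1,\ldots,i_k\}$, an \emph{environment} step, in which only the global variables change from the viewpoint of the tuple $i_1,\ldots,i_k$. Here the relevant ``interference'' clause of $\chcTm{\P}{k}$ reasons about $k{+}1$ threads at once: it assumes $\Phi_\Inv$ at the $k$-tuple $i_1,\ldots,i_k$ together with $\Phi_\Inv$ at the $k$ auxiliary tuples obtained by replacing one tracked thread by $i_0$, and, using the transition semantics of the $(k{+}1)$-st thread, concludes $\Phi_\Inv$ at $i_1,\ldots,i_k$ in the successor state. Because $\mathcal{A}$ is universally quantified over all $k$-subsets, every one of these premises is supplied by instantiating $\mathcal{A}$ at the appropriate $k$-subset of $\{i_0,i_1,\ldots,i_k\}$. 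For safety, I would fix a location $\ell$ with $\mathit{assert}(\ell)$ defined and a thread $j$ at $\ell$; instantiating $\mathcal{A}$ at any tuple containing $j$ and invoking the safety clause yields $\mathit{assert}(\ell)$, which is precisely the entailment required in the definition of safety.

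The main obstacle is exactly the environment case: an Ashcroft invariant of width $k$ constrains only $k$-tuples, yet preservation under an environment move inevitably involves a $(k{+}1)$-st thread. The argument turns on the precise form of the interference clause in $\chcTm{\P}{k}$---specifically, on which $k$-subsets of the $k{+}1$ involved threads occur as premises---and on the fact that the universal quantifier of $\mathcal{A}$ makes precisely those instances available. Getting this bookkeeping to match the clause shapes of~\cite{popl17:thread-modular} is where the real content lies; the base case and the safety argument are comparatively mechanical.
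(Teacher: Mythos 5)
Your proposal is correct and follows essentially the same route as the proof this lemma rests on: the paper itself does not reprove it (it is imported from Lemmas~1 and~3 of the cited work on thread-modular proofs), and the clause structure you hypothesize for $\chcTm{\P}{k}$ --- initialization, per-position inductivity clauses, a non-interference clause whose premises are $\Inv$ at the original $k$-tuple plus the $k$ tuples obtained by substituting the interfering thread $\star$ for one tracked thread, and safety clauses --- is exactly the encoding the paper displays (cf.\ \cref{fig:chc-symbolic}, ignoring the highlighted sleep-instrumentation additions). Your case split on whether the moving thread is tracked, with the universal quantifier of the Ashcroft invariant supplying all $k{+}1$ choose $k$ instantiations needed as premises of the non-interference clause, is precisely the standard argument.
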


\begin{figure}
  \footnotesize
  \begin{flalign}
    \intertext{\parbox{\linewidth}{{\em\bfseries Initial}:}}
    &\begin{aligned}[t]
      &\quad\left({\textstyle\bigwedge_{i=1}^k\mathit{pre}(g,x_1)}\right)
      \land \left( {\textstyle\bigwedge_{i=1}^k \pc{i} = \ell_\init} \right)
      \land {\textstyle\color{red}\left( \bigwedge_{i=1}^k \lnot\sleep{i} \right)}
      \land {\textstyle\color{red}\left(\bigwedge_{i\neq j} \id{i} \neq \id{j}\right)}\\
      &\quad\quad\quad\to \Inv(g,\id{1},\pc{1},\sleep{1},x_1,\ldots,\id{k},\pc{k},\sleep{k},x_k)
    \end{aligned}
    \\
    \intertext{\parbox{\linewidth}{{\em\bfseries Inductivity} (for each edge $\langle\ell,\st,\ell'\rangle \in \Delta$ and each $i\in\{1,\ldots,k\}$):}}
    &\begin{aligned}[t]
      &\quad\Inv(g,\id{1},\pc{1},\sleep{1},x_1,\ldots,\id{k},\pc{k},\sleep{k},x_k)
        \land \pc{i} = \ell \land \pc{i}' = \ell'
        \land \st(g,x_i,g',x_i')\\
      &\quad\quad \land {\color{red}\lnot\sleep{i}}
        \land {\textstyle\color{red} \bigwedge_{j\neq i} \left( \sleep{j}' \leftrightarrow \big(\sleep{j} \lor \prefTest{j}{i} \big) \land \commCond{j}{i}{\st} \right)} \\
      &\quad\quad\quad \to \Inv(g',\id{1},\pc{1},\sleep{1}',x_1,\ldots,\id{i},\pc{i}',\sleep{i},x_i',\ldots,\id{k},\pc{k},\sleep{k},x_k)
    \end{aligned}
    \label{eq:symbolic-ind}
    \\
    \intertext{\parbox{\linewidth}{{\em\bfseries Non-Interference} (for each edge $\langle\ell,\st,\ell'\rangle\in\Delta$):}}
    &\begin{aligned}[t]
      &\quad\Inv(g,\id{1},\pc{1},\sleep{1},x_1,\ldots,\id{k},\pc{k},\sleep{k},x_k)\\
      &\quad\quad \land \Inv(g,\id{\star},\pc{\star},\sleep{\star},x_\star, \ldots, \id{k},\pc{k},\sleep{k},x_k)
        \land \ldots
        \land \Inv(g,\id{1},\pc{1},\sleep{1},x_1,\ldots,\id{\star},\pc{\star},\sleep{\star},x_\star)\\
      &\quad\quad \land \pc{\star} = \ell \land \pc{\star}' = \ell' \land \st(g,x_\star,g',x_\star')\\
      &\quad\quad \land {\color{red}\lnot\sleep{\star}}
        \land {\textstyle\color{red} \bigwedge_{j} \left( \sleep{j}' \leftrightarrow \big(\sleep{j} \lor \prefTest{j}{\star} \big) \land \commCond{j}{\star}{\st} \right)} \\
      &\quad\quad\quad \to \Inv(g',\id{1},\pc{1},\sleep{1}',x_1,\ldots,\id{k},\pc{k},\sleep{k}',x_k)
    \end{aligned}
    \label{eq:symbolic-nonint}
    \\
    \intertext{\parbox{\linewidth}{{\em\bfseries Safety} (for each $i\in\{1,\ldots,k\}$ and $\ell\in\Loc$ where $\mathit{assert}(\ell)$ is defined):}}
    &\quad\Inv(g,\id{1},\pc{1},\sleep{1},x_1,\ldots,\id{k},\pc{k},\sleep{k},x_k) \land \pc{i} = \ell \land \lnot \mathit{assert}(\ell) \to \bot
  \end{flalign}
  \caption{%
    \emph{Symbolic-sleep} CHC encoding $\chcSymb{\P}{k}$ for the existence of a safe Ashcroft invariant of width $k$ for the sleep-instrumented program $\sleepInstr{\P}$.
    Differences from the CHC encoding $\chcTm{\P}{k}$ for the unreduced program are highlighted in red.
  }
  \label{fig:chc-symbolic}
\end{figure}

We apply the same methodology to the sleep-instrumented program $\P_\sleep{}$,
yielding the CHC system $\chcSymb{\P}{k}$.
Since $\P_\sleep{}$ is again a parameterized program,
no conceptual changes are required.
In particular, the thread-modular CHC encoding supports the \emph{synchronized statements} used by our instrumentation
to update the sleep variables of all (unboundedly many) other threads~\cite{popl17:thread-modular}.
\Cref{fig:chc-symbolic} shows the resulting CHC encoding for $\chcSymb{\P}{k}$.
We call the encoding $\chcSymb{\P}{k}$ the \emph{symbolic-sleep} encoding, to distinguish it from the \emph{explicit-sleep} encoding introduced in \cref{sec:break-symm}.

Intuitively, the clauses describe an invariant predicate $\Inv$ that must hold for any subset of $k$ distinct threads
(mirroring the structure of Ashcroft invariants).
The clause {\em\bfseries Initial} establishes that the invariant holds initially.
For any $k$ threads $i_1, \ldots, i_k$,
the {\em\bfseries Inductivity} clauses demand that the invariant must be preserved if any of the threads $i_1,\ldots,i_k$ makes a step,
whereas {\em\bfseries Non-Interference} imposes that the invariant is preserved if another thread (denoted $\star$) makes a step.
Replacing any of the threads $i_r$ by the thread $\star$ yields another set of $k$ distinct threads,
so we may assume that the invariant $\Inv$ holds for any of these sets.
This yields the additional premises in the {\em\bfseries Non-Interference} clause.
Finally, {\em\bfseries Safety} ensures that a solution to the CHC system describes a \emph{safe} Ashcroft invariant.

\begin{proposition}
  \label{prop:symbolic-sound}
  If the CHC system $\chcSymb{\P}{k}$ is satisfiable,
  the program $\P$ is correct.
\end{proposition}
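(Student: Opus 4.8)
The plan is to obtain the claim by chaining together results already established for the sleep-instrumented program. First I would note that, by definition, the symbolic-sleep encoding is literally the thread-modular encoding of the sleep-instrumented program, i.e.\ $\chcSymb{\P}{k} = \chcTm{\sleepInstr{\P}}{k}$. This is not merely a notational coincidence: since $\sleepInstr{\P}$ is itself a parameterized program (with local variables $\locals \cup \{\id{}, \sleep{}\}$), applying the thread-modular methodology of \citet{popl17:thread-modular} to $\sleepInstr{\P}$ yields exactly the clause system depicted in \cref{fig:chc-symbolic}. The additional (red) premises precisely mirror the instrumented statement of \cref{def:instr-stmt}: the guard $\lnot\sleep{i}$ (resp.\ $\lnot\sleep{\star}$) encodes the leading $\mathtt{assume}\,\lnot\sleep{}$ block, and the biconditionals defining $\sleep{j}'$ encode the synchronized update of the sleep variables of all other threads.

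Assuming $\chcSymb{\P}{k}$ is satisfiable, let $\Phi_\Inv$ be a solution. I would then apply \cref{lem:chc-ashcroft} to the program $\sleepInstr{\P}$ (in place of $\P$): this produces a safe Ashcroft invariant of width $k$ for $\sleepInstr{\P}$, where the invariant now quantifies over the extended local state, including the $\id{}$ and $\sleep{}$ variables. By the established property that the existence of a safe Ashcroft invariant witnesses that a program satisfies its specification, it follows that $\sleepInstr{\P}$ is correct. Finally, I would invoke the Soundness theorem, \cref{thm:instr-sound-red}, which states that $\sleepInstr{\P}$ is correct if and only if $\P$ is correct; hence $\P$ is correct, as required.

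The main obstacle is the first step: establishing that the concrete clause system in \cref{fig:chc-symbolic} genuinely coincides with $\chcTm{\sleepInstr{\P}}{k}$. This requires checking that the atomically executed block $\iota(\st)$ --- the sleep guard, the synchronized update across unboundedly many threads, and the original statement $\st$ --- is faithfully reflected in the Inductivity and Non-Interference clauses, and in particular that the thread-modular treatment of synchronized statements correctly accounts for the $\star$-thread updating the sleep variables of all $k$ tracked threads. Once this correspondence is confirmed, the remaining two steps are immediate instantiations of \cref{lem:chc-ashcroft} and \cref{thm:instr-sound-red}, so no further technical work is needed.
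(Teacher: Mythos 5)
Your proof is correct and matches the paper's own argument exactly: the paper's proof of \cref{prop:symbolic-sound} is precisely the chain you describe, namely that $\chcSymb{\P}{k}$ is the thread-modular encoding of the parameterized program $\sleepInstr{\P}$, so a solution yields a safe Ashcroft invariant for $\sleepInstr{\P}$ by \cref{lem:chc-ashcroft}, and correctness of $\P$ then follows from \cref{thm:instr-sound-red}. The correspondence between \cref{fig:chc-symbolic} and the thread-modular encoding of the instrumented statements, which you flag as the main obstacle, is exactly what the paper takes for granted by construction of the encoding.
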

\begin{inlineproof}
  Follows from \cref{lem:chc-ashcroft} and \cref{thm:instr-sound-red}.
\end{inlineproof}

In analogy to \cref{thm:conservative-instr},
the symbolic-sleep CHC encoding $\chcSymb{\P}{k}$ behaves conservatively wrt.\ the encoding $\chcTm{\P}{k}$ of the original program.
\begin{observation}
\label{obs:conservative-chc-symb}
Any solution to $\chcTm{\P}{k}$ is also a solution to $\chcSymb{\P}{k}$.
Moreover, there are cases where $\chcTm{\P}{k}$ has no solution, but $\chcSymb{\P}{k}$ does.
\end{observation}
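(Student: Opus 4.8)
The plan is to treat the two halves of the observation separately. For the conservative inclusion, the key point is visible in the red highlighting of \cref{fig:chc-symbolic}: each clause of $\chcSymb{\P}{k}$ arises from the corresponding clause of $\chcTm{\P}{k}$ by (i) padding every atom $\Inv(g,\pc{1},x_1,\ldots)$ with the new arguments $\id{r},\sleep{r}$, and (ii) conjoining extra premises to the antecedent (the guards $\lnot\sleep{i}$ and $\lnot\sleep{\star}$, the sleep-update equivalences $\sleep{j}'\leftrightarrow(\sleep{j}\lor\prefTest{j}{i})\land\commCond{j}{i}{\st}$, and in \textbf{Initial} additionally $\bigwedge_i\lnot\sleep{i}$ and $\bigwedge_{i\neq j}\id{i}\neq\id{j}$). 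First I would fix a solution $\Phi_\Inv$ of $\chcTm{\P}{k}$ and lift it to an interpretation of the wider predicate of $\chcSymb{\P}{k}$ that simply ignores the new arguments, i.e.\ read $\Inv(g,\id{1},\pc{1},\sleep{1},x_1,\ldots)$ as $\Phi_\Inv(g,\pc{1},x_1,\ldots)$.

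With this lifting fixed, the inclusion becomes a clause-by-clause monotonicity argument. Under the lifted interpretation every $\Inv$-atom evaluates exactly as its unpadded counterpart in $\chcTm{\P}{k}$, since $\Phi_\Inv$ does not mention $\id{}$ or $\sleep{}$; in particular the mismatch in the \textbf{Inductivity} conclusion between the unprimed $\sleep{i}$ and the primed $\sleep{j}'$ is immaterial. Hence every clause of $\chcSymb{\P}{k}$ has the same conclusion and the same non-red premises as its counterpart, with only extra conjuncts added to the antecedent. Since strengthening the antecedent of a valid implication preserves validity, and $\Phi_\Inv$ validates all clauses of $\chcTm{\P}{k}$, it validates all clauses of $\chcSymb{\P}{k}$. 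The one clause needing slightly more care is \textbf{Non-Interference}, which carries several $\Inv$-premises obtained by substituting the interfering thread $\star$ for each index; I would check that each of these premises also reduces under the lifting to the corresponding premise of the original \textbf{Non-Interference} clause, so that the same monotonicity reasoning goes through.

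For the strict-separation half I would exhibit a concrete witness. Take $\P=\P^\pm$ and any width $k$. By \cref{ex:inc-dec-red-ashcroft}, $\P^\pm$ has no safe Ashcroft invariant of any width, so the contrapositive of \cref{lem:chc-ashcroft} gives that $\chcTm{\P^\pm}{k}$ is unsatisfiable. Conversely, \cref{fig:inc-dec-sleep-ashcroft} exhibits a safe Ashcroft invariant of width $2$ for $\sleepInstr{\P^\pm}$, and by completeness of the encoding (the statement that unsatisfiability of the CHC system would force the absence of such an invariant) the system $\chcSymb{\P^\pm}{2}$ is satisfiable. Thus for $k=2$ the first system has no solution while the second does. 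I do not anticipate a genuine obstacle: the first half is a routine antecedent-strengthening argument once the lifting is explicit, and the only place demanding attention is keeping straight which direction of the CHC/Ashcroft correspondence is invoked in the second half---soundness of the encoding (\cref{lem:chc-ashcroft}) to derive unsatisfiability of $\chcTm{\P^\pm}{k}$, and completeness of the encoding to derive satisfiability of $\chcSymb{\P^\pm}{2}$.
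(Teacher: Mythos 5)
Your proposal is correct and matches the paper's (implicit) justification: the paper presents this as an observation whose first half follows exactly from your antecedent-strengthening argument — the red-highlighted additions in \cref{fig:chc-symbolic} only pad the $\Inv$-atoms with variables a lifted solution ignores and add conjuncts to antecedents — and whose second half is witnessed by the running example $\P^\pm$, for which no safe Ashcroft invariant exists (so $\chcTm{\P^\pm}{k}$ is unsatisfiable by \cref{lem:chc-ashcroft}) while \cref{fig:inc-dec-sleep-ashcroft} yields a solution of $\chcSymb{\P^\pm}{2}$.
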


\subsection{Breaking Symmetry with the Explicit-Sleep Encoding}
\label{sec:break-symm}

Despite \cref{obs:conservative-chc-symb},
it is not clear that a CHC solver will be faster to find a solution when applied to the symbolic-sleep encoding $\chcSymb{\P}{k}$
than when applied to the encoding $\chcTm{\P}{k}$.
In order to gain a systematic understanding of how easy or difficult it is for a CHC solver to find a solution,
let us introduce the notions of \emph{search} and \emph{solution space}.

\begin{definition}[Search and Solution Space]
  Let $\mathcal{C}$ be a CHC system over a single predicate symbol $p(v_1,\ldots,v_m)$ of arity $m$.
  The search space $\search{\mathcal{C}}$ of possible solutions to $\mathcal{C}$
  is the set of all first-order formulae $\Phi_p(v_1,\ldots,v_m)$ whose free variables lie in $\{\,v_1,\ldots,v_m\,\}$.

  The solution space $\sol{\mathcal{C}}$ denotes the subset of the search space $\search{\mathcal{C}}$ containing exactly all those predicates $\Phi_p(v_1,\ldots,v_m)$ that satisfy the given CHC system $\mathcal{C}$.
\end{definition}

A larger solution space means that a solver is more likely to find a satisfying solution to a CHC system,
whereas a larger search space is indicative of potential additional effort to rule out other predicates.
In particular, while sleep instrumentation does somewhat increase the search space (it introduces new variables),
it leads to a significantly and \emph{qualitatively} larger solution space:
Most importantly, for some programs, $\sol{\chcTm{\P}{k}}$ is empty while $\sol{\chcSymb{\P}{k}}$ is not.

Our evaluation (\cref{sec:eval}) shows that for some programs which can be proven without reduction,
we observe a notable overhead for the instrumented version, due to the increased search space.
To minimize this overhead, we further improve upon the CHC system $\chcSymb{\P}{k}$,
by \emph{decreasing} the search space.
The improvement is quantitative, i.e., does not increase the expressivity of the approach,
but rather serves to allow CHC solvers to find a solution faster.

We observe that solutions to the symbolic-sleep encoding $\chcSymb{\P}{k}$ typically include redundant information due to symmetry:
Since the ordering of threads expressed by the thread IDs is nondeterministic,
solutions often need to make case distinctions covering all possible orderings.
Intuitively, we force the solver to prove correctness of a symmetry-equivalence class of reductions.
To illustrate this, consider again the Ashcroft invariant in \cref{fig:inc-dec-sleep-ashcroft}.
In the last two conjuncts, the invariant makes a case distinction over the ordering of thread IDs.
While this is a small example, and the Ashcroft invariant in \cref{fig:inc-dec-sleep-ashcroft} is still relatively simple,
in general (for larger programs, and larger $k$), such case distinctions can result in a factorial (in $k$) number of \emph{symmetric} conjuncts.

In order to avoid paying this additional cost,
we take advantage of the symmetry between threads.
\emph{Symmetry reductions}~\cite{clarke1998symmetry} have been widely used for parameterized systems
to reduce the search space of analyses.
The idea behind symmetry reductions is closely connected to our observations:
Instead of na\"ively enumerating all possible cases of a nondeterministically chosen order,
and recovering the same (or rather, symmetric) results for each case,
one focuses on a single fixed order.

In our case, we fix the ordering of the $k$ threads considered by the CHC predicate symbol $\Inv(g,\id{1},\pc{1},x_1,\ldots,\id{k}, \pc{k}, x_k)$
such that we always have $\id{1} < \ldots < \id{k}$.
This allows us to simplify the CHC system.
In particular, it allows us to resolve the comparisons of thread IDs in the preference test (\cref{def:pref-test}) statically.
\begin{definition}[Explicit-sleep Preference Test]
  For $i\in\{1,\ldots,k\}$, $j\in\{1,\ldots,k,\star\}$ and ${j'\in\{1,\ldots,k\}}$,
  we define the \emph{explicit-sleep preference test} as the formula
\[
  \prefCond{i}{j}{j'} :\equiv \begin{cases}
    \langle \pc{i},\pc{j} \rangle\in R & \textbf{if } i \leq j'\\
    \langle \pc{i},\pc{j} \rangle\in R  \land \lnot\langle \pc{j},\pc{i}\rangle \notin R & \textbf{otherwise}
  \end{cases}
\]
\end{definition}

Recall that $R$ is a total and transitive relation which induces a pairwise preference order (\cref{def:spo}).
We pass two parameters $j,j'$ for the second thread, in order to account for the arbitrary ordering (represented by $j'$) of the interfering thread (represented by $j=\star$)
wrt.\ to the $k$ other threads (represented by $i$).
I.e., we could have $\id{\star} < \id{1}$ (i.e., $j'=1$), or $\id{1}<\id{\star}<\id{2}$ (i.e., $j'=2$), \ldots, or $\id{k} < \id{\star}$ (i.e., $j'=k+1$).
In order to cover all cases,
we introduce one non-interference clause for each of these $k+1$ possible orderings.
Thanks to this explicit case distinction,
all comparisons between thread IDs are resolved statically.
Consequently, we eliminate the thread IDs from the CHC system completely.

Furthermore, we must take care to reorder the variables in the second line of \cref{eq:symbolic-nonint},
such that we preserve the assumption that the threads to whose variables the predicate symbol $\Inv$ is applied are in increasing order of their ID.
To this end, we define a permutation:
For $i\in\{1,\ldots,k+1\}$, $r\in\{1,\ldots,k\}$,
let $\sigma_i^r:\{1,\ldots,k\}\to\{1,\ldots,k,\star\}\setminus\{r\}$ be the bijective mapping such that
\begin{itemize}
\item for all $j_1\neq j_2$ with $\sigma_i^r(j_1)\neq\star$ and $\sigma_i^r(j_2)\neq \star$, we have $j_1 < j_2 \iff \sigma_i^r(j_1) < \sigma_i^r(j_2)$, and
\item for all $j_1\neq j_2$ with $\sigma_i^r(j_1) = \star$ and $\sigma_i^r(j_2) \neq \star$, we have $j_1 < j_2 \iff i \leq j_2$.
\end{itemize}
Intuitively, $\sigma_i^r$ describes the sequence $1,2,\ldots,k$, where we insert $\star$ at position $i$ and then delete the number $r$ from the sequence.
The index $i$ here represents the preference ordering of the interfering thread $\star$ wrt.\ the other threads $1,\ldots,k$;
index $r$ represents the thread that is replaced by $\star$.

\begin{figure}
  \footnotesize
  \begin{flalign}
    \intertext{\parbox{\linewidth}{{\em\bfseries Initial}:}}
    &\begin{aligned}[t]
      &\quad\left({\textstyle\bigwedge_{i=1}^k\mathit{pre}(g,x_1)}\right)
      \land \left( {\textstyle\bigwedge_{i=1}^k \pc{i} = \ell_\init} \right)
      \land {\textstyle\left( \bigwedge_{i=1}^k \lnot\sleep{i} \right)}\\
      &\quad\quad\quad\to \Inv(g,\pc{1},\sleep{1},x_1,\ldots,\pc{k},\sleep{k},x_k)
    \end{aligned}
    \\
    \intertext{\parbox{\linewidth}{{\em\bfseries Inductivity} (for each edge $\langle\ell,\st,\ell'\rangle \in \Delta$ and each $i\in\{1,\ldots,k\}$):}}
    &\begin{aligned}[t]
      &\quad\Inv(g,\pc{1},\sleep{1},x_1,\ldots,\pc{k},\sleep{k},x_k)
        \land \pc{i} = \ell \land \pc{i}' = \ell'
        \land \st(g,x_i,g',x_i')\\
      &\quad\quad \land \lnot\sleep{i}
        \land {\textstyle \bigwedge_{j\neq i} \left( \sleep{j}' \leftrightarrow (\sleep{j} \lor {\color{red}\prefCond{j}{i}{i}}) \land  \commCond{j}{i}{\st} \right)}
        \\
      &\quad\quad\quad \to \Inv(g',\pc{1},\sleep{1}',x_1,\ldots,\pc{i}',\sleep{i},x_i',\ldots,\pc{k},\sleep{k},x_k)
    \end{aligned}\label{eq:explicit-ind}
    \\
    \intertext{\parbox{\linewidth}{{\em\bfseries Non-Interference} (for each edge $\langle\ell,\st,\ell'\rangle\in\Delta$ {\color{red}and each $i\in\{1,\ldots,k+1\}$}):}}
    &\begin{aligned}[t]
      &\quad\Inv(g,\pc{1},\sleep{1},x_1,\ldots,\pc{k},\sleep{k},x_k)\\
      &\quad\quad \land \big({\textstyle \bigwedge_{r=1}^k \Inv(g,\pc{\color{red}\sigma_i^r(1)},\sleep{\color{red}\sigma_i^r(1)},x_{\color{red}\sigma_i^r(1)}, \ldots, \pc{\color{red}\sigma_i^r(k)},\sleep{\color{red}\sigma_i^r(k)},x_{\color{red}\sigma_i^r(k)})}\big)\\
      &\quad\quad \land \pc{\star} = \ell \land \pc{\star}' = \ell' \land \st(g,x_\star,g',x_\star')\\
      &\quad\quad \land \lnot\sleep{\star}
        \land {\textstyle \bigwedge_{j \neq i} \left( \sleep{j}' \leftrightarrow (\sleep{j} \lor {\color{red}\prefCond{j}{\star}{i}}) \land \commCond{j}{\star}{\st} \right)}
        \\
      &\quad\quad\quad \to \Inv(g',\pc{1},\sleep{1}',x_1,\ldots,\pc{k},\sleep{k}',x_k)
    \end{aligned}\label{eq:explicit-nonint}
    \\
    \intertext{\parbox{\linewidth}{{\em\bfseries Safety} (for each $i\in\{1,\ldots,k\}$ and $\ell\in\Loc$ where $\mathit{assert}(\ell)$ is defined):}}
    &\quad\Inv(g,\pc{1},\sleep{1},x_1,\ldots,\pc{k},\sleep{k},x_k) \land \pc{i} = \ell \land \lnot\mathit{assert}(\ell) \to \false
  \end{flalign}
  \caption{%
    \emph{Explicit-sleep} CHC encoding $\chcExpl{\P}{k}$ for the existence of a safe Ashcroft invariant of width $k$ for the sleep-instrumented program $\sleepInstr{\P}$.
    The encoding does not include the $\id{}$ variables.
    Further differences to the symbolic-sleep encoding are highlighted in red.
  }
  \label{fig:chc-explicit}
\end{figure}
\Cref{fig:chc-explicit} shows the resulting \emph{explicit-sleep} CHC encoding $\chcExpl{\P}{k}$, for a program $\P$ and width $k$.
Note that the explicit-sleep encoding receives the original program $\P$ as input;
the sleep instrumentation is performed as part of the encoding.
Nevertheless, we semantically connect this encoding to the sleep-instrumented program $\sleepInstr{\P}$,
in a manner analogous to \cref{lem:chc-ashcroft}.

\begin{proposition}[Explicit-Sleep Soundness]
  \label{prop:explicit-sound}
  Let $\Psi_\Inv$ be a solution to $\chcExpl{\P}{k}$.
  Then
  \[
    \forall i_1,\ldots,i_k\,.\, \id{i_1} < \ldots < \id{i_k} \to \Psi_\Inv(g,\pc{i_1},\sleep{i_1}, x_{i_1},\ldots,\pc{i_k},\sleep{i_k}, x_{i_k})
  \]
  is a safe Ashcroft invariant (of width $k$) for $\sleepInstr{\P}$.
\end{proposition}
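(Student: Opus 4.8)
The goal is to show that a solution $\Psi_\Inv$ to the explicit-sleep encoding $\chcExpl{\P}{k}$ yields a safe Ashcroft invariant for the sleep-instrumented program $\sleepInstr{\P}$, where the quantified threads are ordered by ID. The natural strategy is to reduce this to \cref{lem:chc-ashcroft} by exhibiting a solution to the \emph{symbolic-sleep} encoding $\chcSymb{\P}{k}$ built out of $\Psi_\Inv$. Concretely, I would define a candidate solution $\Phi_\Inv$ to $\chcSymb{\P}{k}$ by symmetrizing $\Psi_\Inv$ over thread orderings: given the arguments $(g,\id{1},\pc{1},\sleep{1},x_1,\ldots,\id{k},\pc{k},\sleep{k},x_k)$, let $\rho$ be the unique permutation of $\{1,\ldots,k\}$ sorting the IDs so that $\id{\rho(1)} < \ldots < \id{\rho(k)}$ (well-defined by uniqueness of thread IDs), and set $\Phi_\Inv := \Psi_\Inv$ applied to the $\rho$-reordered arguments (dropping the $\id{}$ slots, which $\Psi_\Inv$ does not mention). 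The claim then splits into two parts: (i) $\Phi_\Inv$ solves $\chcSymb{\P}{k}$, and (ii) the Ashcroft invariant of \cref{lem:chc-ashcroft} built from $\Phi_\Inv$ is logically equivalent to the ID-sorted invariant in the statement. Part (ii) is essentially bookkeeping: quantifying over all distinct $i_1,\ldots,i_k$ with the symmetrized predicate is the same as quantifying over ID-increasing tuples with $\Psi_\Inv$, because every distinct tuple is a unique reordering of an ID-sorted one.

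\textbf{Verifying the clauses.} The core work is part (i): checking that $\Phi_\Inv$ satisfies each clause of $\chcSymb{\P}{k}$ given that $\Psi_\Inv$ satisfies each clause of $\chcExpl{\P}{k}$. The \emph{Initial} and \emph{Safety} clauses are straightforward, since reordering distinct symbolic threads and dropping the nondeterministic $\id{}$ constraints changes nothing essential (the $\id{i}\neq\id{j}$ premises in the symbolic \emph{Initial} clause are exactly what licenses the sorting permutation $\rho$). The \emph{Inductivity} clauses match up after applying $\rho$, using that the preference test $\prefTest{j}{i}$ evaluates, once IDs are sorted, precisely to the explicit-sleep preference test $\prefCond{j}{i}{i'}$ with $i'$ the rank of $i$ — this is the content of \cref{def:pref-test} versus \cref{def:pref-test}'s explicit case split, where $\id{i}\leq\id{j}$ is resolved statically to the comparison of ranks.

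\textbf{The main obstacle.} The delicate step is the \emph{Non-Interference} clause, where the interfering thread $\star$ must be inserted into the ID-sorted sequence at its correct rank. In $\chcSymb{\P}{k}$ the $k$ auxiliary $\Inv$-premises are obtained by substituting $\star$ for each $i_r$ in turn, with $\star$'s ID unconstrained; in $\chcExpl{\P}{k}$ these become the permutation $\sigma_i^r$ premises, with a separate clause for each possible rank $i\in\{1,\ldots,k+1\}$ of $\star$. I would argue that, for a given valuation in the symbolic clause, the actual ID $\id{\star}$ determines a unique rank $i$, and that under this $i$ the symbolic premises (after sorting each of the $k$ argument tuples by ID) coincide exactly with the $\sigma_i^r$-reordered premises of the corresponding explicit clause; simultaneously $\prefTest{j}{\star}$ collapses to $\prefCond{j}{\star}{i}$. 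The bijection $\sigma_i^r$ of \cref{def:pref-test} was engineered precisely to realize ``insert $\star$ at rank $i$, delete $r$,'' so the matching is definitional once one checks the two defining properties of $\sigma_i^r$ against the sorting behavior of $\rho$. The hard part is purely this index-chasing: confirming that the per-rank case split over $i$ in the explicit encoding soundly covers every possible ID of $\star$ in the symbolic encoding, and that the reordering of $\Inv$-arguments is consistent across all $k$ auxiliary premises. Once this correspondence is established clause-by-clause, \cref{lem:chc-ashcroft} applied to $\Phi_\Inv$ together with part (ii) delivers the claimed safe Ashcroft invariant for $\sleepInstr{\P}$.
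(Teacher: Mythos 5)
Your overall strategy is sound and non-circular: symmetrize $\Psi_\Inv$ into a candidate $\Phi_\Inv$ for the symbolic-sleep system, verify the clauses, invoke \cref{lem:chc-ashcroft} for $\chcSymb{\P}{k}=\chcTm{\sleepInstr{\P}}{k}$, and translate the resulting invariant back into the ID-sorted form. This is a different decomposition from the paper's: the paper presents \cref{prop:explicit-sound} as a direct semantic connection between the explicit-sleep encoding and $\sleepInstr{\P}$, stated ``in a manner analogous to \cref{lem:chc-ashcroft}'', and the symmetrization construction you rely on appears only afterwards, in the proof idea of \cref{prop:equisat-symbolic-explicit}. Your route in effect proves one direction of \cref{prop:equisat-symbolic-explicit} first and obtains \cref{prop:explicit-sound} as a corollary; what it buys is that inductiveness and safety of the invariant are inherited from \cref{lem:chc-ashcroft} rather than re-derived, at the price of a full clause-by-clause correspondence between the two encodings.

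The gap sits exactly where you locate the difficulty but then declare it ``definitional.'' In the symbolic Non-Interference clause, the ID of the interfering thread $\star$ is an unconstrained variable: no single premise asserts that $\id{\star}$ differs from all of $\id{1},\ldots,\id{k}$, so your step ``the actual ID $\id{\star}$ determines a unique rank $i$'' is not justified as stated. Each $\star$-premise $\Inv(\ldots)$ (with thread $r$ replaced by $\star$) only forces, via the sorting disjuncts of $\Phi_\Inv$, that $\id{\star}$ differs from the $k{-}1$ \emph{other} tracked IDs, so for $k\geq 2$ distinctness must be assembled from at least two different premises; and when $\id{\star}$ does collide with some tracked $\id{j}$, there is no rank $i$ for which the tests $\prefCond{\cdot}{\star}{i}$ match $\prefTest{\cdot}{\star}$ position-for-position, because $\prefTest{j}{\star}$ resolves the tie $\id{j}=\id{\star}$ toward its non-strict branch. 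The repair requires an argument absent from your sketch: for $k\geq 2$, a collision makes every $\star$-premise with $r\neq j$ contain two equal IDs, hence false under the symmetrized $\Phi_\Inv$, so the symbolic clause holds vacuously; for $k=1$ no premise constrains $\id{\star}$ at all, and the collision case must instead be matched against the explicit clause whose non-strict test agrees with the tie-breaking of $\prefTest{1}{\star}$. None of this is fatal --- the plan goes through once these cases are handled --- but it is the actual content of the proof rather than index bookkeeping. (A smaller shared caveat: your final equivalence between the distinct-premise invariant of \cref{lem:chc-ashcroft} and the ID-sorted invariant in the statement itself uses pairwise distinctness of thread IDs, which is a semantic assumption of the instrumentation, not a tautology; the proposition as stated in the paper relies on it in the same way.)
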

\goodbreak

\begin{corollary}
  \label{corr:explicit-sound}
  If the explicit-sleep encoding $\chcExpl{\P}{k}$ is satisfiable, the program $\P$ is correct.
\end{corollary}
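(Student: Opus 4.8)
The plan is to chain the immediately preceding \cref{prop:explicit-sound} with the soundness of sleep instrumentation (\cref{thm:instr-sound-red}), mirroring the proof of \cref{prop:symbolic-sound}. Concretely, I would unfold the satisfiability hypothesis into an explicit solution, pass through the sleep-instrumented program, and then propagate correctness back to $\P$.

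First I would use satisfiability of $\chcExpl{\P}{k}$ to fix some solution $\Psi_\Inv$. By \cref{prop:explicit-sound}, the formula
\[
  \forall i_1,\ldots,i_k\,.\, \id{i_1} < \ldots < \id{i_k} \to \Psi_\Inv(g,\pc{i_1},\sleep{i_1}, x_{i_1},\ldots,\pc{i_k},\sleep{i_k}, x_{i_k})
\]
is a safe Ashcroft invariant of width $k$ for the sleep-instrumented program $\sleepInstr{\P}$. By the defining property of safe Ashcroft invariants established earlier (the existence of a safe Ashcroft invariant implies the program satisfies its specification), applied to $\sleepInstr{\P}$, I conclude that $\sleepInstr{\P}$ is correct. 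Finally, \cref{thm:instr-sound-red} states that $\sleepInstr{\P}$ is correct iff $\P$ is correct, so $\P$ is correct.

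Because all substantive work is already carried out in \cref{prop:explicit-sound}, there is essentially no obstacle at the level of the corollary itself. The one point worth double-checking is that the invariant produced by \cref{prop:explicit-sound} carries the strict premise $\id{i_1} < \ldots < \id{i_k}$ rather than the general pairwise-distinctness premise $\bigwedge_{r<s} i_r \neq i_s$ from the definition of Ashcroft invariants; I would confirm this still qualifies as a safe Ashcroft invariant, which holds because thread IDs are nondeterministically initialized and pairwise distinct, so any $k$ distinct threads admit a relabeling placing their IDs in strictly increasing order. Since this is precisely the symmetry-breaking argument underlying the explicit-sleep encoding, it is already absorbed into \cref{prop:explicit-sound}, and the corollary follows by a one-line composition.
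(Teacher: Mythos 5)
Your proposal is correct and matches the paper's own proof, which likewise obtains the corollary by chaining \cref{prop:explicit-sound} (yielding a safe Ashcroft invariant for $\sleepInstr{\P}$, hence its correctness) with \cref{thm:instr-sound-red} to transfer correctness back to $\P$. Your extra remark about the strict-inequality premise $\id{i_1} < \ldots < \id{i_k}$ versus pairwise distinctness is a reasonable sanity check but, as you note, is already absorbed into \cref{prop:explicit-sound}.
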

\begin{inlineproof}
  Follows from \cref{prop:explicit-sound,thm:instr-sound-red}.
\end{inlineproof}

The following proposition states that in a certain sense, the symbolic-sleep encoding and the explicit-sleep encoding are equivalent.
Consequently, the explicit-sleep encoding still encodes the existence of an Ashcroft invariant of width $k$ for the sleep-instrumented program.

\begin{proposition}[Equisatisfiability]
  \label{prop:equisat-symbolic-explicit}
  The explicit-sleep encoding $\chcExpl{\P}{k}$ is satisfiable iff the symbolic-sleep encoding $\chcSymb{\P}{k}$ is satisfiable.
\end{proposition}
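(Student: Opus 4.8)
The plan is to prove both implications by exhibiting solution-preserving translations between the two CHC systems, exploiting the fact that the two encodings differ \emph{only} in how they treat the thread identifiers. In $\chcSymb{\P}{k}$ the identifiers $\id{1},\ldots,\id{k}$ (and $\id{\star}$) are carried as arguments of $\Inv$ and compared dynamically inside the preference test $\prefTest{\cdot}{\cdot}$; in $\chcExpl{\P}{k}$ they are eliminated, the canonical order $\id{1}<\cdots<\id{k}$ is assumed, and every comparison is resolved statically through $\prefCond{\cdot}{\cdot}{\cdot}$ and the case index $i$. The key structural observation I would record first is that the symbolic clauses mention the identifiers only through the $\leq$-comparisons in $\prefTest{\cdot}{\cdot}$ and the distinctness conjunct in \textbf{Initial}; hence the validity of a symbolic clause depends only on the relative \emph{order} of the identifiers, never on their concrete values. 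This is what legitimizes renaming identifiers to canonical ranks in one direction and sorting by identifier in the other.

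\emph{From symbolic to explicit.} Given a solution $\Phi_\Inv$ of $\chcSymb{\P}{k}$, I would define $\Psi_\Inv$ by substituting the canonical identifiers $\id{r}:=r$, i.e.\ projecting onto the order $\id{1}<\cdots<\id{k}$, and dropping the identifier arguments. The enabling lemma is the identity $\prefCond{j}{i}{i}\equiv\prefTest{j}{i}$, valid whenever $\id{1}<\cdots<\id{k}$: for $j\le i$ both sides reduce to $\langle\pc{j},\pc{i}\rangle\in R$, and for $j>i$ both add the conjunct $\langle\pc{i},\pc{j}\rangle\notin R$. With this, \textbf{Initial}, \textbf{Inductivity} and \textbf{Safety} transfer clause-by-clause, since the executing thread in \textbf{Inductivity} keeps its rank $i$. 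For \textbf{Non-Interference} I would instantiate the single symbolic clause at an identifier assignment that places $\star$ at rank $i$; the $k$ symbolic premises (where $\star$ replaces thread $r$) then list their threads in increasing-identifier order exactly as prescribed by $\sigma_i^r$, and $\prefTest{j}{\star}$ collapses to $\prefCond{j}{\star}{i}$. Running $i$ over $\{1,\ldots,k+1\}$ covers every possible rank of the interferer, which is precisely why the explicit encoding needs $k+1$ non-interference clauses.

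\emph{From explicit to symbolic.} Given a solution $\Psi_\Inv$ of $\chcExpl{\P}{k}$, I would \emph{symmetrize} it by sorting the $k$ threads by identifier:
\[
  \Phi_\Inv \;:=\; \bigvee_{\rho\in\perm{k}}\Big( \big(\id{\rho(1)}<\cdots<\id{\rho(k)}\big) \;\land\; \Psi_\Inv\big( g,\pc{\rho(1)},\sleep{\rho(1)},x_{\rho(1)},\ldots,\pc{\rho(k)},\sleep{\rho(k)},x_{\rho(k)} \big) \Big).
\]
By distinctness of identifiers exactly one disjunct is active, so $\Phi_\Inv$ equals $\Psi_\Inv$ applied to the identifier-sorted tuple and is invariant under any simultaneous permutation of the thread data together with their identifiers. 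Each symbolic clause can then be discharged ``without loss of generality in the canonical order'': after relabeling via the active $\rho$ we may assume $\id{1}<\cdots<\id{k}$, whereupon $\Phi_\Inv$ reduces to $\Psi_\Inv$, the identity $\prefTest{\cdot}{\cdot}\equiv\prefCond{\cdot}{\cdot}{\cdot}$ applies as above, and the clause becomes the corresponding explicit clause. In \textbf{Non-Interference} the rank at which $\star$'s identifier lands selects the explicit case index $i$, and the sorting of $\{1,\ldots,k,\star\}\setminus\{r\}$ produced by the construction is exactly $\sigma_i^r$.

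\emph{Main obstacle.} The routine parts are \textbf{Initial}, \textbf{Inductivity} and \textbf{Safety} together with the preference-test identity $\prefCond{j}{i}{i}\equiv\prefTest{j}{i}$, whose verification is a two-case check. The real work is the \textbf{Non-Interference} clause: I must match the \emph{single} symbolic clause with an arbitrary interferer against the \emph{$k+1$} explicit rank-cases, and check that the permutation $\sigma_i^r$ simultaneously (i) reorders the $k$ sub-invariant premises into increasing-identifier order and (ii) keeps the sleep-update preference tests $\prefCond{j}{\star}{i}$ consistent with $\prefTest{j}{\star}$ for the chosen rank. The one genuinely delicate point is that a raw symbolic solution need not a priori be insensitive to concrete identifier values; this is exactly what the order-only dependence of the clauses (used in the symbolic-to-explicit direction) and the sorting-based symmetrization of $\Phi_\Inv$ (used in the explicit-to-symbolic direction) are designed to neutralize. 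I expect all of this to be bookkeeping once the preference-test identity and the semantics of $\sigma_i^r$ are pinned down, so the proposition should follow without any new conceptual ingredient.
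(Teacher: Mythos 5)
Your explicit-to-symbolic direction coincides with the paper's construction: the disjunction over $\rho\in\perm{k}$ of sorted-identifier guards conjoined with $\Psi_\Inv$ applied to the $\rho$-permuted argument blocks is exactly the formula the paper gives, and your discharge of the clauses (the active disjunct picks out the rank of $\star$, which selects the case index and the permutation $\sigma_i^r$) is the intended argument. The deviation is in the symbolic-to-explicit direction: the paper defines $\Psi_\Inv$ by \emph{existentially quantifying} sorted identifiers, $\Psi_\Inv:\equiv\exists \id{1},\ldots,\id{k}\,.\,\big(\id{1}<\cdots<\id{k}\land\Phi_\Inv(\ldots)\big)$, whereas you freeze them to the concrete tuple $\id{r}:=r$. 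Your preference-test identity $\prefCond{j}{i}{i}\equiv\prefTest{j}{i}$ under sortedness is correct, and with it \textbf{Initial}, \textbf{Inductivity} and \textbf{Safety} do go through for your definition, since in those clauses every $\Inv$-atom carries one and the same identifier tuple. The gap is in \textbf{Non-Interference}.

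Two concrete failures there. First, your pivotal claim that, after instantiating the symbolic clause at an assignment placing $\star$ at rank $i$, the $k$ symbolic premises ``list their threads in increasing-identifier order exactly as prescribed by $\sigma_i^r$'' is false: in $\chcSymb{\P}{k}$ the premise in which $\star$ replaces thread $r$ carries $\star$'s data in \emph{argument position} $r$ (no re-sorting occurs), while the explicit premise for $r$ carries $\star$'s data at its identifier \emph{rank} $i$ with the remaining threads shifted --- that re-sorting is the entire purpose of $\sigma_i^r$. The two argument orders agree only when $i\in\{r,r+1\}$; in all other cases the atoms apply $\Phi_\Inv$ to genuinely different argument tuples, and nothing in $\chcSymb{\P}{k}$ forces an arbitrary solution to be invariant under permuting its argument blocks, so the explicit premises simply do not yield the symbolic ones. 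Second, having pinned $\id{r}=r$, the instantiation you appeal to does not even exist for intermediate ranks: realizing rank $i\in\{2,\ldots,k\}$ requires an integer $\id{\star}$ with $i-1<\id{\star}<i$. The paper's existential definition avoids this rigidity, and the repair of your direction needs more than switching to it: one must first be entitled to assume the symbolic solution is permutation-invariant, e.g.\ by proving that $\bigwedge_{\rho\in\perm{k}}\Phi_\Inv(g,\id{\rho(1)},\pc{\rho(1)},\sleep{\rho(1)},x_{\rho(1)},\ldots,\id{\rho(k)},\pc{\rho(k)},\sleep{\rho(k)},x_{\rho(k)})$ is again a solution (Horn solution sets are closed under conjunction, and the clause set is closed under these permutations). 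Your appeal to the ``order-only dependence'' of the clauses constrains the \emph{clauses}, not the \emph{solution}, and cannot substitute for such a symmetrization step; as written, this direction of your proof does not go through.
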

\begin{inlineproof}[Proof idea]
  If $\Phi_\Inv$ is a solution for the symbolic-sleep encoding,
  then
  \begin{multline*}
    \Psi_\Inv(g,\pc{1},\sleep{1},x_1,\ldots,\pc{k},\sleep{k},x_k)\\
    :\equiv \exists \id{1},\ldots,\id{k}\,.\, \big( \id{1}<\ldots<\id{k}
    \land \Phi_\Inv(g,\id{1},\pc{1},\sleep{1},x_1,\ldots,\id{k},\pc{k},\sleep{k},x_k) \big)\end{multline*}
  is a solution for the explicit-sleep encoding.
  If $\Psi_\Inv$ is a solution for the explicit-sleep encoding,
  then
  \begin{multline*}
    \Phi_\Inv(g,\id{1},\pc{1},\sleep{1},x_1,\ldots,\id{k},\pc{k},\sleep{k},x_k) \\
    :\equiv \bigvee_{\sigma\in\mathcal{S}_k} \big( \id{\sigma(1)} < \ldots < \id{\sigma(k)}
    \land \Psi_\Inv(g,\pc{\sigma(1)},\sleep{\sigma(1)},x_{\sigma(1)},\ldots,\pc{\sigma(k)},\sleep{\sigma(k)},x_{\sigma(k)} \big)
  \end{multline*}
  is a solution for the symbolic-sleep encoding, where $\mathcal{S}_k$ denotes the set of all permutations over the set $\{1,\ldots,k\}$.
\end{inlineproof}

The factorial explosion inherent in the case distinction over all permutations of threads
is precisely the cost we seek to avoid through the explicit-sleep encoding.
Because the explicit-sleep encoding does not use variables for the thread IDs,
the search space $\search{\chcExpl{\P}{k}}$ for the explicit-sleep encoding
is a strict subset of the search space $\search{\chcSymb{\P}{k}}$ for the symbolic-sleep encoding.
The above proposition clarifies that we neither lose expressivity, nor do we gain \emph{qualitative} proof simplification,
i.e., the solution space $\sol{\chcSymb{\P}{k}}$ is empty if and only if the solution space $\sol{\chcExpl{\P}{k}}$ is empty.
Beyond that, the solution spaces are difficult to compare, as solutions range over different sets of variables.
However, symmetry reduction has been shown to be practically beneficial in many settings~\cite{clarke1998symmetry}.
And indeed, \cref{sec:eval} confirms empirically that the explicit-sleep encoding has significant practical benefit over the symbolic-sleep encoding
when using state-of-the-art CHC solvers.

\subsection{Inductive Invariants of Reduction Families}
\label{sec:red-ashcroft}

\Cref{prop:minimal} states that the sleep-instrumented program represents a family of \emph{minimal} reductions:
Every equivalence class of traces is represented by a single trace in the reduction;
if that representative is removed, the remaining set of traces is no longer a reduction.
The intention is to not burden the verification with the proof of any redundant traces.

However, this ``minimality'' refers to the family of \emph{infinite-state} programs $\sleepInstr{\P}(n)_{n\in\N}$.
When we fix a notion of \emph{finite} proofs for the parameterized program $\sleepInstr{\P}$,
we are settled with a particular expressiveness to describe this infinite family of programs.
It is not clear a priori that a certain kind of proof is expressive enough to fully benefit from the minimality of the reduction.
And in fact, if we consider Ashcroft invariants, we observe that the expressiveness Ashcroft invariants gain through sleep manifestation depends crucially on the invariants' width.
Specifically, for Ashcroft invariants of width 1, no expressivity is gained through the reduction.
\begin{proposition}[Collapse at width 1]
  Suppose there exists an Ashcroft invariant of width 1 for the sleep-instrumented program $\sleepInstr{\P}$.
  Then there also exists an Ashcroft invariant of width 1 for the original program $\P$.
\end{proposition}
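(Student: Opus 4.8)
The plan is to work with the semantic, fixpoint characterization of width-1 proofs rather than manipulating the invariant syntactically. For a parameterized program $Q$, let $\mathcal{T}_Q$ denote the set of single-thread states $\langle g,\ell,v\rangle$ (global valuation $g$, location $\ell$, local valuation $v$, together with the extra components $\id{}$ and $\sleep{}$ when $Q=\sleepInstr{\P}$) reachable in the thread-modular abstraction of width~$1$: the least set closed under the initial states, own steps, and interference steps. A width-1 Ashcroft invariant is exactly an inductive over-approximation of $\mathcal{T}_Q$ (the {\em\bfseries Initial}/{\em\bfseries Inductivity}/{\em\bfseries Non-Interference} clauses define precisely this closure), so $\mathcal{T}_Q$ is the strongest width-1 invariant, and $Q$ admits a \emph{safe} width-1 Ashcroft invariant iff $\mathcal{T}_Q$ contains no state violating an assertion. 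Given the hypothesized safe invariant $\psi$ for $\sleepInstr{\P}$, its states contain $\mathcal{T}_{\sleepInstr{\P}}$, so $\mathcal{T}_{\sleepInstr{\P}}$ is safe; the goal thus reduces to showing that $\mathcal{T}_{\P}$ is safe. The witnessing invariant for $\P$ is then its strongest width-1 invariant, equivalently the projection $\exists\id{},\sleep{}\,.\,\psi$ when $\psi$ is taken strongest.

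First I would dispatch the easy obligations and identify the hard one. Safety and initiality transfer immediately: assertions mention neither $\id{}$ nor $\sleep{}$, so $\psi$'s safety clause rules out violating states for every value of $\sleep{}$, and initial states lift with $\sleep{}=\false$. The core claim is the inclusion $\mathcal{T}_{\P}\subseteq\pi(\mathcal{T}_{\sleepInstr{\P}})$, which I would prove by induction on the derivation of membership in $\mathcal{T}_{\P}$. Because $\pi$ forgets the sleep bit, an interference step is harmless whenever the \emph{interfering} thread lifts to an \emph{awake} witness: the interfered thread may get put to sleep, but the resulting instrumented state still projects to the required state of $\P$. Likewise an own step is immediate from an awake witness of the mover. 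Using the preference test, the acting thread can always be kept awake in the cases where $\langle\ell_i,\ell_j\rangle\notin R$, or both directions lie in $R$ (there one chooses the IDs so that $\prefTest{j}{i}$ fails).

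The one genuine obstacle is the case in which the \emph{acting} thread is available from the induction hypothesis only in an \emph{asleep} witness and, by the preference mechanism, cannot be awakened by any choice of IDs: this is exactly the situation $\langle\ell_i,\ell_j\rangle\in R$, $\langle\ell_j,\ell_i\rangle\notin R$ with $\commCond{i}{j}{\st}$ holding, i.e. the thread was put to sleep precisely because its enabled statements commute with those of a strictly preferred interferer. Here I would reroute the instrumented thread-modular derivation using commutativity, mirroring the argument behind \cref{thm:instr-feas-red}: by a nested induction on the derivation of the asleep witness I locate the interference that set the sleep bit, replay the pending step \emph{before} that interference (where the thread is still awake), and re-apply the interference afterwards. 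Commutativity guarantees that the mover's local update and the final global state are unchanged, and the preference order guarantees that moving the strictly preferred thread first does \emph{not} put the interferer to sleep (since $\prefTest{j}{i}$ fails in that direction), so the rerouted derivation stays in $\mathcal{T}_{\sleepInstr{\P}}$ and reaches the required projected state.

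I expect this rerouting to be the main difficulty: it is a thread-modular analogue of the soundness of the reduction, and the nested induction must correctly account for sleep bits that were already true, for chains of interferences before the bit was first set, and for a consistent choice of thread IDs across the reroute. Once $\mathcal{T}_{\P}\subseteq\pi(\mathcal{T}_{\sleepInstr{\P}})$ is established, safety of $\mathcal{T}_{\sleepInstr{\P}}$ yields safety of $\mathcal{T}_{\P}$ and hence a safe width-1 Ashcroft invariant for $\P$. The argument is special to width~$1$ because the sleep bit of the single quantified thread is set by an \emph{unquantified} interferer and therefore carries no information the invariant can correlate with the thread's own state; the projection $\exists\sleep{}$ consequently loses nothing, whereas at larger width the sleep bits of two quantified threads can be related (as in \cref{fig:inc-dec-sleep-ashcroft}) and projecting them away does lose expressivity.
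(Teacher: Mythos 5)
Your route is genuinely different from the paper's, and as written it has two gaps I do not think close. The first is the rerouting step, which you yourself identify as the crux. Your nested induction finds the last awake ancestor of the asleep witness and proposes to ``replay the pending step before that interference, and re-apply the interference afterwards.'' This closes only when the asleep witness is one interference away from an awake ancestor. In general it is separated from it by a \emph{chain} of Non-Interference steps: only the first one needed the preference test $\prefTest{i}{\star}$; every later one needed only the commutativity test $\commCond{i}{\star}{\st}$, because the thread was already asleep. After you replay the pending statement $\st$ at the awake ancestor (globals $g_0$), re-applying the $k$-th interference ($k\geq 2$) requires the $k$-th mover's tuple to be present in $\mathcal{T}_{\sleepInstr{\P}}$, \emph{awake, at globals shifted by the effect of $\st$}. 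Your argument supplies no derivation of such a tuple: the natural one --- a Non-Interference instance in which the acting thread pushes that mover across $\st$ --- has as premise an awake tuple of the acting thread at the intermediate globals $g_{k-1}$, which is exactly what does not exist (the thread is asleep there; that is why you are rerouting). Your preference-order justification (``moving the strictly preferred thread first does not put the interferer to sleep'') covers only the first mover, for which the preference test was known to hold; for later movers the preference relation to the acting thread is unconstrained. Even for the first mover there is a corner case: at width $1$ the clauses impose no distinctness between $\id{i}$ and $\id{\star}$ (the conjunction $\bigwedge_{i\neq j}\id{i}\neq\id{j}$ is vacuous for $k=1$), and with $\id{i}=\id{\star}$ and both location pairs in $R$ the re-applied step does put the mover to sleep.

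The second gap is the endgame. The proposition asserts the \emph{existence of an Ashcroft invariant}, i.e., a formula; your argument ends with ``$\mathcal{T}_{\P}$ is safe,'' and the bridge back (``a safe width-1 invariant exists iff $\mathcal{T}_Q$ is safe'') is immediate only in the direction you do not need. In the direction you need, one must exhibit a definable inductive safe set: the least fixpoint is in general not first-order definable, and your candidate $\exists\id{},\sleep{}\,.\,\psi$ is \emph{not} inductive for $\P$ --- for precisely the reason you call the genuine obstacle (asleep-only states block Inductivity and Non-Interference). So no invariant is ever produced. The paper's own (informal) argument avoids both gaps by staying at the level of the given formula $\psi$: at width $1$, the interferer's ID in the Non-Interference clause (\cref{eq:symbolic-nonint}) is universally quantified, so for the thread-ID order one can always instantiate $\id{\star}<\id{i}$, making $\prefTest{i}{\star}$ fail and keeping the tracked thread awake; hence the awake restriction $\forall\id{}\,.\,\psi[\sleep{}\mapsto\false]$ is itself initial, inductive, and safe for $\P$. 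This is the formal content of the paper's dichotomy (``all threads asleep is unsound; none asleep means no reduction''), and it exhibits the required invariant directly. I will grant you one point: under a strict preference $\langle\ell_i,\ell_\star\rangle\in R$, $\langle\ell_\star,\ell_i\rangle\notin R$, the preference test is ID-independent, the awake part need not be closed, and some commutativity-based transport in the spirit of your rerouting would genuinely be needed; but the machinery you sketch does not deliver it.
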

Intuitively, the additional expressive power through sleep instrumentation can only be harnessed through \emph{relational} assertions,
i.e., assertions that relate the local variables (including program counter and sleep variables) of different threads.
An Ashcroft invariant of width 1 does not include such relational assertions.
It cannot even distinguish two threads.
Hence, the Ashcroft invariant can either claim that all threads are asleep (which is unsound, as there is always at least one thread awake),
or that none of the threads are asleep (i.e., there is no reduction).

By contrast, we have seen that for Ashcroft invariants of width 2 (and consequently, any higher width),
we gain expressivity through sleep instrumentation.
However, the fact that such invariants can \emph{benefit} from reduction
does not imply that they can \emph{precisely capture} the infinite family of \emph{minimal} reductions $\R(n)_{n\in\N}$
corresponding to a sleep-instrumented program $\sleepInstr{\P}$.
An Ashcroft invariant may simply capture overapproximations of the minimal reductions,
which nevertheless allow for significant (qualitative) proof simplification (as in \cref{ex:inc-dec-red-ashcroft}).
Indeed we observe:

\begin{observation}
  There exist programs for which no Ashcroft invariant of any width precisely captures the reachable configurations of the reduction.
\end{observation}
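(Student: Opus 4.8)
The plan is to exhibit a single witness program---the running example $\P^\pm$ of \cref{fig:inc-dec-template}, together with its thread-order reduction $\sleepInstr{\P^\pm}$ (\cref{fig:inc-dec-template-instr})---and to show that for every width $k$, no width-$k$ Ashcroft invariant can exactly characterise the reachable configurations of $\sleepInstr{\P^\pm}$. The first step is a \emph{projection characterisation}: the truth value of a width-$k$ Ashcroft invariant $\forall i_1,\ldots,i_k.\,(\ldots)\to\varphi$ at a configuration $c$ depends only on the $k$-thread projections of $c$, i.e.\ on the tuples consisting of the global state together with the local state $(\pc{}, \sleep{}, \id{}, x)$ of $k$ distinct threads. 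Consequently, any \emph{sound} invariant (one holding on all reachable configurations) must have its matrix $\varphi$ satisfied by every $k$-projection that occurs \emph{inside} some reachable configuration (of possibly more than $k$ threads); and a \emph{precise} invariant must additionally reject every unreachable configuration through at least one violated projection. The consequence I will exploit is: if some unreachable configuration has the property that all of its $k$-projections also occur inside reachable configurations, then no sound width-$k$ invariant can be precise.

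The second step constructs, for each fixed $k$, such an unreachable configuration. I will use the invariant (valid for both $\P^\pm$ and its reduction) that in every reachable configuration the global value of $\texttt{x}$ equals the number of threads in location $\ell_1$. Consider the instance with $n=k+1$ threads and the configuration $c_2$ in which all $k+1$ threads are in $\ell_1$, the $k$ threads of smallest ID are asleep and the thread of largest ID is awake, but $\texttt{x}=k$. Since $c_2$ has $k+1$ threads in $\ell_1$ while $\texttt{x}=k$, it violates this counting invariant and is therefore unreachable.

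The third step verifies that every $k$-projection of $c_2$ occurs inside a reachable configuration, which is where the sleep bookkeeping must be handled with care. A projection that retains the awake (largest-ID) thread shows $k$ threads in $\ell_1$ with $\texttt{x}=k$ and a downward-closed sleep pattern; this is in fact already a standalone reachable configuration, obtained by executing the increments of the $k$ threads one after another in ID order. The only delicate case is the projection that deletes the awake thread, leaving $k$ \emph{asleep} threads in $\ell_1$ with $\texttt{x}=k$. This projection is not itself reachable as a standalone configuration (it has no awake thread), but it does occur inside the reachable configuration obtained by letting one further, higher-ID thread execute an increment followed by a matching decrement: that extra thread ends awake in $\ell_0$, it has put all $k$ lower threads to sleep in $\ell_1$, and it restores $\texttt{x}$ to $k$, so the projection onto the $k$ sleeping threads matches exactly. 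Since the reduction dynamics depend on thread identifiers only through their relative order, the $\id{}$ values of the matching reachable configuration can be chosen to coincide with those of the projection, and the $\sleep{}$ and $\pc{}$ components agree by construction.

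Combining the three steps, for every $k$ the configuration $c_2$ is unreachable, yet all of its $k$-projections occur inside reachable configurations; hence no width-$k$ Ashcroft invariant precisely captures the reachable configurations of $\sleepInstr{\P^\pm}$. As $k$ is arbitrary, this establishes the observation. I expect the main obstacle to be the combination of the projection-characterisation lemma with the deletion case of the sleep argument: one must argue precisely that a $k$-projection need only appear \emph{inside} some reachable configuration (rather than be reachable on its own), and then construct the genuinely reachable ``witness'' configuration whose extra thread returns to $\ell_0$ while keeping the $\sleep{}$ and $\id{}$ components of the retained threads consistent.
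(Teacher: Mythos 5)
Your proof is technically valid for the statement as literally phrased, and its core mechanism is the same one that drives the paper's proof: the truth value of a width-$k$ Ashcroft invariant at a configuration depends only on its $k$-thread projections, so an unreachable configuration all of whose projections also occur inside reachable configurations (of possibly larger instances) cannot be excluded by any sound invariant. Your verification of the sleep bookkeeping in both projection cases (retaining vs.\ deleting the awake thread) is also correct. Where you genuinely diverge from the paper is in \emph{what makes the witness configuration unreachable}. The paper builds a three-location program whose only non-commuting pair is $\inst{\st_2}{i} \notcomm \inst{\st_2}{j}$; its unreachable configuration (all $k$ threads at $\ell_1$ with an \emph{empty} sleep set) is unreachable purely because of the sleep-set mechanism, and it is covered by a reachable configuration of the $(k+1)$-thread instance in which the non-commuting statement $\inst{\st_2}{k+1}$ empties the sleep set. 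Your witness $\sleepInstr{\P^\pm}$ is fully commutative, and your configuration $c_2$ is unreachable purely because of \emph{data}: it violates the counting invariant $\texttt{x} = {}$ number of threads at $\ell_1$.

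This difference is not cosmetic. Your argument never actually uses the reduction: the identical projection argument, with the sleep components deleted, shows that no Ashcroft invariant of any width precisely captures the reachable configurations of the \emph{uninstrumented} program $\P^\pm$ --- which is just the known inability of Ashcroft invariants to express counting, cited by the paper from \citet{popl17:thread-modular}. The paper's construction is engineered precisely so that the impossibility is attributable to the reduction itself: immediately after the proof, the paper remarks that ``the key obstacle to precisely capturing the reduction in the above proof was the non-commutativity of statements,'' and the very next observation shows that when all statements of different threads commute (as in your witness), a width-2 Ashcroft invariant \emph{does} capture a tight overapproximation of the reduction, where imprecision ``due to data constraints (not due to the reduction)'' is explicitly set aside. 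Under that refined reading of ``precisely captures'' --- imprecision must stem from the sleep structure, not from data --- your proof no longer establishes the claim, whereas the paper's does. So: correct for the literal statement, but your choice of witness proves the impossibility for a reason the paper treats as orthogonal to reductions, and it could not support the paper's subsequent analysis contrasting commutative and non-commutative programs.
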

\begin{inlineproof}[Explanation]
  Consider a program $\P$ with the template

  \begin{center}
  \begin{tikzpicture}[font=\footnotesize,thick]
    \node[draw,circle,label={above:$\ell_0$}] (0) {};
    \node[draw,circle,label={above:$\ell_1$},right=1cm of 0] (1) {};
    \node[draw,circle,label={above:$\ell_2$},right=1cm of 1] (2) {};
    \draw[<-] (0) -- ++(left:5mm);
    \draw[->] (0) -- node[auto] {$\st_1$} (1);
    \draw[->] (1) -- node[auto] {$\st_2$} (2);
  \end{tikzpicture}
  \end{center}

  such that all statements (of different threads) commute, except for the fact that $\inst{\st_2}{i} \notcomm \inst{\st_2}{j}$.
  An Ashcroft invariant of width $k$ would have to capture
  that,
  when the program $\P(k)$ reaches the control locations $\langle \ell_1,\ldots,\ell_1 \rangle$,
  all threads except one are in the sleep set.
  The last thread to take a step must have had the maximal thread ID, otherwise it would have been added to the sleep set earlier.
  But then, in the last step, the next enabled statement of every other thread ($\st_2$) commutes with the executed statement $\st_1$,
  thus all threads with a lower thread ID are added to the sleep set.

  Thus, if an Ashcroft invariant $\forall i_1,\ldots,i_k\,.\,\varphi$ precisely captures the reduction,
  we must have
  \begin{equation}
  \varphi \land (id_{i_1} < \ldots < \id{i_k}) \land \pc{i_1} = \ell_1 \land \ldots \land \pc{i_k} = \ell_1 \models \sleep{i_1} \land \ldots \land \sleep{i_{k-1}}
  \label{eq:capture-red-k}
  \end{equation}

  However, in the program $\P(k+1)$, we can reach a configuration with the control locations $\langle \ell_1,\ldots,\ell_1,\ell_2\rangle$,
  such that the sleep set is empty.
  In particular, if the last step is the execution of $\inst{\st_2}{k+1}$,
  this statement does not commute with the enabled statements of all other threads, thus the sleep set is emptied.
  We instantiate the Ashcroft invariant $\forall i_1,\ldots,i_k\,.\,\varphi$
  such that it considers the first $k$ threads ($\{i_1,\ldots,i_k\} = \{1,\ldots,k\}$) in increasing order of IDs ($\id{i_1} < \ldots < \id{i_k}$).
  \Cref{eq:capture-red-k} prescribes that the sleep variables $\sleep{i_1}, \ldots, \sleep{i_{k-1}}$ are true,
  when indeed  for this configuration, they are all false.
  Thus the configuration, while reachable, does not satisfy the Ashcroft invariant.
\end{inlineproof}

Note that the key obstacle to precisely capturing the reduction in the above proof was the non-commutativity of statements $\inst{\st_2}{i}$ and $\inst{\st_2}{j}$.
We observe:

\begin{observation}
  If all statements of different threads commute,
  an Ashcroft invariant of width 2 can capture a tight overapproximation of the reduction inherent in $\sleepInstr{\P}$.
\end{observation}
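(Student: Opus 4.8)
The plan is to exploit the all-commute hypothesis to collapse the instrumentation, and then to exhibit the \emph{pairwise projection} of the reachable configurations as the desired width-2 invariant, arguing that it is both inductive and tightest. First I would observe that if all statements of different threads commute, then $\commCond{j}{i}{\st}$ is identically $\true$ for all $j$, $i$, $\st$. Consequently the sleep update inside each instrumented statement collapses to $\sleep{j}' \leftrightarrow \sleep{j} \lor \prefTest{j}{i}$. Two consequences are central: the sleep bits are \emph{monotone} (once a thread is asleep it never wakes), and whether a move of thread $i$ puts thread $j$ to sleep depends only on the preference test $\prefTest{j}{i}$, i.e.\ on the current locations and IDs of the two threads $i$ and $j$ alone.

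Next I would take as candidate invariant the body $\varphi(i,j)$ that describes exactly the set of two-thread projections realized by reachable configurations of $\sleepInstr{\P}(n)$ (ranging over all $n$), that is, the pairwise (Cartesian) abstraction of the reachable set. By construction, $\forall i,j\,.\,i\neq j \to \varphi(i,j)$ is sound (every reachable configuration satisfies it), and among all width-2 universal formulae it is the strongest overapproximation of the reachable set. Hence \emph{tightness is automatic}, and the entire mathematical content of the observation reduces to a single claim: that the concretization of this pairwise projection is \emph{inductive}, i.e.\ closed under the transitions of $\sleepInstr{\P}(n)$. (This closure is not automatic for an arbitrary pairwise abstraction; it is precisely what the all-commute assumption buys us.)

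The substance of the proof is thus the inductivity argument, checked against the clauses of the encoding. The Initial and Inductivity cases are routine given monotonicity of the sleep bits and the fact that a mover's own bit is unchanged. The crux, and the step I expect to be the main obstacle, is Non-interference: when a third thread $\star$ executes, it may alter $\sleep{i}$ and/or $\sleep{j}$, and I must show the resulting pair-state is again a reachable projection. The simplified update makes the new bits of $i$ and $j$ depend only on $\prefTest{i}{\star}$ and $\prefTest{j}{\star}$; transitivity together with totality of $R$ are exactly what let me amalgamate the pairwise hypotheses for the pairs $(\star,i)$, $(\star,j)$ and $(i,j)$ into a single reachable witness of the updated pair. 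A subtlety that the case analysis must track is that $\prefTest{}{}$ is evaluated at the mover's \emph{pre-move} location, so the preference relation before and after $\star$'s step may differ; transitivity of $R$ is what keeps the two consistent and prevents the amalgamation from failing.

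Finally, I would record why this tight overapproximation is in general not \emph{exact}. A universal width-2 invariant can only force sleep bits to be \emph{true}, via implications of the form $\prefTest{}{} \to \sleep{}$; it cannot express the existential constraint that \emph{at least one} thread is awake, nor recover a thread's sleep status once that thread has looped back to an earlier location. This is the same expressiveness gap responsible for the collapse at width 1 and for the preceding observation, and it is precisely the reason the construction yields a tight overapproximation rather than an exact characterization of the reduction.
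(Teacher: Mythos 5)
Your proposal takes a genuinely different route from the paper, but it has a gap at its load-bearing step. Everything rests on the claim that the pairwise projection of the reachable configurations of $\sleepInstr{\P}$ (the strongest width-2 overapproximation) is \emph{inductive}, and you dispose of that claim in one sentence (``transitivity together with totality of $R$ are exactly what let me amalgamate the pairwise hypotheses\dots''). This is precisely the hard part, and it does not follow routinely from totality and transitivity of $R$. In the non-interference case you are given three pair-states --- for $(i,j)$, $(i,\star)$ and $(j,\star)$ --- each realizable by a \emph{possibly different} reachable configuration, possibly in different instances $\sleepInstr{\P}(n)$, and you must produce a single reachable configuration that realizes the \emph{new} pair $(i,j)$: it must contain a thread matching $\star$'s location and locals (so that executing $\st$ produces exactly $g'$ from $g$), that thread must be awake in that configuration, and its ID must sit in the correct position relative to both $i$'s and $j$'s counterparts so that the sleep updates agree with $\prefTest{i}{\star}$ and $\prefTest{j}{\star}$. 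Inferring such a three-thread witness from three independent two-thread witnesses is exactly the amalgamation that fails for arbitrary pairwise abstractions (it is why width-2 thread-modular reasoning is incomplete), so the all-commute hypothesis must enter the argument in an essential, spelled-out way --- and your sketch never says how.

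There is a second, foundational mismatch: an Ashcroft invariant is by definition a \emph{formula}, whereas the pairwise projection of the reachable set (which in your construction includes all data constraints) is a semantically defined set that need not be first-order definable for an infinite-state program; so even granting inductivity, your candidate need not \emph{be} an Ashcroft invariant. The paper sidesteps both issues: it simply exhibits the concrete width-2 invariant $\forall i,j\,.\,\id{i}<\id{j}\land\pc{j}\neq\ell_\init\to\sleep{i}$ (all statements commute, so every move by a thread puts all smaller-ID threads to sleep forever), and its notion of ``tight'' explicitly tolerates configurations that are unreachable only due to data constraints or only because some $\sleep{}$ bits are under-set --- a deliberately weaker notion than your ``strongest width-2 overapproximation of the reachable set.'' In short, the paper's proof is a two-line explicit construction about control and sleep structure only; your route replaces it with a substantial unproven amalgamation lemma plus a definability assumption that the statement does not grant, and your final paragraph on non-exactness, while consonant with the paper's discussion, does not repair either omission.
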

\begin{inlineproof}[Explanation]
  The following Ashcroft invariant precisely captures the control flow:
  \[
    \forall i,j\,.\, \id{i} < \id{j} \land \pc{j} \neq \ell_\init \to \sleep{i}
  \]
  In other words, as soon as a thread $j$ takes a step,
  all threads with smaller IDs are put to sleep and never awakened again.
  This is satisfied by all reachable configurations of $\sleepInstr{\P}$,
  and the Ashcroft invariant is \emph{precise}:
  While it may include unreachable configurations $\langle \vec\ell, s\rangle$ of $\sleepInstr{\P}$,
  such configurations are either \emph{(i)} unreachable due to data constraints (not due to the reduction),
  or \emph{(ii)}~there is a reachable configuration $\langle \vec\ell, s'\rangle$ with the same control locations and variable values,
  except that $s$ may assign more $\sleep{}$ variables to $\false$.
  The latter case is not problematic however, because $\langle \vec\ell, s\rangle$ and $\langle \vec\ell, s'\rangle$
  satisfy the same invariants over program variables,
  and all executions possible from $\langle \vec\ell, s\rangle$ are also possible from $\langle \vec\ell, s'\rangle$.
\end{inlineproof}

\section{Broader Notions of Sound Commutativity}
\label{sec:other-comm}

We have so far focused on one particular notion of commutativity (see \cref{sec:param-red}):
Executing commuting statements in either order must yield the same semantics.
The framework of commutativity theory however admits more general notions of commutativity,
from which verification can benefit.
Specifically, we extend our approach along two lines:

\paragraph{Contextual Commutativity}
The position of statements inside a program, and in an execution, provides a rich context which can benefit commutativity.
Consider for instance the statements \stsmcol{\texttt{queue[current]:=data}}{th1} (line~12) and \stsmcol{\texttt{msg:=queue[idx]}}{th2} (line~23) from the program $\P^\mathrm{notify}$
shown in \cref{fig:notify}.
These statements do not, in general, commute;
in the case that $\texttt{idx} = \texttt{current}$, executing the statements in different orders yields different semantics.

However, it is clear from the code of $\P^\mathrm{notify}$ that, in every state where these statements are enabled, it holds that $\texttt{idx} < \texttt{current}$.
In such contexts, the order of execution does indeed not matter.
Hence, we say that the statements commute in the context $\texttt{idx} < \texttt{current}$ (or, more broadly, in the context $\texttt{idx} \neq \texttt{current}$).

\paragraph{Semi-Commutativity}
Commutativity as in \cref{sec:param-red} defines a symmetric relation:
If $\inst{\st_1}{i}$ commutes with $\inst{\st_2}{j}$, then $\inst{\st_2}{j}$ commutes with $\inst{\st_1}{i}$.
The semantics of both execution orders are equal,
and consequently we can swap the statements in either direction to get an equivalent trace.
Let us entertain a non-symmetric variation.
Consider for instance the statements \stsmcol{\texttt{current:=current+1}}{th1} (line~13)
and \stsmcol{\assume{idx\,<\,current}}{th2} (line~22) from the program $\P^\mathrm{notify}$
shown in \cref{fig:notify}.
These statements do not commute.
Specifically, the execution of $\stsmcol{\inst{\assume{idx\,<\,current}\,}{$i$}}{th2} \ \ \stsmcol{\inst{\texttt{current:=current+1}\,}{$j$}}{th1}$ blocks if we have $\texttt{idx} = \texttt{current}$,
the execution of $\stsmcol{\inst{\texttt{current:=current+1}\,}{$j$}}{th1} \ \ \stsmcol{\inst{\assume{idx\,<\,current}\,}{$i$}}{th2}$ does not.
Generally, executing the increment of \texttt{current} first allows a strict superset of executions.
Thus, it is sound to eliminate a trace in which the increment happens after the \texttt{assume} statement in favor of a trace with the opposite order,
but the reverse is not true.
\medskip

Without these generalized notions of commutativity, the program $\P^\mathrm{notify}$ would not admit a simple proof.
We thus extend our approach.
\begin{definition}[Contextual Semi-Commutativity]
  Let $\varphi$ be a formula over global program variables and local variables indexed by $i$ or $j$.
  Statements $\inst{\st_1}{i}$ and $\inst{\st_2}{j}$ \emph{semi-commute in the context $\varphi$},
  denoted $\inst{\st_1}{i} \semicomm[\varphi] \inst{\st_2}{j}$,
  if for all states $s,s'$ such that $s$ satisfies $\varphi$,
  the following implication holds:
  \[
    (s,s')\in\sem{\st_1\st_2} \Rightarrow (s,s')\in\sem{\st_2\st_1}
  \]
\end{definition}

The general framework of commutativity theory is adapted accordingly.
In place of an equivalence relation, we now consider a preorder over traces (i.e., we lose symmetry).
Specifically, we say that a trace $\tau_1$ is \emph{covered by} a trace $\tau_2$
if $\tau_2$ can be derived from $\tau_1$ by a sequence of swaps of adjacent statements,
where for every swap from a trace $\tau'\ (\inst{\st_1}{i})\ (\inst{\st_2}{j})\ \tau''$ to a trace $\tau'\ (\inst{\st_2}{j})\ (\inst{\st_1}{i})\ \tau''$,
we must have $\inst{\st_1}{i} \semicomm[\varphi] \inst{\st_2}{j}$ for some $\varphi$ that always holds after the execution of the prefix $\tau'$.
A reduction language of traces $L$ is then a subset $L'$ where for every trace $\tau \in L$,
there exists some trace $\tau'\in L'$ such that $\tau$ is covered by $\tau'$.

We modify the sleep instrumentation to account for contextual semi-commutativity by redefining the commutativity test.
To this end, we assume the existence of mapping from indexed statements $\inst{\st_1}{i},\inst{\st_2}{j}$
to \emph{commutativity conditions} $\varphi_\mathrm{comm}(\inst{\st_1}{i},\inst{\st_2}{j})$,
i.e., formulae over global variables as well as local variables of threads $i$ and $j$,
such that $\inst{\st_1}{i} \semicomm[\varphi_\mathrm{comm}(\inst{\st_1\;}{\;i},\inst{\st_2\;}{\;j})] \inst{\st_2}{j}$.

\begin{definition}[Contextual Semi-Commutativity Test]
  \label{def:contextual-comm-cond}
  The \emph{contextual semi-commutativity test} $\commCond{j}{i}{\st}$ is defined as the formula
  \[
    \commCond{j}{i}{\st} :\equiv \bigvee_{\ell\in\Loc} \left( \pc{j}=\ell \land \bigwedge_{\st'\in\enabled{\ell}} \varphi_\mathrm{comm}(\inst{\st'}{j},\inst{\st}{i}) \right).
  \]
\end{definition}

At this point it is crucial that in the instrumentation $\instrument{\st}$ of a statement $\st$,
the update of the $\sleep{}$ variables, including the evaluation of the contextual semi-commutativity test,
is performed \emph{before} the original statement $\st$.
Otherwise the instrumentation would not faithfully reflect contextual semi-commutativity and might become unsound.

In the implementation of our approach (see \cref{sec:eval}),
we generate commutativity conditions $\varphi_\mathrm{comm}(\inst{\st_1}{i},\inst{\st_2}{j})$ by encoding semi-commutativity as a first-order logic formula
and applying an abduction algorithm to find sufficient conditions to guarantee it.

The modified sleep set instrumentation with contextual semi-commutativity tests still represents a reduction (\cref{thm:instr-feas-red}) and satisfies soundness (\cref{thm:instr-sound-red}) as well as conservative extension (\cref{thm:conservative-instr}).
Furthermore, the CHC encodings introduced in \cref{sec:red-proofs} can be used with the contextual semi-commutativity test in place of the commutativity test,
and remain sound.
However, the represented lexicographical reductions are not necessarily minimal~\cite{popl20:red-safety}, i.e., \cref{prop:minimal} does not hold.
This is because the covering relation is not symmetric.
There may exist traces $\tau_1,\tau_2$
such $\tau_1$ is not covered by any lexicographically smaller trace,
$\tau_2$ covers $\tau_1$,
and $\tau_2$ is only covered by itself.
Then both $\tau_1$ and $\tau_2$ appear in the lexicographical reduction, yet including $\tau_2$ would suffice.

\section{Evaluation}
\label{sec:eval}

As a proof of concept,
we have developed a tool that integrates reduction in parameterized verification.
In particular, we implemented the different CHC encodings for the existence of an Ashcroft invariant for the sleep-instrumented program $\sleepInstr{\P}$,
as discussed in \cref{sec:red-proofs}.
Our tool reads Boogie~\cite{leino:boogie} programs, generates the CHC clauses, and executes different CHC solvers to check if the CHC system is satisfiable.
In particular, we used the state-of-the-art CHC solvers \textsc{Eldarica} (\hsurl{github.com/uuverifiers/eldarica}),
\textsc{Golem} (\hsurl{verify.inf.usi.ch/Golem}) and \textsc{Z3/Spacer} (\hsurl{github.com/Z3Prover/z3}).
We evaluated the tool on a number of parameterized programs from the literature as well as custom benchmarks.
The purpose of this evaluation is to answer the following questions:
\begin{description}
\item[Q1:] Can the modular approach of \emph{(1)}~encoding reductions through sleep instrumentation
  and \emph{(2)}~subsequently verifying the resulting parameterized program
  work in practice?
\item[Q2:] Can we observe a practical benefit of the symmetry-aware \emph{explicit-sleep} CHC encoding in comparison to the default \emph{symbolic-sleep} encoding?
\end{description}
We executed the benchmarks
on a Debian 10.10 machine with a AMD Ryzen Threadripper 3970X 32-Core Processor
using the BenchExec benchmarking tool~\cite{beyer:benchexec}.
Each verification run was given a timeout of 30\,min and a memory limit of 15\,GB.
\newcommand\sat{{\color{darkgreen}sat}}%
\newcommand\unsat{{\color{red}unsat}}%
\newcommand{\repline}{\tikz[overlay] \draw (0.5em,0.8em)--(0.5em,0.4em) -- (3.5,0.4em);}
\begin{table}%
 \caption{Benchmark results. \sat\ indicates that an Ashcroft invariant (of width $k$) was found, \unsat\ indicates that a CHC solver proved that no such Ashcroft invariant exists, ``TO'' indicates a timeout ($>$ 30\,min).}%
 \label{tbl:benchmark-results}%
 \footnotesize%
 \begin{tabular}{lr|lr|lr|lr}%
    & & \multicolumn{2}{c|}{no reduction} & \multicolumn{2}{c|}{symbolic-sleep} & \multicolumn{2}{c}{explicit-sleep}\\
    Program & $k$ & status & CPU time (s) & status & CPU time (s) & status & CPU time (s) \\
    \hline
    \texttt{add-sub-nondet}            & 2 & \unsat &   20.5 & \sat   &    416.0 & \sat   &  74.5 \\
    \texttt{add-sub-positive-nondet}   & 2 & \unsat &   51.1 & \sat   & 1\,590.0 & \sat   & 144.0 \\
    \texttt{bluetooth}                 & 2 & \unsat &    6.5 & TO     &       -- & \sat   & 532.5 \\
    \texttt{equalsum-ghost}            & 2 & TO     &     -- & TO     &       -- & TO     &    -- \\
    \texttt{inc-bdec}                  & 2 & \unsat &    5.8 & \sat   &     76.3 & \sat   &  51.3 \\
    \texttt{inc-dec-eq0-locked-assert} & 2 & \sat   &   59.6 & TO     &       -- & \sat   & 726.0 \\
    \texttt{inc-dec-eq0-locked}        & 2 & \unsat &  110.0 & TO     &       -- & TO     &    -- \\
    \texttt{inc-dec-eq0}               & 2 & \unsat &    8.9 & \sat   &    112.0 & \sat   &  24.7 \\
    \texttt{inc-dec-geq0}              & 2 & \unsat &    4.3 & \sat   &      5.8 & \sat   &   5.7 \\
    \texttt{line-queue}                & 2 & TO     &     -- & TO     &       -- & TO     &    -- \\
    \texttt{lock}                      & 1 & \sat   &    4.0 & \sat   &      4.6 & \sat   &   4.7 \\
    \texttt{mutex-3}                   & 2 & \unsat &    5.2 & \sat   &      5.3 & \sat   &   4.5 \\
    \repline                           & 4 & \sat   &    8.7 & \sat   &     95.5 & TO     &    -- \\
    \texttt{mutex-4}                   & 2 & \unsat &    3.5 & \sat   &      5.6 & \sat   &   4.3 \\
    \repline                           & 5 & \sat   &   57.4 & \sat   &    723.0 & TO     &    -- \\
    \texttt{mutex-5}                   & 2 & \unsat &    4.5 & \sat   &      5.3 & \sat   &   4.0 \\
    \repline                           & 6 & \sat   &  354.0 & TO     &       -- & TO     &    -- \\
    \texttt{mutex-unbounded}           & 2 & \unsat &    4.5 & \sat   &      6.6 & \sat   &   4.2 \\
    \texttt{notify-listeners}          & 1 & TO     &     -- & TO     &       -- & \sat   & 379.0 \\
    \texttt{numbered-array}            & 2 & \sat   &    4.0 & \sat   &      5.8 & \sat   &   5.4 \\
    \texttt{thread-pooling}            & 2 & TO     &     -- & TO     &       -- & TO     &    -- \\
    \texttt{ticket}                    & 2 & \sat   &  332.0 & TO     &       -- & TO     &    --
 \end{tabular}%
\end{table}%
Our suite of 19 benchmarks is comprised of a number of variations (\texttt{inc-b?dec-*}) of the program $\P^\pm$ (see \cref{fig:inc-dec-template}),
where a variable is incremented and decremented by each thread and compared with 0;
we also included variants where a nondeterministic value is added to and subtracted from the variable (\texttt{add-sub-*}).
Several examples (namely \texttt{lock}, \texttt{ticket}, and \texttt{mutex-*}) are taken from~\cite{popl17:thread-modular};
the \texttt{mutex-*} examples correspond to the program $\P^{K\pm}$ (see \cref{fig:hierarchy-collapse}).
As more complex programs, we included the \texttt{bluetooth} example in the form presented in~\cite{popl14:proofs-count},
the example presented in \cref{sec:motivating-example} (\texttt{notify-listeners}),
the \texttt{thread-pooling} example from~\cite{popl15:proof-spaces},
a program in which each thread computes the same sum of array elements (\texttt{equalsum-ghost}),
and a custom example involving communication via queues (\texttt{line-queue}).

\Cref{tbl:benchmark-results} shows the benchmark results.
The reported CPU time encompasses both the time required to generate the CHC clauses (typically quite small)
and the time required by the fastest successful CHC solver, if any solver is successful.

Regarding \textbf{Q1},
we observe that the approach (in the explicit-sleep configuration) is able to verify 14 out of 19 benchmarks.
In particular, we successfully show correctness of non-trivial benchmarks such as \texttt{bluetooth} and \texttt{notify-listeners}.
Without reductions, these programs do not have a safe Ashcroft invariant;
a proof would require complex ghost state and/or quantified invariants.

At the same time, even for the most successful configuration (explicit-sleep),
three state-of-the-art CHC solvers are unable to solve 5 of our benchmarks.
Beyond the possibility of general improvements in CHC solving,
a possible way to improve the situation may be to guide the solvers to specifically take advantage of the reduction.
This could be beneficial in two scenarios:
First, for programs which do not have an Ashcroft invariant without reduction,
one could prevent the solver from considering solutions that ignore the instrumentation.
Second,
one could try to prevent the solver from considering solutions that use the $\sleep{}$ variables in ``exotic'' ways unsuitable to express reduction.
The second case could also reduce the overhead from the instrumentation for programs where an Ashcroft invariant exists without reduction.
As an example, consider the program \texttt{inc-dec-eq0-locked-assert}, which has an Ashcroft invariant even without reduction.
Here, the CHC solvers spend significantly longer to find a solution when the instrumentation is present.
For the program \texttt{ticket}, the solvers even time out, even though the program can be proven without reduction.

The evaluation data clearly shows the performance advantage of the explicit-sleep encoding.
With this encoding, our tool is able to verify 13 programs, compared to only 11 programs with the symbolic-sleep encoding.
Notice in particular that the complex program \texttt{notify-listeners} is only proved correct by the explicit-sleep encoding.
Furthermore, for programs solved by both the symbolic-sleep and explicit-sleep encoding,
the explicit-sleep encoding can lead to significant speedup, up to a factor of 10x in the most extreme case (\texttt{add-sub-positive-nondet}).
Despite the increased number of clauses, we do not observe any overhead for the explicit-sleep encoding.

\section{Related Work}
There is a huge body of work on verification of parameterized programs.
It is noteworthy that this paper does not put forward a new (algorithmic) framework for verifying parameterized programs,
but rather suggests a generic way of incorporating commutativity into any existing framework.
As such, we will only very briefly survey a few techniques only to justify why we chose a particular one as the framework to use for our proof of concept application.

\subsection{Parameterized Program Verification}
In {\em invisible invariants} \cite{PnueliRZ01,AronsPRXZ01},
 a candidate for an Ashcroft invariant  is constructed by first computing
the set of reachable states of the instance of the program with $k$ threads,
and then generalizing the concrete thread identifiers in the reachable states.
The candidate (a universally quantified formula with $k$ variables over thread identifiers)  is then verified using a syntactic {\em cutoff theorem}.
This approach, as well as other heuristic searchers \cite{EmmiMM10} for Ashcroft invariants,
do not have a guarantee of completeness.
Therefore they suffer from the problem that, if they fail, one does not know whether there is no proof with $k$ quantifiers
or whether the heuristic did not find it.
This is why we opted to build our reduction framework on top of thread-modular proofs~\cite{popl17:thread-modular},
which come with the guarantee of finding Ashcroft invariants when one exists (modulo incompleteness of the Horn clause solver)
or proving that no Ashcroft invariant exists.
This allows for a more principled comparison of the power of the framework in proving the original program or a lexicographical reduction of it.

In \cite{popl14:proofs-count,DBLP:conf/concur/0001KW14}, {\em counting proofs} are constructed automatically. This can be viewed as a partial solution to the problem of discovering the required ghost state automatically; partial, in the sense that only ghost counters can be discovered. Such techniques are complementary to the proposal in this paper; the simpler the proof, the more likely that a combination of this technique can succeed in discovering it automatically.

\citet{GrebenshchikovLPR12},
\citet{HojjatRSY14},
\citet{GurfinkelShoham},
and
\citet{DBLP:conf/sas/MonniauxG16}
study Horn constraints for $k$-thread-modular proofs, closely related to the framework we chose to demonstrate our approach~\cite{popl17:thread-modular}.

\subsection{Commutativity for Proof Simplification}
There has been extensive work in incorporating commutativity into verification of concurrent programs. One big cluster of such work appears under the title of partial order reduction (POR) \cite{godefroid:book,abdulla:optimal-dpor,flanagan:dpor,kahlon:monotonic-por}, and much of this work is concerned with finite-state systems or executions of bounded length.

In the context of proofs of infinite-state programs, the focus of commutativity reasoning in algorithmic verification so far has been on programs with a bounded number of threads \cite{kroening:impact,popl20:red-safety,pldi22:sound-seq,cav19:hypersafety,chu:synergize}.

\citet{DBLP:conf/fmcad/PopeeaRW14} integrate the theory of Lipton's {\em movers}~\cite{lipton:movers}
with compositional proofs in the style of Owicki and Gries,
to verify programs with a bounded number of threads.
The approach is described as a complex Horn clause system that combines compositional reasoning,
the determination of \emph{mover annotations} (i.e., commutativity checks)
and the search for reducible blocks.

In interactive proofs \cite{elmas:calculus-atomic,kragl:layered}, commutativity reasoning based on the principle of Lipton's {\em movers} has been incorporated in a way applicable to programs with a bounded number of threads as well as programs with an unbounded number of threads,
despite not explicitly using the modeling formalism of parameterized programs.
Essentially, the input program is alternatingly reduced and further abstracted.
Each abstraction step may allow more statements to commute, which enables further reduction. Since being a {\em mover} can be viewed as a local property of an atomic problem step, the size of the environment (finite vs unbounded number of threads) makes no difference in how larger atomic blocks are formed out of smaller ones by reasoning about movers,
and thus a successfully verified program is correct for any number of threads.

\citet{oopsla20:anchor} also apply mover reasoning to simplify verification of programs with an unbounded number of threads.
Data structures are annotated with \emph{synchronization specifications} that indicate mover types (i.e., semi-commutativity) of read and write accesses to the data structure.
Users specify a reduction of a concurrent program by manually instrumenting the program with \emph{yield points} indicating where interleaving with other threads may occur in the reduction.
The verifier then checks if this instrumentation is indeed sound, i.e., encodes a reduction of the program.

As discussed in \cite{popl20:red-safety}, however, the kinds of program reductions that result from Lipton's movers are not comparable with those that are produced as lexicographical reductions of (binary) commutativity relations. Besides, the locality advantages of movers disappear in the context where the goal is anything but large block reasoning: for example, a {\em lockstep} reduction.
Such reductions are by definition not local to a single thread/process.

In {\em inductive sequentialization} \cite{pldi20:ind-seq}, a vaguely similar philosophy about proof simplification is used:
Rather than reason about arbitrarily complicated executions of distributed protocol, one can reason about their equivalence to simpler ones and as such only give a proof of correctness for the simpler ones. It is important to note that the notion of equivalence employed is not the simple syntactic one (based on commutativity) used in this paper. As such, even the reasoning about such equivalences may involve the use of invariants, and other proof-type constructs. The final product is a proof of refinement between the complex and the simple protocols, and the ingredients of the proof are provided by a user.

\section{Conclusion and Future Work}

This paper proposes a methodology for incorporating commutativity reasoning into algorithmic verification of parameterized programs. We put forward the thesis that this is a worthwhile cause, because commutativity-based reductions can simplify the proofs of these programs in a precise sense: a possible substantial complexity reduction in the nature of the ghost state required for the proof. The solution was devised with an eye on practical concerns, in the sense that rather than devising a whole new algorithmic framework, one should be able to use existing frameworks for parameterized program verification with little effort.

Our investigation of this problem has led us to several new research questions that would be interesting to explore in the future. Our results from Section \ref{sec:red-ashcroft} highlight the fact that Ashcroft invariants, as a standard family of global invariants for parameterized programs, lack the expressive power to encode optimal reductions for the entire family of programs represented by the parameterized program for an arbitrary commutativity relation. It would be interesting to investigate whether this lack of expressivity is shared by other ways of giving a finitely-representable proof to a parameterized program, for instance {\em proof spaces} \cite{popl15:proof-spaces}.

Classical trace theory, which studies commutativity in a principled way, relies on a {\em finite} alphabet of program actions. For parameterized programs, one needs an infinite (indexed) alphabet of actions to model the program behaviour faithfully. Most of the work on program reductions relies on a classic result from trace theory that says ``the set of lexicographical representatives of a regular and (commutativity) closed language is regular''. The notion of regularity for indexed alphabets is less standard, and can be defined based on a number of {\em data} automata like register, nominal, or predicate automata. It will be interesting to investigate if an analogous result for these automata exists and whether it can suggest fundamentally different ways of incorporating commutativity in verification of parameterized programs.

\begin{acks}
We thank Jochen Hoenicke for his useful insights and productive discussions.
Jochen pointed out that the sequential composition of the threads of the parametrized program $\P^\pm$ (from our running example) can be accommodated via the instrumentation of the thread template with $\sleep{}$ variables, that the instrumentation yields again a parametrized program, and that this parametrized program can be handled by the proof method based on \emph{thread modularity at many levels}~\cite{popl17:thread-modular}.
\end{acks}

\bibliographystyle{ACM-Reference-Format}
\bibliography{references}

\end{document}